\title{\textit{Noise-Robust Self-Testing: Detecting Non-Locality in Noisy Non-Local Inputs}} 
\author{By\\Romi Lifshitz\\\\ Supervisor\\Prof. Thomas Vidick}
\date{April 2025}
\numberwithin {claim}{section}
\newtheorem{definition}{Definition}
\newtheorem{question}{Question}
\newtheorem{theorem}{Theorem}
\newtheorem{lemma}{Lemma}
\newtheorem{observation}{Observation}
\begin{document}
\onehalfspacing

\includepdf[pages=-]{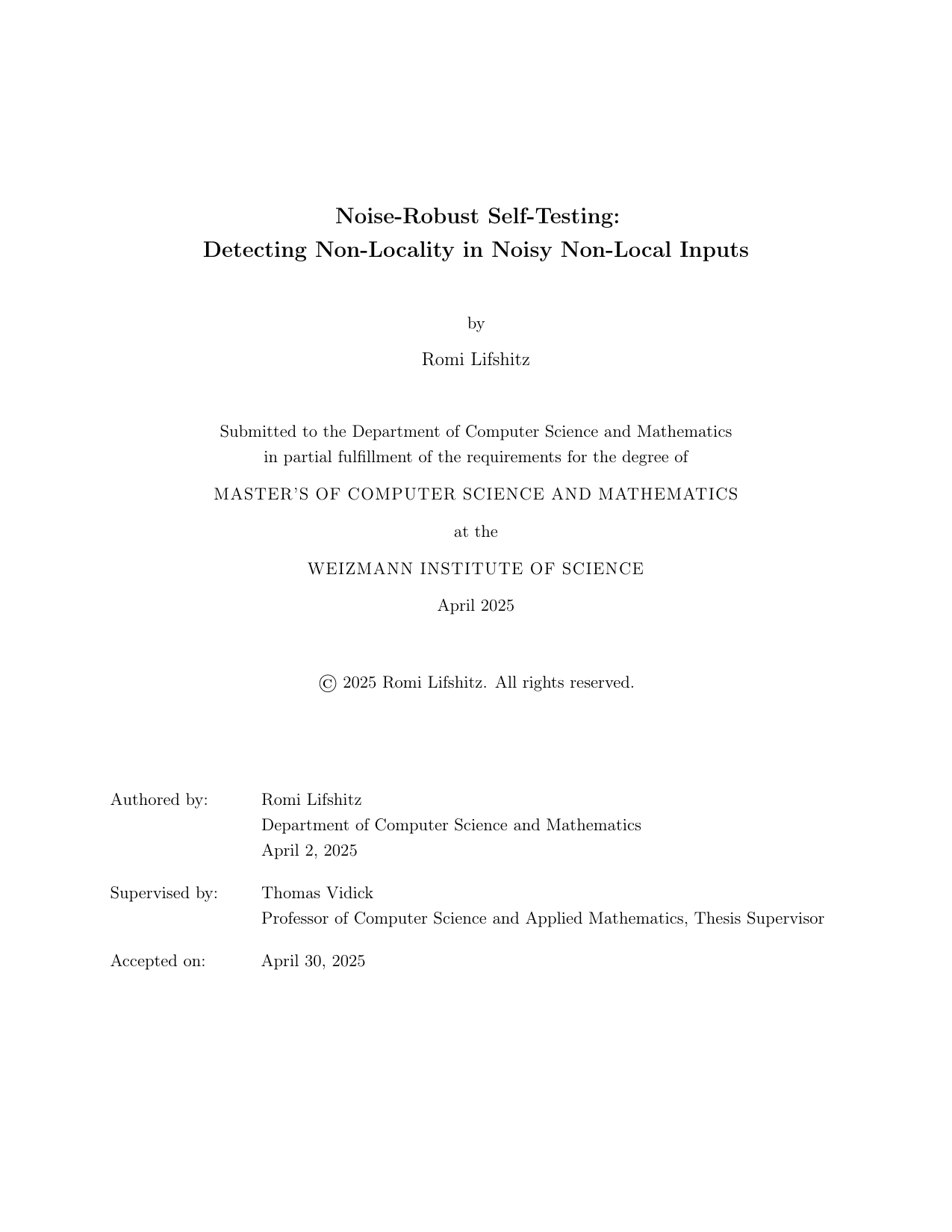}

\newpage
\tableofcontents
\listoffigures
\listoftables

\section*{\Large List of Abbreviations}
\printnomenclature[1cm]

\begin{multicols}{2}
\begin{itemize}
  \item CHSH Game -- Clauser, Horne, Shimony, \& Holt Game
  \item MSG -- Magic Square Game
  \item W -- Werner State
  \item DI -- Device-Independent
  \item GHZ -- Greenberger-Horne-Zeilinger
  \item EPR -- Einstein-Podolsky-Rosen
  \item $\mathfrak{C}_G$ -- Convincingness of Game $G$
  \item $\kappa_G$ -- Gapped Score of Game $G$
  \item $\mathcal{H}$ -- Hilbert Space
  \item $\mathbb{I}$ -- Identity operator
  \item POVM -- Positive Operator-Valued Measure
  \item LHV (or LHVM) -- Local Hidden Variable (Model)
  \item CPTP Map -- Completely Positive \& Trace Preserving Map 
  \item XOR -- Exclusively-OR
  \item 2-CHSH Game -- 2-Clauser, Horne, Shimony, \& Holt Game
  \item Def(s). -- Definition(s)
  \item Fig(s). -- Figure(s)
  \item Sec(s). -- Section(s)
  \item 2-CHSH-OPT -- Optimized 2-CHSH Game
  \item LP -- Linear Program(ming)
  \item Const. -- Constant
  \item Coeff. -- Coefficient
  \item IFF -- if \& only if
\end{itemize}
\end{multicols}

\newpage
\section*{ABSTRACT}
Non-local games test for non-locality and entanglement in quantum systems and are used in self-tests for certifying quantum states in untrusted devices. However, these protocols are tailored to ideal states, so realistic noise prevents maximal violations and leaves many partially non-local states undetected. Selecting self-tests based on their `robustness' to noise can tailor protocols to specific applications, but current literature lacks a standardized measure of noise-robustness. Creating such a measure is challenging as there is no operational measure for comparing tests of different dimensionalities and input-output settings. We propose and study three comparative measures: \textit{noise-tolerance}, \textit{convincingness}, and an analytic approximation of convincingness called the \textit{gapped score}. Our computational experiments and analytic framework demonstrate that convincingness provides the most nuanced measure for noise-robustness. We then show that the CHSH game has the highest noise-robustness compared to more complex games (2-CHSH variants and the Magic Square Game) when given equal resources, while with unequal resources, some 2-CHSH variants can outperform CHSH at a high resource cost. This work provides the first systematic and operational framework for comparing noise-robustness in self-testing protocols,  laying a foundation for theoretical advances in understanding noise-robustness of self-tests and practical improvements in quantum resource utilization.

\section{INTRODUCTION}\label{sec:intro}

\subsection{Motivation}
\noindent \textit{Bell tests}, fundamental to quantum computing and communication, demonstrate \textit{non-locality} (and hence, \textit{entanglement}) in physical systems by showing violations of inequalities that cannot be explained by classical, local hidden-variable theories \cite{bell1964einstein}. These tests have evolved into protocols called \textit{non-local games}, which are crucial for detecting entanglement in quantum resources \cite{hensen2015loophole, shalm2015strong, giustina2015significant}.
\textit{Self-tests}, a specific class of non-local games, enable device-independent (DI) certification of non-locality in quantum states and measurements \cite{vsupic2020self}. In these tests, a referee interacts with spatially isolated players who have prepared some physical
system in a device. Using a self-test, the referee can determine the presence of non-locality in the device based solely on its input-output statistics and the assumption that quantum mechanics is true. Self-tests have proven valuable in DI Quantum Key Distribution \cite{vazirani2019fully, arnon2019simple, yin2017satellite}, randomness certification \cite{pironio2010random}, and experimental demonstrations of entanglement \cite{zhang2018experimental, wu2021robust, hu2023self}. Moreover, they certify select quantum properties  more efficiently than tomography \cite{goh2019experimental}.

However, the DI property of self-tests comes at a cost: self-tests are optimized for specific pure entangled states. For instance, the  Clauser, Horne, Shimony, Holt (CHSH) self-test \cite{clauser1969proposed} achieves maximal quantum violation only with a maximally entangled state, the EPR pair up to local isometries. This presents practical challenges, as physical imperfections and noise in real experiments make achieving maximal violations impossible with current technology. Consequently, mixed states with partial non-locality may go undetected, leading to the waste of potentially useful quantum resources. Moreover, no established methodology exists to rank different self-tests based on their ability to detect non-locality in noisy states. This prevents practitioners from selecting tests with appropriate sensitivity levels for specific applications--whether they need robust detection of noisy entanglement (as in entanglement distillation) or stringent verification of more secure states (as in quantum key distribution).

While previous works have explored methods for detecting non-locality in noisy systems, they either rely on device-dependent approaches—such as weak measurements \cite{singh2018analysing}, local noise-reversal channels \cite{pawela2013enhancing}, or optimized entanglement witnesses \cite{aolita2015open}—or require extensive resources, like the DI parallel-repetition-based approach proposed in \cite{arnon2017noise}. Other studies have characterized lower bounds on the violations achieved for
specific input states  (i.e., GHZ, W, graph states) affected by noisy quantum channels \cite{chaves2013detecting, divianszky2016bounding, wiesniak2021symmetrized, laskowski2014highly, sohbi2015decoherence, gawron2008noise}.
However, these works interchangeably use the terms noise-robustness, -tolerance, or -resistance of a game to describe its \textit{ability to detect non-locality in noisy systems}, with unclear and un-unified definitions. Each approach is uniquely tied to the specific noise models, games, and experimental setups in a work. This fragmentation reveals a critical gap. The field lacks: (1) a unified definition and measure of the \textit{noise-robustness of self-tests}, and (2) a systematic framework for comparing different tests' abilities to detect non-locality in noisy systems.
This thesis addresses these limitations by developing rigorous methods for defining, computing, and comparing the noise-robustness of self-tests, particularly towards noisy states rather than noisy measurements. In doing so, we provide tools for experimentalists and theorists to select a self-test which is more suitable to the level of robustness needed in their application, noise model, and resource constraints.
We are, hence, motivated to ask the following research questions to help define the noise-robustness of self-tests:
\begin{question}\label{question:1}
    Given a set of self-tests, which test is able to detect non-locality in the largest regime of noisy states, under the assumption of a particular noise model? 
\end{question}

\begin{question}\label{question:2}
    Given a fixed set of noisy resources (under a noise model) and a set of candidate self-tests, which self-test should one use to produce the largest deviation from local deterministic statistics, knowing that the statistics will not be ideal? In other words, how should we use a noisy entanglement resource to demonstrate non-locality most effectively?
\end{question}

\begin{question}\label{question:3}
    Is it possible to expand the set of noisy states whose non-locality is detected by a self-test?
\end{question}

Questions \ref{question:1} and \ref{question:2} capture different perspectives on noise-robustness—either focusing on the regime of noisy states a game can certify or how to best leverage a limited, imperfect resource supply—while \ref{question:3} targets the ultimate aim of devising more noise-robust games. The challenge in answering these questions is that, due to the distinct setup and norms of games, there is no straightforward way to compare their: (1) sets of violating states, and (2) violations and scores. 
For example, consider comparing the CHSH game, whose ideal state is a 2-qubit EPR pair, versus the Magic Square Game (MSG), whose ideal state is two EPR pairs (4 qubits). The games operate optimally on different dimensionalities, their scores exist on different scales, and a violation of the local bound ($\omega_c$) in one game cannot be directly mapped to a meaningful violation in the other. While a ratio of the score attained from an experiment to the local bound has been used as a comparative measure \cite{perez2008unbounded, palazuelos2016survey}, it is inadequate as it lacks operational interpretation and normalization between games. 

From a geometric perspective, consider how different games relate to the quantum and local sets \cite{cirel1980quantum, popescu1994quantum, brunner2014bell}. Each non-local game defines a distinct axis through these sets, creating different directional cuts through the space of correlations. This geometric view highlights fundamental questions: How are points compared along different axes? How do we characterize and compare the sets of states violating different games? These questions remain poorly understood, highlighting the lack of theoretical tools for systematically comparing non-local games and understanding their relative capabilities.

We note that throughout the thesis, we use the terms non-local games and self-tests interchangeably (unless stated otherwise), as we are only concerned with non-local games which are self-tests. 

\subsection{Contribution and Main Ideas}
We answer the above questions by proposing novel methods for comparing the abilities of non-local games to detect non-locality in noisy states (Sec. \ref{sec:def-tools-noiserobustness}). Using these methods, we construct a rigorous definition for noise-robustness (Secs. \ref{sec:def-tools-noiserobustness}, \ref{sec:analytic-methods}), and a framework for testing and optimizing the noise-robustness of non-local games (Sec. \ref{sec:noise-rob-framework}). Further, we apply the new methods computationally and analytically (Sec. \ref{sec:comp-methods}, \ref{sec:analytic-methods}, \ref{sec:results-disc}) to compare the noise-robustness abilities of existing, and new non-local games in finite and infinite (noisy) resource regimes. Specifically, we construct the following measures of noise-robustness:
\begin{enumerate}
    \item \textbf{Noise-tolerance of} $G$: noise-tolerance characterizes a non-local game's ability to detect non-locality in noisy versions of its winning state, where a game is considered $(\eta,\varepsilon)$-tolerant if it can detect non-locality in its winning state under visibility (probability of no noise) $\eta$ in a noise model $\varepsilon$. To compare noise-tolerance between different games, we examine whether they can detect non-locality in their respective winning states under the same visibility $\eta$, with appropriate extensions of the noise channel for games of different dimensionalities (i.e., comparing CHSH's tolerance of a noisy EPR pair versus MSG's tolerance of two noisy EPR pairs).
    \item \textbf{Convincingness of} $G$ ($\mathfrak{C}_{G}$): 
    Convincingness is a statistical measure that quantifies how likely a game's observed statistics could have occurred due to locality, using a $p$-value calculation\footnote{During the course of this thesis, we observed that \cite{araujo2020bell} employs an identical $p$-value formulation, focused on optimizing Bell‑inequality violations for single‑shot rejection of local‑hidden‑variable models. In contrast, we derive the $p$-value independently to define noise‑robustness and develop a comprehensive framework for comparing and ranking the noise‑robustness of non‑local games with differing structures, physical noise channels, and resource constraints.}. A higher convincingness (lower $p$-value) indicates stronger evidence of non-locality in the input state, allowing comparison of different games' effectiveness at detecting non-locality in noisy states on a normalized scale regardless of their setup or bounds.
    
    \item \textbf{Gapped Score ($\kappa_G$)}: The gapped score is an efficient approximation of the convincingness measure for ranking games' effectiveness at detecting non-locality in noisy states, without evaluating $\mathfrak{C}_G$ on input states. While convincingness offers a complete analysis, our gapped score provides a more analytic, simpler calculation based on a game's sensitivity to a noise model and its quantum-classical separation. The score reliably ranks games when moderate to high numbers of noisy resources are available for proving non-locality. See Def. \ref{def:quad-score-dep} for the formal definition.
\end{enumerate}
We refer the reader to Sec. \ref{sec:def-tools-noiserobustness} for more detail. Equipped with these new definitions, we utilize them to guide designing games with improved noise-robustness. Lastly, after investigating the behaviour of these noise-robustness measures, we arrive at the proposed definition of noise-robustness in Def. \ref{def:noise-rob-1}: \textit{A non-local game $G_1$ is more noise-robust to a noisy channel than $G_2$ if $G_1$ is significantly convincing for higher noise levels on the input state than $G_2$.} This definition can be further simplified to Def. \ref{def:gapped-score-def-of-noise-rob} if a moderate to high resource supply is available for games to prove non-locality: \textit{A non-local game $G_1$ is more noise-robust to a noisy channel than $G_2$ if $\kappa_{G_1} > \kappa_{G_2}$.} These definitions can provide practical guidance, based in theory, to experimentalists who wish to identify entanglement in imperfect hardware, while preserving their resources. Further, as the latter definition depends on an analytically computable expression ($\kappa_G$), it is suitable for theoretical studies of noise-robustness in asymptotic resource regimes. To our knowledge, we are the first to build a framework for evaluating and studying the noise-robustness of non-local games. Moreover, we are the first to use the $p$-value for comparing between different non-local games' robustness to noisy inputs. 
 
\section{PRELIMINARIES}
\subsection{Quantum Information}
Let us define relevant quantum information concepts which we use throughout the work. For more information, see \cite{Nielsen_Chuang_2010}.

\begin{definition}[Hilbert Space]
A $d$-dimensional Hilbert space $\mathcal{H} = \mathbb{C}^d$ is a complete inner product space over $\mathbb{C}$.
\end{definition}

\begin{definition}[Bra-ket]
For a vector $|\psi\rangle \in \mathcal{H}$, its dual vector $\langle\psi| = (|\psi\rangle)^\dagger \in \mathcal{H}^*$, where $\dagger$ is the conjugate transpose.
\end{definition}

\begin{definition}[Qubit and $n$-Qubit System]
A qubit is a quantum state in $\mathcal{H}=\mathbb{C}^2$. The computational basis states are denoted $\{\ket{0}, \ket{1}\}$. An $n$ qubit system is a quantum state in $\mathcal{H} = (\mathbb{C}^2)^{\otimes n} \cong \mathbb{C}^{2^n}$.
\end{definition}

\begin{definition}[Quantum State]
A quantum state of a system with Hilbert space $\mathcal{H}$ is described by a density operator $\rho$ acting on $\mathcal{H}$. The state is:
\begin{itemize}
   \item Pure if and only if $\rho^2 = \rho$ (equivalently, $\text{Tr}(\rho^2) = 1$)
   \item Mixed if and only if $\rho^2 \neq \rho$ (equivalently, $\text{Tr}(\rho^2) < 1$)
\end{itemize}
A pure state can alternatively be represented by a unit vector $|\psi\rangle \in \mathcal{H}$, with corresponding density operator $\rho = |\psi\rangle\langle\psi|$.
\end{definition}

\begin{definition}[Separable \& Entangled States]
    The state $\rho_{AB}$ is separable (not entangled) if and only if it can be written as $\rho_{AB} = \sum_i p_i \rho_i^A \otimes \rho_i^B$, where $p_i$ are probabilities ($\sum_i p_i = 1$) and $\rho_i^A$, $\rho_i^B$ are density matrices for systems $A$ and $B$. $\rho_{AB}$ is entangled if and only if it is not separable.
\end{definition}

\begin{definition}[Bipartite System]
A bipartite quantum system is a composite system with $\mathcal{H} = \mathcal{H}_A \otimes \mathcal{H}_B$, where $\mathcal{H}_A$ and $\mathcal{H}_B$ are the Hilbert spaces of subsystems $A$ and $B$, respectively.
\end{definition}

\begin{definition}[Maximally Mixed State]
The maximally mixed state for a quantum system $\rho \in \mathbb{C}^{d \times d}$ is $\rho = \frac{\mathbb{I}_d}{d}$.
\end{definition}

\begin{definition}[Maximally Entangled State]
     A bipartite system, $\rho_{AB}$, with subsystems $A, B$ in dimension $d$, is maximally entangled if and only if $\rho_{AB}$ is pure and the reduced densitiy matrices, $\rho_A = Tr_B(\rho_{AB}),\rho_B = Tr_A(\rho_{AB}) $, are maximally mixed. The EPR pair is the maximally entangled state, $\ket{\psi} = \frac{\ket{00} + \ket{11}}{\sqrt{2}}$, up to local isometries.
\end{definition}

\begin{definition}[POVM]
A Positive Operator-Valued Measure (POVM) is a set $\{E_i\}$ where $E_i \succeq 0$ (positive semidefinite), $\sum_i E_i = I$ (completeness). For input state $\rho$, the probability of measuring outcome $i$ is $\text{Tr}(E_i\rho)$. 
\end{definition}

\begin{definition}[Projective Measurement]
A projective measurement is a POVM $\{P_i\}$ where $P_i^2 = P_i$ and $P_iP_j = \delta_{ij}P_i$.
\end{definition}

\begin{definition}[Quantum Channel] A quantum channel is a completely positive and trace preserving (CPTP) map $T$ that maps density operators on $\mathcal{H}_A$ to density operators on $\mathcal{H}_B$.
\end{definition}

\begin{lemma}[Tensor Product of Quantum Channels]
If $T_1$ and $T_2$ are CPTP maps, then $T_1 \otimes T_2$ is also a CPTP map.
\end{lemma}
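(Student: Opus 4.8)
The plan is to verify directly the two defining properties of a quantum channel for the map $T_1 \otimes T_2$: trace preservation and complete positivity, treating each separately. Throughout, fix the notation $T_i : \mathcal{L}(\mathcal{H}_{A_i}) \to \mathcal{L}(\mathcal{H}_{B_i})$ for $i \in \{1,2\}$.

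For trace preservation, I would first check the claim on product operators and then extend by linearity. For $X \in \mathcal{L}(\mathcal{H}_{A_1})$ and $Y \in \mathcal{L}(\mathcal{H}_{A_2})$,
\[
\mathrm{Tr}\big((T_1 \otimes T_2)(X \otimes Y)\big) = \mathrm{Tr}\big(T_1(X) \otimes T_2(Y)\big) = \mathrm{Tr}(T_1(X))\,\mathrm{Tr}(T_2(Y)) = \mathrm{Tr}(X)\,\mathrm{Tr}(Y) = \mathrm{Tr}(X \otimes Y),
\]
using that $T_1$ and $T_2$ are individually trace preserving. Since operators of the form $X \otimes Y$ span $\mathcal{L}(\mathcal{H}_{A_1} \otimes \mathcal{H}_{A_2})$ and both $\mathrm{Tr} \circ (T_1 \otimes T_2)$ and $\mathrm{Tr}$ are linear, the identity extends to all operators, so $T_1 \otimes T_2$ is trace preserving.

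For complete positivity I would unwind the definition: $T_1 \otimes T_2$ is completely positive iff $(T_1 \otimes T_2) \otimes \mathrm{id}_{\mathcal{L}(\mathbb{C}^n)}$ is a positive map for every $n$. The key step is the factorization
\[
(T_1 \otimes T_2) \otimes \mathrm{id}_n \;=\; \big(T_1 \otimes \mathrm{id}_{\mathcal{L}(\mathcal{H}_{B_2} \otimes \mathbb{C}^n)}\big) \circ \big(\mathrm{id}_{\mathcal{L}(\mathcal{H}_{A_1})} \otimes T_2 \otimes \mathrm{id}_n\big),
\]
where, after the harmless unitary reordering of tensor factors (conjugation by a permutation unitary, which preserves positivity), the right-hand factor is $T_2 \otimes \mathrm{id}_{\mathcal{L}(\mathcal{H}_{A_1} \otimes \mathbb{C}^n)}$ and the left-hand factor is $T_1 \otimes \mathrm{id}_{\mathcal{L}(\mathcal{H}_{B_2} \otimes \mathbb{C}^n)}$. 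Each of these is positive precisely because $T_1$ and $T_2$ are completely positive — complete positivity is exactly the statement that tensoring with the identity on an arbitrary finite-dimensional ancilla yields a positive map. A composition of positive maps is positive, so $(T_1 \otimes T_2) \otimes \mathrm{id}_n$ is positive for every $n$, which gives complete positivity; together with the first part, $T_1 \otimes T_2$ is CPTP.

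A slicker alternative, if one invokes the Kraus/operator-sum representation, is to write $T_i(\cdot) = \sum_k A^{(i)}_k (\cdot)\, A^{(i)\dagger}_k$ with $\sum_k A^{(i)\dagger}_k A^{(i)}_k = \mathbb{I}$; then $(T_1 \otimes T_2)(\cdot) = \sum_{k,l} (A^{(1)}_k \otimes A^{(2)}_l)(\cdot)(A^{(1)}_k \otimes A^{(2)}_l)^\dagger$ is manifestly in operator-sum form, hence completely positive, while $\sum_{k,l}(A^{(1)}_k \otimes A^{(2)}_l)^\dagger(A^{(1)}_k \otimes A^{(2)}_l) = \big(\sum_k A^{(1)\dagger}_k A^{(1)}_k\big) \otimes \big(\sum_l A^{(2)\dagger}_l A^{(2)}_l\big) = \mathbb{I} \otimes \mathbb{I}$ recovers trace preservation. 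The only real obstacle in either route is notational bookkeeping of the tensor-factor orderings; once conventions are fixed there is no genuine difficulty, which is why this is stated as a lemma rather than developed at length.
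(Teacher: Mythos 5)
Your proposal is correct; note, however, that the paper itself offers no proof of this lemma --- it is stated in the preliminaries as standard background (with the section deferring to Nielsen and Chuang), so there is no in-paper argument to compare against. Both of your routes are sound: the trace-preservation check on product operators followed by a linearity/spanning argument, and the complete-positivity argument via the factorization $(T_1\otimes T_2)\otimes\mathrm{id}_n = (T_1\otimes\mathrm{id})\circ(\mathrm{id}\otimes T_2\otimes\mathrm{id}_n)$ with the harmless permutation of tensor factors, are exactly the standard proof; the Kraus-representation alternative is an equally valid shortcut. No gaps.
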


\begin{definition}[2-Qubit Depolarizing Channel]\label{def:2qbit-dep-channel}
The two-qubit depolarizing channel with visibility $\eta \in [0,1]$ is the CPTP map $\varepsilon_{\eta}^{2-qubit}(\rho) = \eta^2 \rho + (1-\eta^2)\frac{\mathbb{I}_4}{4}$, where $\rho$ is a two-qubit state (i.e., $\rho \in \mathbb{C}^{4\times4}$).
\end{definition}

\begin{definition}[4-Qubit Depolarizing Channel]\label{def:4qbit-dep-channel}
    The four-qubit depolarizing channel with visibility $\eta\in [0,1]$ is the CPTP map $\varepsilon_{\eta}^{4-qubit}(\rho) = \varepsilon_{\eta}^{2-qubit}(\rho) \otimes \varepsilon_{\eta}^{2-qubit}(\rho)$ where each factor $\varepsilon_{\eta}^{2-qubit}$ acts independently on a two-qubit subsystem.
\end{definition}
\noindent Note that when the input state is known, we drop the $\rho$ specification in the channels.

\subsection{Probability \& Statistics}
We define some probability and statistics terms, which we later use to define noise-robustness. For more details, see \cite{casella2002statistical}.
\begin{definition}[Probability Distribution]
A probability distribution is a function $P$ mapping a sample space $\Omega \rightarrow [0,1]$ such that: (1) For a discrete random variable, $X$: $\sum_{\omega \in \Omega} P(\omega) = 1$ and $P(X = x)$ gives probability of specific values, and (2) For a continuous random variable, $X$: $\int_{\Omega} f(x)dx = 1$ and $P(a \leq X \leq b) = \int_a^b f(x)dx$ gives probability over intervals.
\end{definition}

\begin{definition}[Hypothesis Test]
A hypothesis test evaluates competing probability distributions for observed data. Let $\theta$ represent the true, unknown parameter value that determines the probability distribution of our data. The test compares:
(1) The null hypothesis $H_0: \theta = \theta_0$, which specifies a single probability distribution for the data, and (2) The alternative hypothesis $H_1: \theta = \theta_1$ (or $\theta > \theta_0$, or $\theta < \theta_0$), which specifies a single (or set of) distribution(s). The test uses a test statistic $T$ computed from the data to determine which distribution better explains the observations.
\end{definition}

\begin{definition}[$p$-value (upper-tailed)]\label{def:pvalue}
Given a hypothesis test comparing  $H_0: \theta = \theta_0 \quad$ with $\quad H_1: \theta > \theta_0$, the $p$-value (upper-tailed) is the probability, computed under the distribution specified by $\theta_0$, of obtaining a test statistic value at least as extreme as the observed value $t$. Formally, $p = P(T \geq t \mid H_0) = \sum_{j=t}^n P(T = j \mid H_0)$, where the summation is over all outcomes that are as extreme or more extreme than $t$ in the direction specified by $H_1$.
\end{definition}

\subsection{Non-Local Games}\label{sec:NLGs}
Non-local games are interactive protocols (games) between multiple parties, used to detect non-locality in quantum devices through violations of Bell inequalities. In order to understand them, we require the following information \cite{scarani2019bell,watrous2020nonlocal}:

\begin{definition}[Local Hidden Variable Model]
A local hidden variable model (LHVM) for a bipartite system describes correlations between measurement outcomes $a,b$ for settings $x,y$ as probability distributions of the form:
$P(a,b|x,y) = \sum_\lambda p_\lambda P(a|x,\lambda)P(b|y,\lambda)$
where $\lambda$ represents hidden variables with probability distribution $p_\lambda$, and the probabilities $P(a|x,\lambda)$, $P(b|y,\lambda)$ depend only on local measurements and the hidden variable.
\end{definition}

\begin{definition}[Bell Inequality]
A Bell inequality is a linear constraint on probability distributions $P(a,b|x,y)$ of measurement outcomes $a,b$ for settings $x,y$ that must be satisfied by any local hidden variable model.
\end{definition}

\begin{theorem}[Non-locality implies Entanglement]
If a quantum state $\rho_{AB}$ exhibits non-locality by violating a Bell inequality, with local measurements, then $\rho_{AB}$ must be entangled (violation is sufficient, but not necessary for entanglement).
\end{theorem}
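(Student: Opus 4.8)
The plan is to prove the contrapositive: if $\rho_{AB}$ is separable, then it admits a local hidden variable model for any choice of local measurements, and hence cannot violate any Bell inequality. This is the standard argument, and the main conceptual content is exhibiting the LHVM explicitly from the separable decomposition.

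First I would start from the definition of separability, writing $\rho_{AB} = \sum_i p_i\, \rho_i^A \otimes \rho_i^B$ with $p_i \geq 0$ and $\sum_i p_i = 1$. Fix arbitrary local POVMs: for Alice's setting $x$, a POVM $\{E_a^x\}_a$ on $\mathcal{H}_A$, and for Bob's setting $y$, a POVM $\{F_b^y\}_b$ on $\mathcal{H}_B$. Then I would compute the joint outcome probabilities using linearity of the trace and the product structure $\mathrm{Tr}\big[(E_a^x \otimes F_b^y)(\rho_i^A \otimes \rho_i^B)\big] = \mathrm{Tr}(E_a^x \rho_i^A)\,\mathrm{Tr}(F_b^y \rho_i^B)$, which gives
\begin{equation}
P(a,b \mid x,y) = \sum_i p_i\, \mathrm{Tr}(E_a^x \rho_i^A)\, \mathrm{Tr}(F_b^y \rho_i^B).
\end{equation}
The key step is then to \emph{identify} this with an LHVM: take the hidden variable $\lambda = i$ with distribution $p_\lambda = p_i$, and set $P(a \mid x, \lambda) = \mathrm{Tr}(E_a^x \rho_i^A)$ and $P(b \mid y, \lambda) = \mathrm{Tr}(F_b^y \rho_i^B)$. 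I would verify these are genuine probability distributions over $a$ (resp. $b$) for each fixed $\lambda$ and setting --- nonnegativity follows from positivity of the POVM elements and the state, and normalization follows from the completeness relations $\sum_a E_a^x = \mathbb{I}$ and $\sum_b F_b^y = \mathbb{I}$. Since each local conditional depends only on the respective local setting and on $\lambda$, this is exactly the LHVM form in the definition.

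Finally I would invoke the definition of a Bell inequality: it is a linear constraint satisfied by every LHVM, so the correlations $P(a,b\mid x,y)$ above satisfy it. Contrapositively, if $\rho_{AB}$ violates some Bell inequality under some local measurements, then $\rho_{AB}$ is not separable, i.e. entangled. I would close by noting the parenthetical remark --- that the converse fails --- follows from the existence of entangled states (e.g. certain Werner states) admitting an LHVM for all projective measurements, which I would cite rather than prove. I do not anticipate a serious obstacle here; the only mild subtlety is being careful that the argument covers general POVMs (not just projective measurements) and arbitrary numbers of settings and outcomes, so that the conclusion genuinely applies to \emph{any} Bell inequality in the sense defined above.
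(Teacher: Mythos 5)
Your proof is correct and complete: the contrapositive argument, exhibiting the LHVM directly from the separable decomposition with $\lambda = i$, $p_\lambda = p_i$, $P(a\mid x,\lambda) = \mathrm{Tr}(E_a^x\rho_i^A)$, is the standard and essentially unique proof of this fact, and your handling of general POVMs and the Werner-state remark for the failure of the converse is appropriate. The paper itself states this theorem as background in the preliminaries without providing any proof, so there is nothing to compare against; your argument is exactly what a proof would look like.
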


With this in mind, let us now define the setup, strategies, and scoring of 2-player games, which may be extended to $k$-player games (for a more detailed explanation, see \cite{watrous2020nonlocal, aaronson2018quantum}). Intuitively, in a 2-player non-local game, a referee, Charlie, sends questions $x, y \in \mathcal{X}, \mathcal{Y}$ to each player, Alice and Bob. Each player then returns an output $a, b \in \mathcal{A}, \mathcal{B}$, respectively, to Charlie, who then checks a relationship (i.e., an inequality) defined by a predicate, $V$. The  players \textit{win} if the inequality is satisfied, and \textit{lose} otherwise. Games with more players can be defined analogously, with an additional set of questions and answers per player, and an updated predicate. Formally:

\begin{definition}[2-Player Non-Local Game]
    A 2-player non-local \textit{game} can be defined generally, as a 6-tuple $G =(\mathcal{X},\mathcal{Y},\mathcal{A},\mathcal{B},\pi,V)$ where:
\begin{itemize}
    \item The finite non-empty sets of $\mathcal{X} = (x_1, x_2, \dots, x
_k),\mathcal{Y} = (y_1, y_2, \dots, y
_k)$ are questions and the finite non-empty sets of $\mathcal{A} = (a_1, a_2, \dots, a
_k),$ $\mathcal{B} = (b_1, b_2, \dots, b
_k)$, are answers, for arbitrary $k$ 
    \item The sample space $\Omega = \mathcal{X} \times \mathcal{Y} \times \mathcal{A} \times \mathcal{B}$ consists of all possible combinations of questions and answers
    \item $\pi \in \mathcal{P}(\mathcal{X} \times \mathcal{Y})$ is a probability vector denoting the probability of the referee selecting $(x,y) \in \mathcal{X} \times \mathcal{Y}$
    \item $V : \Omega \rightarrow \{0,1\}$ is a measurable function (predicate) determining winning outcomes,where $V(a,b|x,y)$ denotes if $(a,b)$ win (1) or lose (0) given $(x,y)$.
\end{itemize} 
\end{definition}
 
 Importantly, players can coordinate a \textit{strategy} for playing the game. For example, a strategy might be that they each return 1 upon receiving a 0. If the resources used by the players are classical, the strategy is expressed by two functions $f_1,f_2$ such that $f_1 : \mathcal{X} \rightarrow \mathcal{A}$ and $f_2 : \mathcal{Y} \rightarrow \mathcal{B}$. However, the general definition of either a classical or quantum strategy is presented in Def. \ref{def:NLG-strategy}. Moreover, by playing multiple \textit{rounds} of the game with a strategy, a statistical \textit{score} can be computed for it, by the means of \ref{def:NLG-score}. Each game also has a unique maximal score using a classical or quantum strategy (up to local isometries) known as its \textit{local} and \textit{quantum bounds} (Defs. \ref{def:NLG-classicalbound} and \ref{def:NLG-quantumbound}). Finally, the local bound is used to determine whether a strategy's state (input state) has non-locality, using the concept of a \textit{violation} (Def. \ref{def:NLG-violation}). 
\begin{definition}[Strategy]\label{def:NLG-strategy}
    A \textit{strategy} of a 2-player non-local game can be generally defined as the $3$-tuple: $\mathcal{S} = (\rho, \mathcal{M}_1, \mathcal{M}_2)$, where $\rho$ is some (input) state in the Hilbert space $\mathcal{H}$ and $\mathcal{M}_{i\in \{1,2\}}$ is a set of measurement operators (observables of projectors) used by players $P_{i\in \{1,2\}}$ on their share of the subspace, $\mathcal{H}_1$ or $\mathcal{H}_2$, respectively. During the strategy, each player selects a measurement, based on their question, $x$ or $y$, and applies it on their subspace. A classical or quantum strategy is one in which the players use only classical or quantum resources, respectively. 
\end{definition}

\begin{definition}[Score]\label{def:NLG-score}
    For each trial $t \in \Omega$, define the win/loss indicator $W: \Omega \to \{0,1\}$ as $W(x,y,a,b) = V(a,b|x,y)$. The score, or value, of game $G$ is the expectation $\mathbb{E}[W]$, which, for a given strategy $\mathcal{S} = (\rho, \mathcal{M}_1, \mathcal{M}_2)$ can be computed as $\langle \mathbf{S}, \rho \rangle =$ Tr$(\rho \mathbf{S}) \coloneqq \omega(G)$, where $\mathbf{S}$, is a Hermitian operator combining $\mathcal{M}_1$ and $\mathcal{M}_2$. $\mathbf{S}$ should account for correlations corresponding to all input and output combinations contributing nonzero probabilities to the score. 
\end{definition}
 
\begin{definition}[Local Bound]\label{def:NLG-classicalbound}
The maximal score of a game, $G$, achieved with a classical strategy is the best score achieved under the local hidden variables (LHV) model and is called the local bound (or, winning classical value) of $   G$:
\begin{align}
\omega_{c_G} = \max_{f,g} \sum_{(x,y) \in \mathcal{X} \times \mathcal{Y}} \pi(x, y) V(f(x), g(y)|x, y).    
\end{align}
\end{definition}

\begin{definition}[Quantum Bound]\label{def:NLG-quantumbound}
    The maximal score of a game, $G$, achieved with a quantum strategy is the best score achieved under the assumption of quantum mechanics and is called the quantum bound (or, winning quantum value) of $G$. To compute it, let us define each operator in the sets $\mathcal{M}_1, \mathcal{M}_2$ according to POVM elements corresponding to each player's possible questions and answers.  Let $W_{a|x}$ be a POVM element for an outcome $a$, given $x$, acting on $\mathcal{H}_1$, and similarly let $Q_{b|y}$ be a POVM element acting on $\mathcal{H}_2$. Then the quantum bound is:
\begin{align}
\omega_{q_G} = \underset{\mathcal{S}}{\text{sup}}  \sum_{(x,y)\in \mathcal{X} \times \mathcal{Y}} \pi(x,y) \sum_{(a,b) \in \mathcal{A} \times \mathcal{B}} V(a,b|x,y) \langle W_{a|x} \otimes Q_{b|y}, \rho \rangle.    
\end{align}
\end{definition}

\begin{definition}[Violation]\label{def:NLG-violation}
    A \textit{violation} of a non-local game results when some strategy achieves $\omega(G) > \omega_c(G)$, implying that non-locality and, therefore, entanglement has been used in the strategy; that LHV has been violated. If a game has been violated, this is a `win', and otherwise, it is a `loss'. 
\end{definition}

Let us also make a note about the comparability of scores and violations in games. If $\mathbf{S}$ is a projector, the score is a probability and is called the success probability, whereas if  $\mathbf{S}$ is an observable, the score is on a shifted scale. If the game is a \textit{XOR game}, the two scales may be easily interconverted. Otherwise, the interconversion is more challenging. Moreover, due to the different input-output settings of games, measurements, and normalizations $\pi$, the scores and violations are incomparable between different games. Also note that we simplify $\omega_{c_G} = \omega_c, \omega_{q_G} = \omega_q$ when the reference game is clear.

\subsection{Self-Testing \& Common Self-Tests}
In this thesis, we are concerned with non-local games which enforce a \textit{local strategy} and are \textit{device-independent}. Such non-local games are called self-tests, or self-testing protocols, and are used to certify the non-locality of either a state or measurements used by the device during the process of a game \cite{vsupic2020self}. In particular:
\begin{definition}[Self-Test] A self-test is a non-local game defined by the following rules:
\begin{enumerate}
    \item The players are restricted to local strategies during the protocol; however, they may share a resource such as randomness or entanglement, established before the game starts.
    \item The physical systems of the two players comprise a device that is considered to be a black box (\textit{device-independence}). That is, the referee sends questions to and receives answers from the players, making no assumptions about the device's inner workings. The only assumption is that the laws of quantum mechanics hold. Hence, only the statistics of the inputs and outputs of the device are used to discern a property  (i.e., non-locality) of the system inside the box. 
\end{enumerate}
\end{definition}  

\subsubsection{CHSH: An initial non-local game}
Let us consider various commonly played 2-player non-local games. The first is the \textit{Clauser, Horne, Shimony, and Holt (CHSH)} game (Fig. \ref{diag:CHSH}a) \cite{clauser1969proposed}, one of the most commonly used components in self-testing protocols:

\begin{definition}[CHSH Game]
    The CHSH Game is the 2-player, 2-input, 2-output non-local game where the question and answer sets are binary values,  $\mathcal{X} = \mathcal{Y} = \mathcal{A} = \mathcal{B} = \{0,1\}$, and the probability distribution across all possible questions, $\pi$, is uniform such that $\pi(0,0) = \pi(0,1) = \pi(1,0) = \pi(1,1) = \frac{1}{4}$. Further, the predicate for the CHSH game $V$ is:
\begin{equation}
V(a,b|x,y) = \begin{cases}
   1 & \text{if } a \oplus b = x \wedge y \\
   0 & \text{if } a \oplus b \neq x \wedge y.
\end{cases}
\end{equation}
\end{definition}

More intuitively, the CHSH game entails two non-signalling players, Alice ($A$) and Bob ($B$). A classical referee, Charlie (\textit{C}), selects bits $x,y \in \{0,1\}$ uniformly at random, sending $x$ to \textit{A} and $y$ to \textit{B}. The two players then carry out a pre-established strategy for satisfying the CHSH inequality: $a \oplus b = x \wedge y$. The score of the game, the quantum and local bounds, and optimal quantum strategy are as follows:
\begin{definition}[General CHSH Strategy \& Score]\label{def:CHSH-score}
Using projective measurements, where the correlations across settings are denoted by $P(a,b|x,y) = \text{Tr}(\rho(M_{a|x} \otimes M_{b|y}))$, the CHSH score (as a success probability) for any state $\rho \in \mathcal{H}^{AB}$ is:
\begin{align}
\omega_{chsh} = \frac{1}{4}\sum_{x,y,a,b} V_\rho(a,b|x,y)P(a,b|x,y) = \frac{1}{4}\sum_{\substack{x,y,a,b \\ a\oplus b = x\wedge y}} \text{Tr}(\rho(M_{a|x} \otimes M_{b|y})),
\end{align}
where $M_{a|x}$ and $M_{b|y}$ are Alice and Bob's measurements, respectively. Alternatively, the same score is commonly  computed on a shifted scale as $\frac{\text{Tr}(\rho \mathbf{S})}{8} + \frac{1}{2}$  using a Hermitian operator, $\mathbf{S}_{chsh} = A_0 B_0 + A_0 B_1 + A_1 B_0 - A_1 B_1$, with:
\[A_0 = M_{a=0|x=0}-M_{a=1|x=0}, \; A_1 = M_{a=0|x=1}-M_{a=1|x=1}\]
\[B_0 = M_{b=0|y=0}-M_{b=1|y=0},\; B_1=M_{b=0|y=1}-M_{b=1|y=1}.\]
\end{definition}
\begin{theorem}[CHSH Optimal Scores]
    The local and quantum bounds of CHSH are $75\%$ and $85.36\%$, respectively, in terms of success probabilities, and 2 and $2\sqrt{2}$, respectively, on the shifted scale attained with the observable operator \cite{Tsirelson}.
\end{theorem}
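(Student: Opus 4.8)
The plan is to establish both the local bound ($\omega_c$) and the quantum bound ($\omega_q$) of the CHSH game, in both the success-probability normalization and the shifted observable-operator scale, and to show the two scales are consistent under the conversion $\text{Tr}(\rho\mathbf{S})/8 + 1/2$ given in Def.~\ref{def:CHSH-score}. I would treat the two bounds separately, since they require genuinely different techniques: a finite combinatorial optimization for the local bound, and an operator-norm argument for the quantum bound.

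For the local bound, I would invoke Def.~\ref{def:NLG-classicalbound}: since the players' strategies are deterministic functions $f,g:\{0,1\}\to\{0,1\}$ (convexity lets us restrict to deterministic strategies when maximizing a linear functional), there are only $2^2 \cdot 2^2 = 16$ strategy pairs to consider. For each, the number of the four equally-weighted question pairs $(x,y)$ on which $a\oplus b = x\wedge y$ is satisfied can be counted directly. The key observation is that the winning condition demands $a\oplus b = 0$ for three of the four inputs and $a\oplus b=1$ for the fourth, and no deterministic assignment can satisfy all four — so at most $3$ of $4$ are won, giving $\omega_c = 3/4 = 75\%$, and one checks this is achieved (e.g.\ $f\equiv g\equiv 0$). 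On the shifted scale, substituting deterministic $\pm 1$ values $a_x, b_y$ for the observables makes $\mathbf{S}_{chsh}$ a real number $a_0 b_0 + a_0 b_1 + a_1 b_0 - a_1 b_1 = a_0(b_0+b_1) + a_1(b_0 - b_1)$; since exactly one of $b_0+b_1, b_0-b_1$ is $0$ and the other is $\pm 2$, this is always $\pm 2$, so the classical maximum is $2$, consistent with $2/8 + 1/2 = 3/4$.

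For the quantum bound, I would present Tsirelson's argument: write $\mathbf{S}_{chsh} = A_0(B_0 + B_1) + A_1(B_0 - B_1)$ where the $A_x, B_y$ are Hermitian with $A_x^2 = B_y^2 = \mathbb{I}$ (they are $\pm 1$-valued observables built from projective measurements) and $[A_x, B_y] = 0$. Then compute $\mathbf{S}_{chsh}^2 = 4\mathbb{I} + [A_0, A_1][B_1, B_0]$ (using $(B_0+B_1)^2 + (B_0-B_1)^2 = 4\mathbb{I}$ and cross terms), and bound the operator norm of the commutator term by $\|[A_0,A_1]\| \le 2$ and likewise for $B$, yielding $\|\mathbf{S}_{chsh}^2\| \le 8$, hence $\|\mathbf{S}_{chsh}\| \le 2\sqrt{2}$ and $\text{Tr}(\rho\mathbf{S}_{chsh}) \le 2\sqrt 2$ for any state. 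Attainability follows by exhibiting the standard EPR-pair strategy with measurement angles $0, \pi/2$ for Alice and $\pi/4, -\pi/4$ for Bob, which saturates the bound; translating via the conversion gives $2\sqrt2/8 + 1/2 = (1 + \sqrt2/\sqrt2\cdot\ldots)$, i.e.\ $\tfrac12 + \tfrac{\sqrt2}{4} \approx 0.8536 = 85.36\%$.

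The main obstacle is the quantum upper bound: the local bound is a routine enumeration, but Tsirelson's bound requires the algebraic identity for $\mathbf{S}_{chsh}^2$ and the commutator norm estimate, which is where all the real content lies. A secondary point requiring care is justifying that restricting to projective $\pm1$ observables (rather than general POVMs) loses no generality for the quantum value — this follows from Naimark dilation, which I would cite rather than prove. I would also double-check the arithmetic of the scale conversion at both endpoints to confirm internal consistency of the two stated forms of the theorem.
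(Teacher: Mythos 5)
Your proof is correct, but note that the paper itself offers no proof of this theorem: it is stated as a known result with a citation to Tsirelson, so there is no in-paper argument to compare yours against. What you have written is the standard, complete justification. The local-bound half is right: deterministic strategies suffice by convexity, the parity obstruction (the XOR of $a_x\oplus b_y$ over all four question pairs is $0$ while the XOR of $x\wedge y$ is $1$) shows at most three of four constraints can be met, and the all-zeros strategy attains $3/4$; the $\pm 2$ computation on the shifted scale and the conversion $2/8+1/2=3/4$ are consistent. The quantum half is the correct Tsirelson argument: the identity $\mathbf{S}_{chsh}^2 = 4\mathbb{I} + [A_0,A_1][B_1,B_0]$ checks out, the commutator norm bound gives $\|\mathbf{S}_{chsh}\|\le 2\sqrt{2}$, and the EPR strategy saturates it, giving $\tfrac{1}{2}+\tfrac{\sqrt{2}}{4}\approx 0.8536$ on the probability scale. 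Your aside about reducing general POVMs to projective $\pm 1$ observables via dilation is the right thing to flag and cite. The only blemish is the garbled intermediate expression in your scale conversion for the quantum value, which should simply read $2\sqrt{2}/8 + 1/2 = \tfrac{1}{2}+\tfrac{\sqrt{2}}{4}$; the final number you state is correct.
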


\begin{theorem}[CHSH Optimal Quantum Strategy]
    The optimal CHSH strategy can be defined as $\mathcal{S} = (\rho_{EPR}, \mathbf{S})$ where $\rho_{EPR}$ is the EPR pair, up to local isometries 
 and $\mathbf{S}_{chsh}$  with
$A_0 = \sigma_z,  A_1 = \sigma_x, 
B_0 = \frac{\sigma_z + \sigma_x}{\sqrt{2}}, 
\text{ and }B_1 = \frac{\sigma_z - \sigma_x}{\sqrt{2}}$
for Pauli matrices $\sigma_x, \sigma_z$ \cite{scarani2019bell}.
\end{theorem}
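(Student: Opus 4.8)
The plan is to verify the claimed optimal CHSH strategy by direct computation of $\operatorname{Tr}(\rho_{EPR}\,\mathbf{S}_{chsh})$ with the given observables, and then invoke Tsirelson's bound (the previous theorem) to conclude optimality. First I would record the EPR state $\rho_{EPR} = |\psi\rangle\langle\psi|$ with $|\psi\rangle = (|00\rangle+|11\rangle)/\sqrt{2}$, and note the standard identity that for any single-qubit observables $A$ (on Alice) and $B$ (on Bob), $\langle\psi|\,A\otimes B\,|\psi\rangle = \tfrac12\operatorname{Tr}(A B^{\mathsf{T}})$, where the transpose is in the computational basis. Since $\sigma_x,\sigma_z$ are real symmetric, $B^{\mathsf{T}} = B$ for each of $B_0,B_1$, so the correlator reduces to $\tfrac12\operatorname{Tr}(AB)$.

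Next I would compute the four correlators. Using $\operatorname{Tr}(\sigma_i\sigma_j) = 2\delta_{ij}$, I get $\langle A_0 B_0\rangle = \tfrac12\operatorname{Tr}\!\big(\sigma_z\tfrac{\sigma_z+\sigma_x}{\sqrt2}\big) = \tfrac{1}{\sqrt2}$, and similarly $\langle A_0 B_1\rangle = \tfrac{1}{\sqrt2}$, $\langle A_1 B_0\rangle = \tfrac{1}{\sqrt2}$, and $\langle A_1 B_1\rangle = \tfrac12\operatorname{Tr}\!\big(\sigma_x\tfrac{\sigma_z-\sigma_x}{\sqrt2}\big) = -\tfrac{1}{\sqrt2}$. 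Summing according to $\mathbf{S}_{chsh} = A_0B_0 + A_0B_1 + A_1B_0 - A_1B_1$ gives $\operatorname{Tr}(\rho_{EPR}\mathbf{S}_{chsh}) = \tfrac{1}{\sqrt2}+\tfrac{1}{\sqrt2}+\tfrac{1}{\sqrt2}-(-\tfrac{1}{\sqrt2}) = \tfrac{4}{\sqrt2} = 2\sqrt2$. I should also check that $A_0,A_1,B_0,B_1$ are genuine $\pm1$ observables, i.e. each is Hermitian with eigenvalues $\pm1$: this holds since $\sigma_z,\sigma_x$ are Hermitian with eigenvalues $\pm1$, and $\tfrac{\sigma_z\pm\sigma_x}{\sqrt2}$ squares to $\tfrac12(\sigma_z^2 + \sigma_x^2 \pm\{\sigma_z,\sigma_x\}) = \tfrac12(I+I+0) = I$. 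Hence each arises from a valid projective measurement via $M_{0|\cdot}-M_{1|\cdot}$, so the strategy is a legitimate quantum strategy in the sense of Def.~\ref{def:NLG-strategy}.

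Finally, I would translate to the success-probability scale: $\tfrac{\operatorname{Tr}(\rho_{EPR}\mathbf{S}_{chsh})}{8}+\tfrac12 = \tfrac{2\sqrt2}{8}+\tfrac12 = \tfrac{1}{2\sqrt2}+\tfrac12 \approx 0.8536$, matching the quantum bound $\omega_{q}$ from the CHSH Optimal Scores theorem. Since this theorem already establishes $2\sqrt2$ (equivalently $85.36\%$) as the supremum over all quantum strategies, achieving it certifies that $\mathcal{S} = (\rho_{EPR},\mathbf{S}_{chsh})$ with the stated Pauli observables is optimal, which is exactly the assertion.

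I do not anticipate a genuine obstacle here, as the proof is a short verification; the only mild subtlety is being careful with the transpose in the EPR correlator identity and with basis conventions (e.g.\ whether Bob's measurement direction is reflected), but since all operators involved are real this does not affect the result. If one wanted a self-contained argument that does not merely cite Tsirelson's bound for optimality, the alternative would be to also reproduce the Tsirelson SOS (sum-of-squares) argument showing $2\sqrt2 I - \mathbf{S}_{chsh}\succeq 0$ for all $\pm1$ observables; but since that bound is available as a prior theorem in the excerpt, citing it suffices and is the cleaner route.
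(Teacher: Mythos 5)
Your verification is correct, but note that the paper itself offers no proof of this theorem: it is stated as a standard result and attributed to the literature (\cite{scarani2019bell}), so there is no in-paper argument to compare against. Your computation is the right way to fill that gap. The correlator identity $\langle\psi|A\otimes B|\psi\rangle=\tfrac12\operatorname{Tr}(AB^{\mathsf T})$ is applied correctly (and you are right that the transpose is immaterial here since $\sigma_x,\sigma_z$ are real symmetric), the four correlators $\tfrac{1}{\sqrt2},\tfrac{1}{\sqrt2},\tfrac{1}{\sqrt2},-\tfrac{1}{\sqrt2}$ sum to $2\sqrt2$ under $\mathbf{S}_{chsh}=A_0B_0+A_0B_1+A_1B_0-A_1B_1$, and the check that $\bigl(\tfrac{\sigma_z\pm\sigma_x}{\sqrt2}\bigr)^2=I$ confirms these are legitimate $\pm1$-valued observables arising from projective measurements as in Def.~\ref{def:NLG-strategy}. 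The conversion $\tfrac{2\sqrt2}{8}+\tfrac12\approx 0.8536$ matches the quantum bound stated in the preceding theorem, and since that theorem asserts $2\sqrt2$ is the supremum over all quantum strategies, attaining it does establish optimality; your remark that a fully self-contained argument would additionally require the sum-of-squares proof that $2\sqrt2\,\mathbb{I}-\mathbf{S}_{chsh}\succeq 0$ is accurate, but citing the prior theorem is consistent with how the paper itself handles this material.
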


\subsubsection{2-CHSH Game}
The 2-CHSH game (Fig. \ref{diag:CHSH}b) is an extension of the CHSH, wherein Alice and Bob play two CHSH games simultaneously with one referee. The referee sends and receives two inputs ($x_1,x_2$) and  outputs ($a_1,a_2$) from Alice and two inputs ($y_1,y_2$) and outputs ($b_1,b_2$) from Bob. The players win IFF both CHSH games win. More formally, the game is defined as follows:
\begin{definition}[2-CHSH Game]The 2-CHSH is a non-local game with two players, each receiving two questions and giving two answers to the referee, corresponding to the binary values, $\mathcal{X}_1 = \mathcal{Y}_1 = \mathcal{X}_2 = \mathcal{Y}_2 = \mathcal{A}_1 = \mathcal{B}_1 = \mathcal{A}_2 = \mathcal{B}_2 = \{0,1\}$, with uniform probability across questions, $\pi$, such that $\pi(x_1,y_1,x_2,y_2) = \frac{1}{16}$.
\noindent The predicate $V_2$ is defined as:
\begin{align}
V_2(a_1,b_1,a_2,b_2|x_1,y_1,x_2,y_2) &= V(a_1,b_1|x_1,y_1) \wedge V(a_2,b_2|x_2,y_2) \\ &= \begin{cases}
   1 & \text{if } (a_1 \oplus b_1 = x_1 \wedge y_1) \wedge (a_2 \oplus b_2 = x_2 \wedge y_2) \\
   0 & \text{otherwise},
\end{cases}
\end{align}
where the game is won if and only if both constituent CHSH games are won simultaneously. 
\end{definition}

\begin{definition}[General 2-CHSH Strategy \& Score] \label{def:2-chsh-strategy-score}
    The score for the 2-CHSH can be computed using projective measurements for convenience, where  $P(a_1,b_1,a_2,b_2|x_1,y_1,x_2,y_2) = \text{Tr}(\rho(M_{a_1|x_1} \otimes M_{a_2|x_2} \otimes M_{b_1|y_1} \otimes M_{b_2|y_2}))$: {\small
\begin{align}
\omega_{2chsh} &= \frac{1}{16}\sum_{x_1,y_1,x_2,y_2,a_1,b_1,a_2,b_2} V_{2}(a_1,b_1,a_2,b_2|x_1,y_1,x_2,y_2)P(a_1,b_1,a_2,b_2|x_1,y_1,x_2,y_2) \\ 
 &= \frac{1}{16}\text{Tr}\biggl(\rho \cdot \sum_{\substack{x_1,y_1,x_2,y_2,a_1,b_1,a_2,b_2 \\ (a_1\oplus b_1 = x_1\wedge y_1)\wedge(a_2\oplus b_2 = x_2\wedge y_2)}}(M_{a_1|x_1} \otimes M_{a_2|x_2} \otimes M_{b_1|y_1} \otimes M_{b_2|y_2})\biggr).
\end{align}} 
\noindent The measurements are defined for Alice as $M_{0|0} = \frac{\mathbb{I} + A_0}{2}, M_{1|0} = \frac{\mathbb{I} - A_0}{2}, M_{0|1} = \frac{\mathbb{I} + A_1}{2}, M_{1|1} = \frac{\mathbb{I} - A_1}{2}$, and for Bob as $
M_{0|0} = \frac{\mathbb{I} + B_0}{2},
M_{1|0} = \frac{\mathbb{I} - B_0}{2},
M_{0|1} = \frac{\mathbb{I} + B_1}{2},
M_{1|1} = \frac{\mathbb{I} - B_1}{2}$.
\end{definition}
\noindent Note that in the code, we do $M_{a_1|x_1} \otimes M_{b_1|y_1} \otimes M_{a_2|x_2} \otimes M_{b_2|y_2}$ due to how the order of the qubits is stored in our code quantum objects. 

\begin{definition}[2-CHSH Optimal Strategy, \& Bounds]
    The optimal strategy for the 2-CHSH is to play the optimal CHSH strategy on each of the two CHSH instances and output a win only if both CHSH games are won (i.e., take the AND of the two outcomes). The quantum bound is $\approx 0.729$, achieved with $\rho_{EPR} \otimes \rho_{EPR}$, and the local bound is $\frac{10}{16}$ \cite{PhysRevA.66.042111}. \
\end{definition}

\subsubsection{The Magic Square Game}

We introduce another common game, the Magic Square Game (MSG). It is a 2-player game (Alice and Bob) with one referee (Charlie). As an overview, in the MSG, Charlie first samples 2 bits $x,y \in \mathcal{X}, \mathcal{Y}$ uniformly at random, and sends them to Alice and Bob, respectively. Next, Alice returns 3 bits, $a_{x_i} \in \{\pm 1\}$, to Charlie, and Bob returns one bit,  $b_{x_y} \in \{\pm 1\}$. Lastly, the referee checks that the bits satisfy the parity and consistency checks in a predicate, $V$. If both conditions are satisfied, accept (win). Otherwise, reject (loss). The game and general strategy for the MSG are formally defined below.

\begin{definition}[Magic Square Game]
The Magic Square Game is the 2-player, $(6,3)$-input, $(3,1)$-output non-local game where the question and answer sets are $\mathcal{X} = \{1,\dots,6\}$, $\mathcal{Y} = \{1,2,3\}$, $\mathcal{A} = \{\pm1\}^3$, $\mathcal{B} = \{\pm1\}$, and the probability distribution across all possible questions, $\pi$, is uniform such that $\pi(x,y) = \frac{1}{18} \quad \forall x \in \mathcal{X}, y \in \mathcal{Y}$. Further, the predicate is:

\[V(a,b|x,y) = \begin{cases}
    1 \text{ (accept)} & \text{if } (a_{x_1}a_{x_2}a_{x_3} = c_x \text{ (parity check)}) \wedge (a_{x_y} = b_{x_y} \text{ (consistency check)})\\
    0 \text{ (reject)} & \text{otherwise}
\end{cases}\]

where $a = (a_{x_1}, a_{x_2}, a_{x_3}) \in \mathcal{A}$ is Alice's output triple, $b \in \mathcal{B}$ is Bob's output bit, and $c_x$ is defined as:
\[c_x = \begin{cases}
    -1 & \text{if } x = 6 \text{ (third column)}\\
    +1 & \text{otherwise}
\end{cases}\]
\end{definition}

\begin{definition}[General MSG Strategy \& Score]
The players possess some state, $\rho$. They define a grid of mutually commuting observables to choose from, where the products of observables in each row and column are as follows: 
\[
\begin{array}{cccc}
O_1 & O_2 & O_3 & | I \\
O_4 & O_5 & O_6 & | I\\
O_7 & O_8 & O_9 & | I\\
-- & -- & -- & \\
I & I & -I
\end{array}    
\]
\noindent To obtain her three bits, Alice measures the three observables, which we denote by  $O_{x_1}, O_{x_2}, O_{x_3}$, in the $x$'th block (row or column) on her share of $\rho$. To obtain his bit, Bob measures the $y$'th observable in block $x$ of the grid, $O_{x_y}^T$, on his share of $\rho$. It is also noteworthy to mention that when considering this, the parity check can be written as $a_{x_1} a_{x_2} a_{x_3} = c_x$, where $c_x I  = O_{x_1} O_{x_2} O_{x_3}$. More broadly, a scoring operator can be derived by combining Alice and Bob's measurements and all possible output combinations:
\begin{align}\label{eqn:MSG-operator} 
&\mathbf{S}_{MSG} = \frac{1}{18}\sum_{x \in \{1, \dots,6\}, a = a_{x_1},a_{x_2},a_{x_3}, b = b_{x_1},b_{x_2},b_{x_3}} A^j_{a_{x_1}, a_{x_2}, a_{x_3}} \otimes (B^{y=1}_{b_{x_1}} + B^{y=2}_{b_{x_2}} + B^{y=3}_{b_{x_3}}) \text{ where},\\
\noalign{\smallskip}
&A^x_{a_{x_1},a_{x_2},a_{x_3}} = \frac{\bigl( \mathbb{I} + a_{x_1}O_{x_1}\bigr)}{2} \frac{\bigl( \mathbb{I} + a_{x_2}O_{x_2}\bigr)}{2} \frac{\bigl( \mathbb{I} + a_{x_3}O_{x_3}\bigr)}{2}, 
B^k_{b_{x_y}} = \frac{\bigl( \mathbb{I} + b_{x_y}O_{x_y}\bigr)}{2}.
\end{align} 
The expectation value of the operator with an input state $\rho$, $\langle S_{MSG}, \rho \rangle$, can be computed to determine the MSG score. 
\end{definition}

We note that to our knowledge, we are the first to score the MSG for noisy input states using this construction. That is, we derived the score operator as we were unable to find the use of any such scoring operator in the MSG literature. Note that in Eqn. \ref{eqn:MSG-operator}, the summation loops over all possible combinations of outcomes for $a$, and likewise over the outcomes for $b$. The 1/18 comes from the uniform distribution across input settings. Also note that, by the definitions above, $O_s = B^s_{+1} - B^s_{-1}$ for $s\in \{1,\dots, 9\}$.

\begin{definition}[MSG Local and Quantum Bounds]
The optimal \textit{quantum strategy} is played with two shared EPR pairs (up to local isometries): 
$\ket{\psi} = \frac{(\ket{00}_{12} + \ket{11}_{12})}{\sqrt{2}} \otimes \frac{(\ket{00}_{34} + \ket{11}_{34})}{\sqrt{2}}
= \frac{1}{2}(\ket{0000} + \ket{0011} + \ket{1100} + \ket{1111})$ where Alice possesses qubits 1 and 3, and Bob qubits 2 and 4. The grid for the optimal quantum strategy is defined with the observables:
\[
\begin{array}{cccc}
IZ & ZI & ZZ & | \mathbb{I}_4 \\
XI & IX & XX & | \mathbb{I}_4\\
XZ & ZX & YY & | \mathbb{I}_4\\
-- & -- & -- & \\
\mathbb{I}_4 & \mathbb{I}_4 & -\mathbb{I}_4
\end{array}
\]
where each entry (i.e., IZ) acts on one of the party's set of qubits. The first of two observables (i.e., I) acts on the measuring party's first qubit and the second (i.e., Z) acts on their second qubit. The quantum bound, attained with the above strategy, is 1 (i.e., the players win 100\% of the time). Under LHV, the maximal score is 17/18 (local bound). 
\end{definition}

\ifthenelse{\boolean{showfigures}}{\begin{figure}[ht]
    \centering
    \begin{subfigure}[b]{0.3\textwidth}
        \centering
        \includegraphics[width=\textwidth]{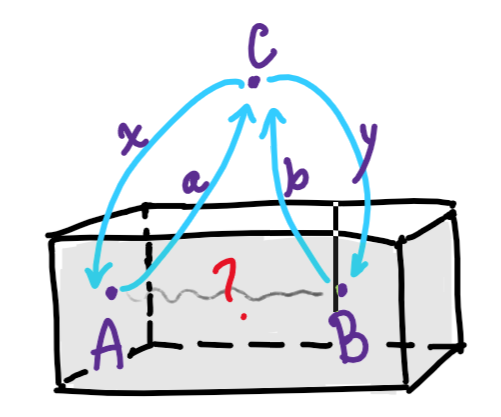}
        \caption{CHSH Game Setup}
        \label{fig:image1}
    \end{subfigure}
    \hspace{0.5cm} 
    \begin{subfigure}[b]{0.35\textwidth}
        \centering
        \includegraphics[width=\textwidth]{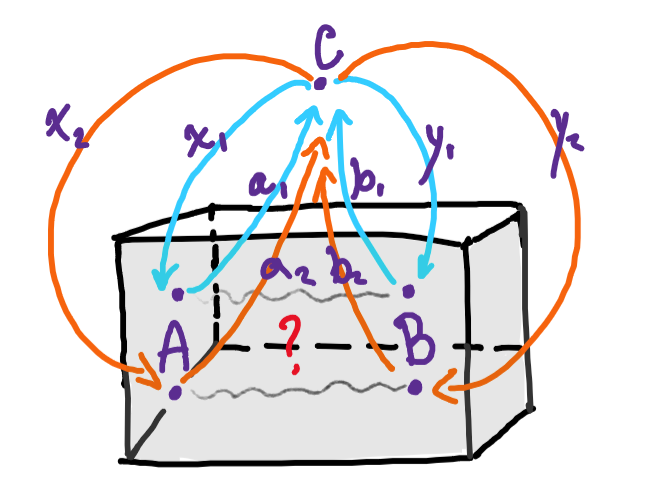}
        \caption{2-CHSH Game Setup}
        \label{fig:image2}
    \end{subfigure}
    \caption[CHSH and 2-CHSH Game Setups]{The CHSH game has 2 inputs and 2 outputs while the 2-CHSH game has 4 inputs and 4 outputs. What is inside the box (quantum or classical) is unknown to the referee, as these are self-tests.}
    \label{diag:CHSH}
\end{figure}}{}

\section{NOISE-ROBUSTNESS FRAMEWORK FOR NON-LOCAL GAMES} \label{sec:def-tools-noiserobustness}
\noindent We want to be able to formalize a reliable method for assessing the noise-robustness of a non-local game to noisy inputs, which, intuitively, is \textit{the ability of a game to detect non-locality in noisy input states}. Ideally, we want to be able to compare the noise-robustness of games with different dimensionalities and input-output settings. In this section, we propose various methods for doing so, and introduce new definitions that prove useful for this objective along the way. 

\subsection{Noise-Tolerance}
An intuitive, natural way to think about noise-robustness is to simply test whether a game is violated for a particular noisy form of the game's winning state, $\rho_{win}$ (up to local isometries). If the noisy state violates the game, then we can say that the game is \textit{tolerant} of that state. For example, in the CHSH game, the winning state, up to local isometries, is the EPR pair. Hence, we can test whether certain noisy versions of the EPR pair violate the CHSH game to conclude whether the game is tolerant of these states. Further, we can expand this intuition to study a set of noisy states defined by a \textit{noise model}, which has an explicit noise level applied to the winning state. Doing so will allow for an analysis of tolerated \textit{noise levels}. This leads to two key questions:
\begin{question}
    Let us define a state, the winning state, $\rho_{win}$ (up to local isometries), of a non-local game, $G$, and a noisy version of it, $\rho'_{win}$ (up to local isometries). Does $G$ detect the non-locality remaining in $\rho'_{win}$? That is, does $\rho'_{win}$ violate $G$'s winning condition?
\end{question} 

\begin{question}
    Define an assumed noise model by a quantum channel yielding the noisy state, $\rho'_{win}=  \varepsilon(\eta, \rho_{win})$, where the probability of no noise (visibility) applied to the winning state of a non-local game $G$ is $\eta$. What are the noise levels which still permit the noisy winning state to violate $G$? 
\end{question}
In the first question, we ask whether a game is tolerant of a specific noisy state, and in the second, we ask if a game is tolerant of a noise interval. Importantly, \textbf{we specifically consider noisy forms of the winning state} of a game as this will allow comparison of noise-tolerance across games.  That is, it is not possible to compare the tolerance of explicit states between games with different input dimensionalities (i.e., the CHSH and MSG), but all games have winning states up to local isometries. Hence, we can compare the \textbf{levels of deviation from the winning state} that still violate games. A quantum channel model for noise helps in establishing a noise parameter to quantify the deviations from the winning state. Thus, the choice of comparing channel noise levels applied to winning states helps to standardize a way for comparing the noise-tolerance between different games. We formally define the notion of noise-tolerance as follows:

\begin{definition}
 [Noise-tolerance of a non-local game] \label{def-noise-tolerance-single-game}
 Define a non-local game $G$ with a local bound $\omega_{c_G}$, and let $\mathbf{S}_G$ denote some fixed observable for the player's strategy in $G$. Further, let us define $\rho_{ref} \in \mathcal{H} = (\mathbb{C}^2)^{\otimes n}$  to be an $n$-qubit state, and $\Phi(\rho_{ref}, \eta)$ to be a noisy quantum channel, where $\eta$ is the visibility affecting $\rho_{ref}$. Then $G$ is $(\eta', \Phi)$-tolerant if $\text{Tr}(\mathbf{S}_G \; \Phi(\eta', \rho_{ref})) > \omega_{c_G}$, for some $\eta'$, noise model $\varepsilon$.
\end{definition}

That is, $G$ is $(\eta', \Phi)$-tolerant if the strategy $\mathcal{S}_G = (\rho' = \Phi(\eta', \rho_{ref}), \mathbf{S}_G)$ violates the maximal classical value ($\omega_{c_G}$), for some $\eta'$ and noise model $\Phi$. Note that we leave $\rho_{ref}$ to be a general state in the definition so that it can be flexibly chosen for the desired question or analysis. However, for our purposes we are concerned with $\rho_{ref}=\rho_{win}$ for the reasons stated above. Let us take an example of applying the definition. We can fix a game, $G_1$, to be the CHSH game. Its optimal state, $\rho_{win_1}$, is the EPR pair. Then, we could analyze the violations of $\omega_{c_{G_1}}$ yielded by strategies $\mathcal{S}^i_{G_1} = (\rho_i = \Phi(\eta_i, \rho_{win_1}), \mathbf{S}_{G_1})$, for varying visibilities $\eta_i \in \{\eta_1, \eta_2, \dots, \eta_n\}$ in a given noise model $\Phi$. Consequently, we would attain information about whether $G_1$ is $(\eta_i, \Phi)$-tolerant. Taking $\eta_{min} = (min\{\eta_1, \eta_2, \dots, \eta_l\}$ s.t. $G_1$ is $(\eta_i, \Phi)$-tolerant), we can claim that $G_1$ is at least $(\eta_{min}, \Phi)$-tolerant. Let us now define the protocol for comparing the noise-tolerance of two different non-local games: 
\begin{definition}
    [Protocol for comparing noise-tolerance of two games]
    \label{protocol-comparingnoisygames}Given two non-local games, $G_1$ and $G_2$, each with local bound $\omega_{c_{G_1}}$ and $\omega_{c_{G_2}}$, respectively, let $\mathbf{S}_{G_1}, \mathbf{S}_{G_2}$ denote some fixed Hermitian operator for the players' strategies in $G_1, G_2$, respectively. Let us also define $\rho_{win_1} \in \mathcal{H} = (\mathbb{C}^2)^{\otimes n}$ to be the $n$-qubit winning state of $G_1$, and $\rho_{win_2} \in \mathcal{H} = (\mathbf{C}^2)^{\otimes m}$ to be the $m$-qubit winning state of $G_2$. 
    \smallskip
    
    \noindent Further, let $\rho'_{1} = \Phi (\eta, \rho_{win_1})$ and $\rho'_{2} = \Phi (\eta, \rho_{win_2})$ be the noisy input states with visibility level $\eta$ to $G_1$ and $G_2$, respectively. If the games have different dimensionalities for the input Hilbert space (i.e., $m > n$), design an extension from the $n$-dimensional channel to the $m$-dimensional one, which preserves a clear relationship between the two channel's noise parameters and is CPTP. To compare the noise-tolerance of $G_1$ and $G_2$, for some $\eta^*$:
    \begin{enumerate}
        \item Decide $\text{Tr}(\mathbf{S}_{G_1} \; \Phi(\eta^*, \rho_{win_1})) > \omega_{c_{G_1}}$.
        \item Decide $\text{Tr}(\mathbf{S}_{G_2} \; \Phi(\eta^*, \rho_{win_2})) > \omega_{c_{G_2}}$. 
    \end{enumerate}
If one game is tolerant to its noisy state and the other is not, the former is more $\eta^*$-tolerant of its winning state than the other, under the noisy channels defined. Otherwise, they are both $\eta^*$-tolerant of their winning states under the noisy channels defined.
\end{definition}
Put more simply: if both conditions 1 and 2 are satisfied, then the games are tolerant to $\eta^*$ noise on their winning state under the noisy channel. Otherwise, one is more tolerant to $\eta^*$ noise. Note that the noise level applied when checking conditions 1 and 2 should be the same to both channels, regardless of whether one is an extension of the other or not. This allows us to observe the effect of noise on larger dimensions or cases where games have parallel repetition.

As can  be seen in Def. \ref{protocol-comparingnoisygames}, comparing the noise-tolerance of two games whose winning states have different dimensionalities can be complex. The situation becomes simpler if the winning state of the higher dimensional game is a tensored repetition of the winning state of the lower dimensional game. Let us illustrate this with an example. Say we wish to compare between the noise-tolerance of the CHSH and MSG games towards some quantum channel $\Phi$. The CHSH has a 2-qubit winning state (a single EPR pair), and the MSG a 4-qubit winning state (two EPR pairs), so we choose:  $\rho_{ref}^{msg} = \rho_{EPR} \otimes \rho_{EPR}$ and $\rho_{ref}^{chsh} = \rho_{EPR}$. In this case, it is natural to define the noise-tolerance comparison as:
\begin{equation}\label{CHSH-check}
    \text{Tr}(\mathbf{S}_{CHSH} \; \Phi(\eta, \rho_{EPR}))  > \omega_{c_{chsh}}
\end{equation}
\begin{equation} \label{MSG-check}
    \text{Tr}(\mathbf{S}_{MSG} \; \Phi(\eta, \rho_{EPR}) \otimes \Phi(\eta, \rho_{EPR}))  > \omega_{c_{msg}}
\end{equation}
The channel for the MSG is simply the tensor of two noisy CHSH inputs. The channels studied in this thesis, the 2- and 4-qubit depolarizing channels (Defs. \ref{def:2qbit-dep-channel} and \ref{def:4qbit-dep-channel}, respectively, with $\rho=\rho_{EPR}$), are based on this tensor product extension. By this construction, if both Eqns. \ref{CHSH-check}
 and \ref{MSG-check}
are satisfied $\forall \eta$, then both games are tolerant to $\Phi$-type noise on the EPR state. Otherwise, one game is more tolerant; that is, it can detect the non-locality of a $\Phi$-noisy EPR state better than the other. Note also that in this simple EPR extended case, in considering CHSH and MSG, we can see if increasing the number of copies of a noisy resource available in a round may improve the ability to detect non-locality in a noisy EPR pair. Later, we also consider a parallel repetition of the CHSH game, the 2-CHSH game. Further, note that when the noise-model is clearly apparent, the $\Phi$ specification in the noise-tolerance description is dropped.

\subsection{Convincingness}
\subsubsection{Motivation}
The second measure of noise-robustness is more specifically designed to capture how efficient non-local games are for detecting non-locality in noisy inputs. There were three main objectives when designing this measure:
\begin{enumerate}
    \item The measure should be a general, continuous score that is \textbf{comparative} among games which have different numbers of inputs, outputs and input state dimensions. For example, it should provide a continuous scale for comparing and ranking the CHSH, 2-CHSH, and MSG's robustness to noisy inputs. This is highly non-trivial and does not yet exist. The challenge is that the scores of different games are on completely different scales which are normalized differently, and the games are in different Hilbert Space dimensions. 
    \item The measure should have an \textbf{operational meaning tied to noise-robustness} of each non-local game.
    \item The measure should be possible to assess for \textbf{finite and infinite resources} of noisy states. This comes from both practical motivation, to understand which game is most efficient for experimentalists to utilize for non-locality detection in noisy resources, and from a theoretical motivation to pave way for understanding the asymptotic strengths or limitations of noise-robustness in different games.
\end{enumerate}
 
\subsubsection{Construction}
We call the new measure the \textit{convincingness} of a non-local game. Intuitively, the convincingness characterizes the likelihood that statistics observed from $n$ trials of a self-test have occurred due to locality (i.e., the likelihood that the statistics are due to maximal or high noise levels in the input state). This measure processes two metrics, (1) the local bound of a game and (2) the success probability of a given input state to the game, into a statistical measure that \textit{is} comparable amongst different games. One can think of this new score as normalizing the violations of each game to the same scale, allowing a more fair comparison of games' abilities to detect non-locality than simply comparing raw violations.   We aim for the convincingness to be a common-ground score which can be computed in a normalized manner amongst all games. 

Let us now describe how the convincingness score is constructed. Its definition is based on  the $p$-value, which is a statistical tool that  has been used to present very strong evidence against local realism in  loophole-free Bell Inequality experiments  \cite{giustina2015significant}. In particular, it has been used to characterize the strength (or significance) of experimental Bell violations results. The formal definition of the $p$-value can be found in Def. \ref{def:pvalue}. Intuitively, the $p$-value represents the probability that observed results occurred due to a pre-defined null hypothesis. A low (upper-tailed) $p$-value indicates that it is highly improbable for our results, or more extreme ones, to occur under the assumption that the null hypothesis is true. If the likelihood is below some significance threshold, $\alpha$, this suggests that the results may be due to an effect unaccounted for by the null hypothesis (there is significant evidence against the null hypothesis). So, practically,  the $p$-value computes the likelihood that an observed output statistical distribution can be explained by an assumed statistical distribution.  

It is natural to extend this to self-testing, where we deal with statistical distributions of physical systems. The construction is as follows. Say that we run $n$ rounds of a non-local game, $G$, noting down the violation measured for each of the $n$ rounds. Then, for each trial $t \in \Omega$, define the win/loss indicator $W: \Omega \to \{0,1\}$ as $W(x,y,a,b) = V(a,b|x,y)$. Under the null hypothesis of local realism, $W \sim \text{Bernoulli}(\omega_{c_G})$, while for a state with visibility $\eta$ and success probability $\omega_{v_G}$, $W \sim \text{Bernoulli}(\omega_{v_G})$. Hence, we may compare the likelihood that the outcome distribution was explained due to a local distribution. Upon performing a binomial test to compare the observed number of successes, $k$, from Binomial$(n, \omega_{v_G})$ to what we would expect under Binomial$(n,\omega_{c_G})$, we can take an upper-tailed test: sum over the binomial test statistic for each possible number of successes $j \in [k, n]; j \in \mathbb{Z}$, yielding the convincingness of a non-local game in Eqn. \ref{eqn:p-value}:
\begin{align}
    P(X \geq k | H_0 = \omega_{c_G}) &= \sum_{j = k}^n P(X = j | H_0 = \omega_{c_G})\\
    \iff P(X \geq \text{round}(n \cdot \omega_{v_G}) | H_0 = \omega_{c_G}) &= \sum_{\text{round}\left(n \cdot \omega_{v_G}\right) }^n \binom{n}{j}\omega_{c_G}^j (1-\omega_{c_G})^{n-j} \label{eqn:p-value}
\end{align} 
Concretely,  this sum gives the probability of observing $k$ or more game violations under a local model. This convincingness test also provides an easy check: if the likelihood (Eqn. \ref{eqn:p-value}) is below a certain significance level ($\leq \alpha$), this indicates that the observed data is sufficiently inconsistent with the null hypothesis of locality for it to be rejected in favour of non-locality. We define $\alpha=0.05$ as this is the standard significance level for the $p$-value across different disciplines. This idea is formally defined in Def. \ref{def:convincingness} and the full computation is defined in Algorithm \ref{algo:convincingness}. By our definition and setup, if $\mathfrak{C}_{G} \leq 0.05$, the results of the non-local game are significantly unlikely to be due to locality, and hence, are highly likely to be due to non-locality (entanglement). In this case, we would therefore say that $G$ is a convincing game for detecting entanglement. Hence, convincingness is a simple yet powerful extension of the $p$-value definition to non-local games.

\begin{definition}[Convincingness]
\label{def:convincingness}
Define $G$ to be a non-local game with local bound $\omega_{c_G}$, and let a measured violation in one round of $G$ with a particular strategy (trial) be denoted by $\omega_{v_G}$. The convincingness, $\mathfrak{C}_{G}$, of $G$ is the upper-tailed $p$-value with the Null Hypothesis, $H_0 = \omega_{c_G}$, the Alternative Hypothesis\footnote{Note that the Alternative Hypothesis can come from any input state, so this is a general construction and not specific to knowing a noise model with visibility $\eta$.}, $H_1 = \omega_{v_G}$, and number of trials $n$. 
\end{definition}

\begin{algorithm}[t] 
\SetAlgoLined
\KwIn{Number of trials $n$, classical bound $\omega_c$, observed success probability $\omega_v$} 
\KwOut{Convincingness $\mathfrak{C}_G$ ($p$-value for non-local game, $G$)}

\BlankLine
Let $X$ be a random variable where:\\
$X = 1$ represents a game violation (success)\\
$X = 0$ represents no violation (failure)

\BlankLine
\tcp{Define probabilistic model under null hypothesis}
Set $\omega_{c}$ as bias under $H_0$ (classical bound of $G$)\\
$X \sim \text{Bernoulli}(\omega_{c})$ where $P(X = 1) = \omega_{c}$

\BlankLine
\tcp{Define probabilistic model under alternative hypothesis}
Set $\omega_{v}$ as bias under $H_1$ (observed success probability for some input to $G$)\\
$X \sim \text{Bernoulli}(\omega_{v})$ where $P(X = 1) = \omega_{v}$

\BlankLine
\tcp{Sample from binomial distribution}
Draw $n$ independent samples $X_1, X_2, \ldots, X_n$ with probability $\omega_{v}$\\
Let $k$ be the number of successes observed in these $n$ trials\\
Then, $k$ is a realization of the random variable $K \sim \text{Binomial}(n, \omega_{v})$\\
Expected successes: $k \sim \mathbb{E}[X] = n \cdot \omega_{v}$

\BlankLine
\tcp{Compute upper-tailed p-value}
$\mathfrak{C}_G = P(X \geq \text{round}(n \cdot \omega_{v})|H_0 = \omega_{c})$\\
$= \sum\limits_{j=\text{round}(n \cdot \omega_{v})}^n \binom{n}{j}\omega_{c}^j(1-\omega_{c})^{n-j}$

\BlankLine
\If{$\mathfrak{C}_G \leq 0.05$}{
    Reject null hypothesis of locality
}

\caption{Convincingness Calculation for Non-Local Games}
\label{algo:convincingness}
\end{algorithm}

Can one game be more or less convincing than the other? When looking at the bare $p$-value, the lower the $p$-value, the greater the statistical significance of the outcome. Hence, for games $G_1, G_2$, if $\mathfrak{C}_{G_1} < \mathfrak{C}_{G_2}$, then  $G_1$ is more `convincing' of non-locality (and hence, entanglement) than $G_2$. 

By extension, we later focus on applying this to mixed states--that is, we will use the convincingness score to study which games provide more evidence of non-locality in noisy states.

\subsubsection{Simplified Convincingness}
We can further simplify the equation for convincingness by borrowing knowledge about the behaviour of the $p$-value as in Theorem \ref{lemma:conv-approx}. The proof follows.

\begin{theorem}
    [Convincingness Bound] \label{lemma:conv-approx} For a non-local game $G$ with $\omega_{c_G}, \omega_{v_G}, n$ as in Def. \ref{def:convincingness}, and $\omega_{c_G} \neq \omega_{v_G}$, we can bound $\mathfrak{C}_{G} \sim \exp\left(-n \left(\omega_{v_G} - \omega_{c_G}\right)^2 \right)$, where we define $G(n) \sim H(n)$ to denote that the function $G(n)$ scales proportionally to some function $H(n)$ for asymptotically large $n$.
\end{theorem}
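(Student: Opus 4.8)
The plan is to read $\mathfrak{C}_G$ as the upper tail of a binomial distribution and estimate it with standard concentration and large-deviation tools. Writing $p=\omega_{c_G}$, $q=\omega_{v_G}$ and $k=\mathrm{round}(nq)$, Definition~\ref{def:convincingness} and Eqn.~\ref{eqn:p-value} give
\[ \mathfrak{C}_G = P\bigl(X\ge k \mid X\sim\mathrm{Binomial}(n,p)\bigr)=\sum_{j=k}^{n}\binom{n}{j}p^{j}(1-p)^{n-j}. \]
First I would restrict to the only regime in which the claimed decay makes sense, namely a genuine violation $q>p$ (if $q<p$ the upper-tailed $p$-value tends to $1$, so the statement implicitly concerns convincing games). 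The target is then to show that $-\tfrac{1}{n}\ln\mathfrak{C}_G$ converges to a positive multiple of $(q-p)^2$, which is exactly the content of ``$\mathfrak{C}_G\sim\exp(-n(q-p)^2)$'' once the game-dependent constant and any subexponential prefactors are absorbed into the $\sim$-notation.

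The cleanest rigorous route has two halves. For the upper bound, $X$ is a sum of $n$ i.i.d.\ Bernoulli$(p)$ variables valued in $[0,1]$ with mean $np$, so Hoeffding's inequality gives $P(X\ge nq)\le\exp\!\bigl(-2n(q-p)^2\bigr)$; since $k\ge nq-1$, rounding perturbs $q$ by $O(1/n)$ and the bound survives as $\mathfrak{C}_G\le\exp\!\bigl(-2n(q-p)^2(1+o(1))\bigr)$. For the matching lower bound I would keep only the largest summand (the $j=k$ term, since $q>p$) and apply Stirling's formula, obtaining $\binom{n}{k}p^{k}(1-p)^{n-k}=\mathrm{poly}(1/n)\cdot\exp\!\bigl(-nD(q\|p)\bigr)$ with $D(q\|p)=q\ln\tfrac{q}{p}+(1-q)\ln\tfrac{1-q}{1-p}$ the Bernoulli relative entropy. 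A second-order Taylor expansion of $q\mapsto D(q\|p)$ about $q=p$ gives $D(q\|p)=\tfrac{(q-p)^2}{2\xi(1-\xi)}$ for some $\xi$ between $p$ and $q$; for all games considered here $\omega_c$ lies strictly inside $(0,1)$ and $\omega_v\le\omega_q<1$, so $\xi(1-\xi)\in(0,\tfrac14]$ is bounded away from $0$, whence $D(q\|p)=\Theta\bigl((q-p)^2\bigr)$. Together the two halves pin the exponential rate to $\Theta(n(q-p)^2)$, i.e.\ the stated scaling. (Equivalently, one reaches the same conclusion quickly but less rigorously by the normal approximation $X\approx\mathcal N(np,np(1-p))$ together with the Gaussian tail estimate $P(Z\ge z)\sim\exp(-z^2/2)$, which immediately produces rate $\tfrac{(q-p)^2}{2p(1-p)}$.)

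The main obstacle is not any single calculation but deciding how literally to take ``$\sim$'': a truly \emph{proportional} statement fails, because the exact exponential rate is $\tfrac{1}{2\omega_c(1-\omega_c)}$ to leading order rather than $1$, and Stirling always leaves a polynomial-in-$n$ prefactor, so the honest result is a Cram\'er-type equivalence on the logarithmic scale, $\tfrac{1}{n}\ln\mathfrak{C}_G\to -\Theta\!\bigl((\omega_{v_G}-\omega_{c_G})^2\bigr)$. Accordingly the substantive steps are the Stirling estimate of the dominant binomial term (to get a lower bound carrying the relative-entropy rate) and the Taylor expansion identifying that rate with $(q-p)^2$ up to a bounded constant; the upper bound is a one-line appeal to Hoeffding's (or the Chernoff) inequality.
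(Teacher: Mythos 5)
Your proposal is correct and is strictly more complete than the paper's own argument. The paper's proof consists only of the upper-bound half of your argument: it quotes the Hoeffding-type bound $\Pr(X\ge k)\le\exp\bigl(-2n(k/n-p)^2\bigr)$ on the binomial upper tail, substitutes $k=\mathrm{round}(n\,\omega_{v_G})$ and $p=\omega_{c_G}$, notes that the rounding error is negligible, drops the factor of $2$, and then simply asserts the asymptotic proportionality $\mathfrak{C}_G\sim\exp\bigl(-n(\omega_{v_G}-\omega_{c_G})^2\bigr)$. It never establishes a matching lower bound, so strictly speaking it proves only that $\mathfrak{C}_G$ decays \emph{at least} this fast. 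Your second half --- keeping the dominant $j=k$ term, applying Stirling to get the Cram\'er rate $D(\omega_{v_G}\|\omega_{c_G})$, and Taylor-expanding to identify that rate as $\Theta\bigl((\omega_{v_G}-\omega_{c_G})^2\bigr)$ --- is exactly what is needed to turn the paper's one-sided bound into the two-sided scaling statement the theorem actually claims. Your caveat about the meaning of ``$\sim$'' is also well taken and sharper than the paper's own remark: the true leading-order rate is $\tfrac{(q-p)^2}{2\xi(1-\xi)}$ with $\xi$ between $\omega_{c_G}$ and $\omega_{v_G}$, and since $\omega_{c_G}$ differs from game to game this constant is game-dependent rather than universal, which is worth keeping in mind when the approximation is later used to \emph{compare} games. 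The only regime restriction you impose ($\omega_{v_G}>\omega_{c_G}$) is implicit in the paper's usage as well, so nothing is lost there.
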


\begin{proof}
It is known that the lower-tailed $p$-value is upper bounded as follows:
\begin{align}
F(k; n, p) &= \Pr(X \leq k) \leq \exp\left(-2n \left(p - \frac{k}{n}\right)^2 \right), \end{align}
which extends to the following for the upper-tailed $p$-value:
\begin{align}
    F(n - k; n, 1 - p) &= \Pr(X \geq k) \leq \exp\left(-2n \left(\frac{k}{n} - p\right)^2 \right). 
\end{align}
Plugging in the null and alternative hypotheses for our construction of the convincingness metric, we attain:
\begin{align}
    F\left(n - \text{round}(n \cdot \omega_{v_G}); n, 1 - \omega_{c_G}\right) &\leq \exp\left(-2n \left(\frac{\text{round}(n \cdot \omega_{v_G})}{n} - \omega_{c_G}\right)^2 \right) \label{eqn:rounding-error}\\
    &= \exp\left(-2n \left(\omega_{v_G} - \omega_{c_G}\right)^2 \right). \label{eqn:2-factor}
\end{align}
Let $G(n) \sim H(n)$ denote that the function $G(n)$ scales proportionally to some function $H(n)$ for asymptotically large $n$. Then assuming $\omega_{c_G} \neq \omega_{v_G}$, we can accurately approximate the convincingness for $n \rightarrow \infty$:
\begin{align}
    \mathfrak{C}_{G} \sim \exp\left(-n \left(\omega_{v_G} - \omega_{c_G}\right)^2 \right), \label{eqn:pval-simplified}
\end{align}
up to a universal constant factor in the exponent, where the rounding error in line (\ref{eqn:rounding-error}) is negligible. We ignore the factor of 2 in line (\ref{eqn:2-factor}) as it 
is constant for all games and hence does not affect their comparative scaling.
\end{proof}

\subsubsection{Noise-Robustness Framework for Comparative Scores}\label{sec:noise-rob-framework}

\ifthenelse{\boolean{showfigures}}{
\begin{wrapfigure}{r}{0.5\textwidth}
    \centering
    \begin{tikzpicture}[scale=0.5,
        axis/.style={thick, ->},
        note/.style={font=\footnotesize}
    ]
        \node[circle,fill=black,inner sep=2pt] (origin) at (0,0) {};
        
        \draw[axis] (origin) -- (3,0) node[right,label,text width=2cm,align=center] {II: State\\ ($\rho$)};
        \draw[axis] (origin) -- (-3,0) node[left,label,text width=3cm,align=center] {IV: Measurement\\ Operators\\ ($\mathcal{M}_1, \mathcal{M}_2$)};
        \draw[axis] (origin) -- (0,3) node[above,label,text width=2cm,align=center] {I: Weights\\ ($\pi(x,y)$)};
        \draw[axis] (origin) -- (0,-3) node[below,label,text width=2.5cm,align=center] {III: Coefficients\\ ($V(a,b|x,y)$)};
        \node[rotate=90, note,align=center] at (-0.4,1.5) {Game};
        \node[note] at (1.5,0.3) {Strategy};
    \end{tikzpicture}
    \caption[Optimization Axes for Non-Local Games.]{Optimization axes for non-local games, illustrating the four adjustable components of a non-local game configuration. That is, we can optimize or evaluate over the following--I: The weights, $\pi(x,y)$, of the measurements in $\mathbf{S}$, II: Quantum state $\rho$, III: The coefficients derived from $V(a,b|x,y)$ for the operators in $\mathbf{S}$, and IV: The measurement operators $\mathcal{M}_1, \mathcal{M}_2$ used to construct $\mathbf{S}$. }
    \label{fig:NLG-axes}
\end{wrapfigure}
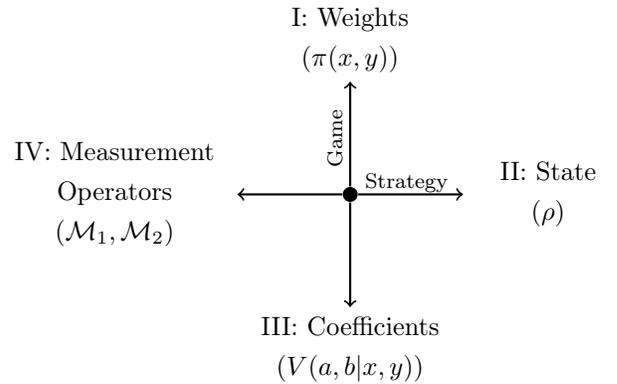}{}
Given that we now have a continuous, comparative score across different games ($\mathfrak{C}_G$), we now use it to develop a framework for studying the noise-robustness of games to noisy input states, and particularly, to noisy channels on the input. Let us first understand a nuance which we glossed over in the prior section. While we have been describing $\mathfrak{C}_G$ as the convincingness of a `game', both the `game' and `strategy' are  hiding within $\omega_{v_G} = \text{Tr}(\mathbf{S}_{G} \rho_{v_G})$ and $\omega_{c_G} = \text{Tr}(\mathbf{S}_{G} \rho_{c_G})$ of  $\mathfrak{C}_G$. That is,
the \textit{state} $\rho$ and the \textit{measurement operators} of $\mathbf{S}$ come purely from the strategy $\mathcal{S} = (\rho, \mathcal{M}_1, \mathcal{M}_2)$. In contrast, the structure of $\mathbf{S}$--the \textit{weights} $\pi(x,y)$ and the \textit{coefficients} (on the measurements) derived from $V(a,b|x,y)$--comes purely from the game definition $G =(\mathcal{X},\mathcal{Y},\mathcal{A},\mathcal{B},\pi,V)$. This means that there are four axes, assuming that the input-output settings and number of players defining the game are fixed, as in Fig. \ref{fig:NLG-axes}. We call a fixed choice of the four axes a \textit{configuration} of a non-local game. This framework provides a natural decomposition for understanding noise-robustness in non-local games: we can conceptualize it both through the analysis of game performance across varying noise levels and through the systematic optimization of game parameters for a single or set of noisy states. Specifically, we can do the following.
\begin{enumerate}
    \item \textbf{Analysis across noise levels (Fig. \ref{subfig:non-fixedII}):} We can vary the state axis (II) while holding all other axes constant. That is, first establish a configuration for which you wish to test noise-robustness. Second, evaluate the comparative score of the configuration at different input state noise levels. In this work, we choose to vary the states according to a quantum noise channel (depolarizing channel), and choose $\mathfrak{C}_G$ for the score as it is comparable among different games and considers the quantity of resources physically available. However, other noisy channels or more complex noise models can be chosen to define the state axis, and other scores can be used if they are comparable among games. By evaluating configurations across different noise levels, this analysis can elucidate differences in robustness: some configurations may only detect entanglement in highly entangled states while failing with weakly entangled states. In contrast, particularly robust configurations can identify entanglement even in weakly entangled states. 
    \item \textbf{Optimization for specific noise regimes (Fig. \ref{subfig:non-fixedIII}):} We can select a specific noisy state or regime (set of states) to define the state axis on (II), and optimize one or more of the axes for that noisy regime. For example, by optimizing the operator coefficient axis (III) for a fixed $\eta'$-noisy state (see Fig. \ref{subfig:non-fixedIII}), we can identify configurations particularly well-suited to specific noise levels, potentially revealing new configurations with enhance noise-robustness. These optimized configurations can then be evaluated for their robustness across noise levels using the first approach.
\end{enumerate}
The convincingness measure also provides an analytical framework to understand this behaviour. For weakly entangled states, $\omega_{v_G}$ approaches $\omega_{c_G}$, resulting in $\mathfrak{C}_G \rightarrow 1$, which mathematically captures their consistency with local behaviour. Conversely, highly entangled states drive $\omega_{v_G}$ toward its maximal quantum value, resulting in $\mathfrak{C}_G \rightarrow 0$ and indicating strong non-local behaviour. For an arbitrary input state $\rho_{in}$, this relationship can be expressed as:
$\mathfrak{C}_{G} \sim \exp \left(-n \left(\text{Tr}(\mathbf{S}_{G} \rho_{in}) - \omega_{c_G}\right)^2 \right)$. Note that if we wish to specify a particular noise level $\eta$ for some noise model, we can substitute an $\eta$-noisy input $\rho_{\eta}$, which achieves the value $\omega_{\eta_G} = \text{Tr}(\mathbf{S}_G\rho_{\eta})$. Moreover, in later sections we derive a single numeric, continuous score based on the simplified $\mathfrak{C}_G$, in order to rank the convincingness behaviour, and hence, noise-robustness, of different games.   

\ifthenelse{\boolean{showfigures}}{\begin{figure}[htbp]
    \centering
    \begin{subfigure}[b]{0.45\textwidth}
        \centering
        \begin{tikzpicture}[
  scale=0.5,
  baseline=(current bounding box.center),
  axis/.style={thick, ->},
  label/.style={font=\small\itshape},
  note/.style={font=\footnotesize} 
]

    \definecolor{pastelred}{RGB}{205,65,65} 
\definecolor{pastelgreen}{RGB}{67,160,71}  

            \node[circle,fill=black,inner sep=2pt] (origin) at (0,0) {};
            
            \draw[very thick, ->, pastelgreen] (origin) -- (3,0) node[right] {(II)};
            \draw[very thick, ->, pastelred]   (origin) -- (-3,0) node[left]  {(IV)};
            \draw[very thick, ->, pastelred]   (origin) -- (0,3)  node[above] {(I)};
            \draw[very thick, ->, pastelred]   (origin) -- (0,-3) node[below] {(III)};
            \node[rotate=90, note, align=center] at (-0.4,1.5) {};
            \node[note,align=center] at (1.5,0.3) {$(\rho_\eta, \;\forall \eta)$};
        \end{tikzpicture}
        \caption{Free State Axis (Analysis)}
        \label{subfig:non-fixedII}
    \end{subfigure}
    \hspace{0.5em} 
    \begin{subfigure}[b]{0.45\textwidth}
        \centering
        \begin{tikzpicture}[
  scale=0.5,
  baseline=(current bounding box.center),
  axis/.style={thick, ->},
  label/.style={font=\small\itshape},
  note/.style={font=\footnotesize}
]

\definecolor{pastelred}{RGB}{205,65,65}  
\definecolor{pastelgreen}{RGB}{67,160,71} 

            \node[circle,fill=black,inner sep=2pt] (origin) at (0,0) {};
            
            \draw[very thick, ->, pastelred]   (origin) -- (3,0) node[right] {(II)};
            \draw[very thick, ->, pastelred]   (origin) -- (-3,0) node[left]  {(IV)};
            \draw[very thick, ->, pastelred]   (origin) -- (0,3)  node[above] {(I)};
            \draw[very thick, ->, pastelgreen] (origin) -- (0,-3) node[below] {(III)};
            \node[rotate=90, note,align=center] at (-0.4,1.5) {};
            \node[note] at (1.5,0.3) {$\rho_{\eta'}$};
        \end{tikzpicture}
        \caption{Free Coefficients (Optimization)}
        \label{subfig:non-fixedIII}
    \end{subfigure}
    \caption[Axes Visualizations for Evaluating Noise-Robustness with a Comparative Score.]{\textbf{Experiments for evaluating noise-robustness with a comparative score.} Illustrating two ways to look at how a game can be robust to noise: robust across a region, or robust for a particular noise type. Green axes are varied and Red axes are fixed. 3a (left): Fixing known (i.e., optimal) weights, coefficients, and measurements while varying the visibility $\eta$ of the input state $\rho_\eta$. 3b (right): Optimizing the coefficients for a pre-selected state with visibility $\eta'$, fixing the rest of the axes. The resulting configuration from 3b can be analyzed using the method in 3a.}
    \label{fig:axis-experiments}
\end{figure}
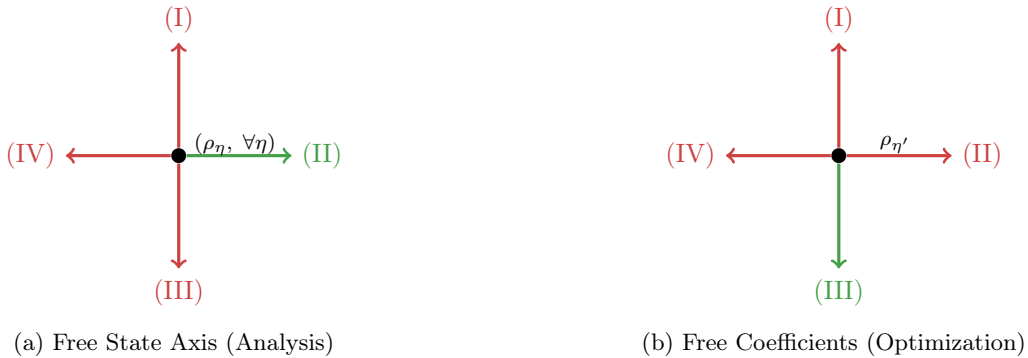}{}

\section{COMPUTATIONAL METHODS}\label{sec:comp-methods}
We demonstrate how to compare the noise-robustness of 2-player games using our noise-robustness framework for comparative scores, and the noise-tolerance. The games we focus on as an example are the CHSH, 2-CHSH, MSG, and optimized 2-CHSH variants (defined in Sec. \ref{sec:2-chsh-opt}). Comparing this set is natural as we can see if parallel repetition of the CHSH (i.e., the 2-CHSH) helps to improve its noise-robustness, and whether the 2-CHSH or its optimized versions are comparable to the MSG which uses the same amount of quantum resources (qubits) in its optimal quantum strategy. We expect these experiments to also provide a good benchmark of the quality of the convincingness for comparing the noise-robustness of games which have different number of inputs, outputs, and input state dimensions. Using the convincingness we can use this analysis to discern whether the non-locality in noisy, partially depolarized states are best detected in the CHSH, 2-CHSH, MSG, or optimized 2-CHSH games. Overall, computationally we do the following:
\begin{enumerate}
    \item Evaluate the convincingness for states with different noise levels, for different games, and compare which game permits more noisy states to be convincing. Plots of such experiments will be shown in Section \ref{sec:results-disc}. 
    \item Search for 2-CHSH configurations which are optimal for particular noise levels, included as well in the plots of (1).
    \item Analyze the convincingness of non-local games at fixed and variable, finite and near-infinite resource regimes.
    \item Evaluate and plot the noise-tolerance of non-local games, relating these results to convincingness results.
\end{enumerate}

\subsection{Analysis Across Noise Levels} \label{sec:analysis-across-noise-levels}
Let $NLG = \{\text{CHSH}, \text{ MSG}, \text{ 2-CHSH}, \text{ 2-CHSH-OPT}\}$, where the first three games are as defined in Sec. \ref{sec:NLGs}, and the 2-CHSH-OPT will be momentarily defined. Also let $\rho_\eta$ be some $\eta$-noisy input to $G \in NLG$ for a noise model with visibility $\eta$ (probability of noise $1-\eta$), yielding score $\omega_{\eta_G} = \text{Tr}(\mathbf{S}_G\rho_\eta)$. $\mathbf{S}_G$ is the operator for $G$ composed of the measurement coefficients, and weights which achieve optimal score for a maximally entangled state $\rho_{{\eta=1}}$  (up to local isometries).  

In this experiment, we fix the configuration of measurements, coefficients, and weights (axes IV, III, I) to yield the maximal score at $\rho_{{\eta=1}}$ for each game. Further, we choose the noise model on the input state to be the depolarizing channel, defining $\rho_{\eta}$ for the 2-qubit and 4-qubit Hilbert subspaces, respectively, as follows:
    \begin{align*}
\varepsilon^{2-qubit}_{\eta}(\rho_{EPR}) = \eta^2 \rho_{EPR} + (1-\eta^2)\frac{\mathbb{I}_4}{4}, \;\;\; \;\;\varepsilon^{4-qubit}_{\eta}(\rho_{EPR}) &= \eta^4(\rho_{\text{EPR}} \otimes \rho_{\text{EPR}}) + 
    \eta^2(1-\eta^2)(\rho_{\text{EPR}} \otimes \frac{\mathbb{I}_4}{4}) \\
    &\quad + \eta^2(1-\eta^2)(\frac{\mathbb{I}_4}{4} \otimes \rho_{\text{EPR}}) +
    (1-\eta^2)^2(\frac{\mathbb{I}_4}{4} \otimes \frac{\mathbb{I}_4}{4}).
\end{align*}
We also now define $\mathfrak{C}_G$ specifically for hypotheses $H_0 = \omega_{c_G}$ (local bound of $G$) and $H_1 = \omega_{\eta_G}$ (score with input $\rho_\eta$). As this is the convincingness we'll be dealing with from now on, we often refer interchangeably to $\mathfrak{C}_G$ and $\mathfrak{C}_G(\eta)$.

With this setup, we take the operator $\mathbf{S}_G$ and compute $\mathfrak{C}_G$ for all $\eta\in [0,1]$. For each game, this produces a curve $\{(\eta, \mathfrak{C}_G) \;\; \forall \eta\in [0,1]\}$, and all curves are plotted for comparison. Importantly, the order in which the curves cross the significance threshold $\alpha$ can be determined from this plot, revealing the maximal noise levels on $\rho_{EPR}$ and $\rho_{EPR} \otimes \rho_{EPR}$ at which each game is convincing of non-locality. 

Hence, the above experiment aims to answer: \textit{How does $\mathfrak{C}_G$ behave as the noise level of a state $\rho$ is varied? Which game is convincing (crosses $\mathfrak{C}_G = \alpha$), for a higher amount of noise (lower visibility, $\eta$)?} Answering the latter reveals which game is most noise-robust by the convincingness measure. Beyond these questions, the convincingness enables our study of how the amount of resources available affects each game's likelihood of detecting non-locality. That is, \textit{given a supply of identical noisy resources ($N_{res}$) to play each game, which game is most convincing for the highest noise?} We study this question both at finite and near-infinite $N_{res}$ regimes. Note that in our experiments, $N_{res}$ defines the number of noisy EPR pairs available. Hence, games with one or two EPR pairs as their winning state can play $N_{res}$ or $N_{res}/2$ rounds of the game, respectively, if we restrict them to use the same number of resources.

\subsection{Optimization for Specific Noise Regimes}\label{sec:ptimization-for-specific-noise-regimes}
We demonstrate how one can optimize a non-local game for noisy states, using the 2-CHSH game as an example. We choose to optimize the 2-CHSH game since, as will be seen in Sec. \ref{sec:results-disc}, the 2-CHSH shows worse noise-robustness than the CHSH but better noise-robustness than the MSG. Hence, the objective of the optimization is to show whether optimizing an axis of the 2-CHSH game's configuration can help improve upon and potentially surpass the CHSH's noise-robustness.

\subsubsection{Optimized 2-CHSH Construction}\label{sec:2-chsh-opt}
The optimization is based on the work from Istv{\'a}n et. al. \cite{marton2023bounding}, extending their linear programming method to find the optimal Bell coefficients (predicate, axis III) that will maximize the achievable quantum violation of the 2-CHSH when subject to a specific level of noise ($\eta'$) in the input state (axis II) (see Sec. \ref{sec:LP} for details on the LP). In theory, thinking about noise-robustness, sequential optimizations along multiple axes can be done while fixing the state axis in order to find a truly optimal configuration. In addition, this method is extendable to optimizing over a noisy regime defined by a noise interval on $\eta$. However, for simplicity and illustrative purposes, we only optimize over the coefficient axis for a specified noisy state (Fig. \ref{subfig:non-fixedIII}). This is also a good opportunity to illustrate the convenience of the noise-robustness framework. Looking to the game axes, we can concisely say that \cite{marton2023bounding} optimizes over the coefficients (axis III) for specified measurements on axis IV, whereas we now optimize the coefficients (axis III) for specified noisy states on axis II holding all other axes constant. Note that we will refer to configurations resulting from this process as Optimized 2-CHSH configurations. 

Let us now define the Optimized 2-CHSH Game (2-CHSH-OPT).  As before, the game has two players and a referee, with $\mathcal{X}_1 = \mathcal{Y}_1 = \mathcal{X}_2 = \mathcal{Y}_2 = \mathcal{A}_1 = \mathcal{B}_1 = \mathcal{A}_2 = \mathcal{B}_2 = \{0,1\}$ and uniform $\pi(x_1,y_1,x_2,y_2) = \frac{1}{16}$. The measurements for the strategy are as defined for the 2-CHSH game. The measurements, are thus, constructed as follows. 
\begin{definition}[2-CHSH-OPT Strategy]\label{def:2-chsh-opt-strategy}
Let us define the single-qubit measurement settings of Alice and Bob to be $A_{a_k}^{x_k}$ and $B_{b_k}^{y_k}$, respectively. Alice's single-qubit basis is defined by $\sigma_x$ and $\sigma_z$, and Bob's by  $\sigma_{x+z} = \frac{\sigma_x + \sigma_z}{\sqrt{2}}$ and $\sigma_{x-z} = \frac{\sigma_x - \sigma_z}{\sqrt{2}}$. From these, we can construct a set of projectors:
\begin{align*}
&A_0^+ \coloneqq \frac{\mathbb{I}_2 + \sigma_x}{2}, \quad A_0^- \coloneqq \frac{\mathbb{I}_2 - \sigma_x}{2}, \quad A_1^+ \coloneqq \frac{\mathbb{I}_2 + \sigma_z}{2}, \quad \;A_1^- \coloneqq \frac{\mathbb{I}_2 - \sigma_z}{2} \\
&B_0^+ \coloneqq \frac{\mathbb{I}_2 + \sigma_{x+z}}{2}, B_0^- \coloneqq \frac{\mathbb{I}_2 - \sigma_{x+z}}{2}, B_1^+ \coloneqq \frac{\mathbb{I}_2 + \sigma_{x-z}}{2}, B_1^- \coloneqq \frac{\mathbb{I}_2 - \sigma_{x-z}}{2}. 
\end{align*}

\noindent Combining these 1-qubit projectors, Alice and Bob's 2-qubit projective measurements are defined, respectively, as: 
\begin{align*}
\mathbb{A} = [ & (A_0^+ \otimes A_0^+), (A_0^+ \otimes A_0^-), (A_0^- \otimes A_0^+), (A_0^- \otimes A_0^-), (A_0^+ \otimes A_1^+), (A_0^+ \otimes A_1^-), (A_0^- \otimes A_1^+), (A_0^- \otimes A_1^-), \\
        &  (A_1^+ \otimes A_0^+), (A_1^+ \otimes A_0^-), (A_1^- \otimes A_0^+), (A_1^- \otimes A_0^-), (A_1^+ \otimes A_1^+), (A_1^+ \otimes A_1^-), (A_1^- \otimes A_1^+), (A_1^- \otimes A_1^-) ],\\ \smallskip
\mathbb{B} = [ & (B_0^+ \otimes B_0^+), (B_0^+ \otimes B_0^-), (B_0^- \otimes B_0^+), (B_0^- \otimes B_0^-), (B_0^+ \otimes B_1^+), (B_0^+ \otimes B_1^-), (B_0^- \otimes B_1^+), (B_0^- \otimes B_1^-)\\
        &  (B_1^+ \otimes B_0^+), (B_1^+ \otimes B_0^-), (B_1^- \otimes B_0^+), (B_1^- \otimes B_0^-), (B_1^+ \otimes B_1^+), (B_1^+ \otimes B_1^-), (B_1^- \otimes B_1^+), (B_1^- \otimes B_1^-) ].
\end{align*}
Moreover, the general input state for the game is $\rho \in \mathbb{C}^{16 \times 16}$.
\end{definition}

We have now fixed the measurement configurations. While the measurements would typically be used to construct the score operator ($\mathbf{S}_{2chsh,\eta'}$) using a set of weights and coefficients, the score is computed using the output of the linear program, which is in a different format. The linear program outputs a Bell matrix, $\mathbf{B}_{\eta'}$, whose elements are the coefficients which have been optimized for a fixed $\eta'$-noisy input; see Sec. \ref{sec:LP} for its construction. The score is computed by multiplying $\mathbf{B}_{\eta'}$ as in Def. \ref{def:2-CHSH-score} with a probability matrix, $\mathbf{P}_{\eta}$, which encodes the strategy (state and measurement) and weights axes. In this way, we consider both the predicate, the weights, and the strategy.
\begin{definition}[2-CHSH-OPT Score] \label{def:2-CHSH-score} Given a 2-CHSH game whose coefficients have been optimized for an $\eta'$-noisy input, its score is defined as: 
    \begin{align}
    \omega_{\eta_{2chsh,\eta'}} = \text{Tr}(\mathbf{P}_{\eta}\bar{\mathbf{B}}_{\eta'}^T) \text{ for } \bar{\mathbf{B}}_{\eta'} \coloneqq \frac{\mathbf{B}_{\eta'} - \mathbf{B}_{\eta'}^{min}}{(\mathbf{B}_{\eta'} - \mathbf{B}_{\eta'}^{min})^{max}}. \label{eqn:score-2chsh-opt}
\end{align}
where $\mathbf{X}^{max}$ and $\mathbf{X}^{min}$ denote the max and min elements of a matrix, $\mathbf{X}$, respectively. $\mathbf{P}_{\eta} \in \mathbb{R}^{16\times16}$ is a probability matrix for which each element is the probability of an outcome, given an $\eta$-noisy input, $\rho_{\eta} \in \mathbb{C}^{16 \times 16}$, and measurement settings $A_i = A_{a_1}^{x_1} \otimes A_{a_2}^{x_2}, B_j = B_{b_1}^{y_1} \otimes B_{b_2}^{y_2}$:
$\mathbf{P}_{\eta}^{element} = P(a_1,a_2,b_1,b_2|x_1,x_2,y_1,y_2) = \text{Tr}(\rho_{\eta} (A_{a_1}^{x_1} \otimes A_{a_2}^{x_2} \otimes B_{b_1}^{y_1} \otimes B_{b_2}^{y_2})) = \text{Tr}(\rho_{\eta} (A_i \otimes B_j))$.
The whole matrix is defined as follows with the uniform probability of input questions $\pi=1/16$:
\begin{align}
\mathbf{P}_{\eta} &=\pi \cdot \sum_{i,j=1}^{\text{16}} \text{Tr}(\rho_{\eta} (A_i \otimes B_j))|e_i\rangle\langle e_j|. \label{eqn:matrix-P}
\end{align}
The matrix $\mathbf{B}_{\eta'} \in \mathbb{R}^{16\times16}$ is the Bell matrix, which encodes the coefficients used to combine the expectations in $\mathbf{P}_{\eta}$.
\end{definition}

We make a few comments about the score construction. Note that $\eta$ represents the noise level of the strategy being tested by the game, and $\eta'$ represents the visibility that the game's coefficients have been optimized for. In our experiment, we once again define the noise model on the state to be the depolarizing channel on the input space: $\rho_{\eta} = \varepsilon^{4-qubit}_\eta(\rho_{EPR})$. Note also that the affine transform on $\mathbf{B}_{\eta'}$ is added after the linear program for our purposes, to ensure that the final score in Eqn. \ref{eqn:score-2chsh-opt} is in $[0,1]$ and, hence, can be used in the convincingness computation. Further, note that the probability matrix is constructed by summing over all possible measurement setting combinations ($16 \times 16$). This is important for the normalization of the score as well. It is also worth mentioning that the indexing of the projector lists in Def. \ref{def:2-chsh-opt-strategy} matters for the code and how $\mathbf{P}_{\eta}^{element}$ is constructed. Measurement $A_i$ corresponds to the $i'th$ element of $\mathbb{A}$ and measurement $B_j$ corresponds to the $j'th$ element of $\mathbb{B}$. For example, $A_0\otimes B_1 = (A_0^+ \otimes A_0^+) \otimes (B_0^+ \otimes B_0^+)$.  

Finally, by Def. \ref{def:2-chsh-opt-bounds}, the local bound of a 2-CHSH-OPT game can be simply computed given the game's Bell matrix and the quantum bound can be simply computed given both the game's Bell matrix and $\mathbf{P}_{\eta=1}$. 

\begin{definition}[2-CHSH-OPT Local and Quantum Bounds] \label{def:2-chsh-opt-bounds}
    For a 2-CHSH-OPT configuration, the quantum bound is $\omega_{\eta_{2chsh,\eta'}}$ for $\eta=1$. The local bound is the maximum score achieved by deterministic strategies using the game's $\bar{\mathbf{B}}_{\eta'}$ (see App. \ref{app:2chsh-opt-local-bound}). By convexity, this maximum over deterministic strategies completely characterizes the local bound.
\end{definition}

\subsubsection{Linear Program}\label{sec:LP}
Now that we have defined the construction for an optimized 2-CHSH game, we may understand how the linear program (LP) is set up to optimize it for a particular noise level in our noise model. We modify the LP and code from \cite{marton2023bounding}. In \cite{marton2023bounding}, the LP maximizes the achievable quantum violation of the 2-CHSH when
subject to noise in the \textit{measurements}. Given a probability matrix $\mathbf{P}(\gamma)$ defined for $\gamma$ noise on measurements (detection efficiency), the LP outputs the optimized Bell coefficients. As $\mathbf{P}(\gamma)$ is defined both by the input state and measurements, we set perfect detection efficiency and encode our depolarizing noise model, $\varepsilon_\eta^{4-qubit}(\rho_{EPR})$, in the probability matrix for a given visibility $\eta'$ as in Eqn. \ref{eqn:matrix-P}. So, $\mathbf{P}_{\eta'} \coloneqq \mathbf{P}(\gamma=1, \eta')$. The objective function and constraints are then the same as \cite{marton2023bounding}, with a modified probability matrix:
\begin{align*}
\text{minimize} \quad & -\sum_{i,j=1}^{16} \mathbf{P}_{\eta'}^{ij}c_{ij} \\
\text{subject to} \quad & \sum_{i,j=1}^{16} L_{k}^{ij}c_{ij} \geq 1 \quad \forall k \in \{1,\ldots,4^8\} \\
& -1 \leq c_{ij} \leq 1 \quad \forall i,j \in \{1,\ldots,16\}
\end{align*}
where the optimization variables $c_{ij}$ represent $\mathbf{B}_{\eta'}^{ij}$, the coefficients for a measurement combination, where $i,j$ range from 1 to 16 to cover all of Alice and Bob's possible measurements, respectively. The constraints ensure that no local strategy violates the Bell inequality, using the vector of local vertices $L$ to encode all deterministic LHV strategies. The code is written in MATLAB \cite{MATLAB}, which we translated and extended into Python \cite{van1995python}. 

\subsection{Noise-Tolerance Analysis}\label{sec:noise-tol-computational-methods}
We compute a simple noise-tolerance analysis for all the games tested. For each game $G \in \{ \text{CHSH}, \text{MSG}, \text{2-CHSH},$ $\text{2-CHSH-OPT Games}\}$, the setup is as follows: 
\begin{enumerate}
    \item For CHSH, set the reference state for the noise-tolerance to be its the winning state, $\rho_{EPR}$, and hence the depolarized input is $\rho_{\eta}=\varepsilon_\eta^{2-qubit}(\rho_{EPR})$. For the other 4-dimensional games, use the tensor extended depolarized input $\rho_{\eta}=\varepsilon_\eta^{4-qubit}(\rho_{EPR})$. The strategy is then: $\mathcal{S}_{G,\eta} = (\rho_{\eta}, \mathbf{S}_G)$ for the input state corresponding to $G$.
    \item Compute a set of scores, $\omega_{\eta_G} = \text{Tr}(\rho_\eta \mathbf{S_G})$, over the interval $\eta \in [0,1]$. 
    \item Plot the $\eta$ values versus the scores for the game, producing a curve.     
    \item Find the $\eta$ at which the game crosses its local bound, $\omega_{c_G}$. Let us rename this  $\eta$ to  $\eta^*$. Then, the game is $\eta^*$-tolerant.  
\end{enumerate}
We can plot all curves on the same axis, made possible by the fact that we evaluate different games for the same $\eta$ values in corresponding noise models. The tolerances of the games can then be compared, allowing us to determine which game is more tolerant to depolarizing noise. That is, we may determine which game detects non-locality for greater depolarization.

\section{ANALYTIC METHODS}\label{sec:analytic-methods}
We also develop a more theoretical analysis. Specifically, we analytically derive a single numeric score which we coin, the \textit{gapped score}, which can be mapped to each game to indicate the order in which games pass the significance threshold. The score's structure not only helps to dissect the reasons for the observed results, but also allows to have a more convenient measure for noise-robustness of a non-local game. It allows us to have a single number to rank the convincingness behaviour of a non-local game across all $\eta \in [0,1]$. Moreover, we offer a method for performing a more nuanced analysis for games which have very similar convincingness behaviour and hence similar gapped scores. 
  
\subsection{Gapped Expressions}
We start with the simplified convincingness expression for an $\eta$-noisy state, which reflects the behaviour of the convincingness for high $n$: $\mathfrak{C}_{G} \sim \exp\left(-n \left(\omega_{\eta_G} - \omega_{c_G}\right)^2 \right)$. In particular, we note that lower $\mathfrak{C}_{G}$ corresponds to a higher \textit{gap}: $\Delta_G = \omega_{\eta_G} - \omega_{c_G}$. Therefore, the gap can be an indicator of $\mathfrak{C}_{G}$ for high $n$. Expanding $\Delta_G$ and $\Delta_G^2$ for the depolarizing channel noise  model, one will find that the expression is at most an eighth degree polynomial, with even degrees:
\begin{align*}
    \Delta_G = d + c_1\eta^2 + c_2\eta^4 - \omega_{c_G}, \;\;\; \Delta_G^2 = z + r_1\eta^2 + r_2\eta^4 + r_3\eta^6 + r_4\eta^8.
\end{align*}
Hence, there are multiple candidates for a numeric score which can model convincingness across noisy regimes. We consider $\Delta_G$, $\Delta_G^2$, the most significant coefficients of both (constants, $\eta^2, \eta^4)$, or some unique transform of the coefficients, constants and local bound. Let us obtain the candidate scores for each game by deriving their coefficients and constants in $\Delta_G$. We start with a convenient way to obtain these for 4-qubit games.
\begin{theorem}
For $G \in \{\text{2-CHSH}, \text{MSG}, \text{2-CHSH-OPT}\}$) receiving input $\varepsilon^{4-qubit}_\eta(\rho_{EPR})$ for $\eta$, the gap is, 
\begin{align}
    \Delta_G = \eta^4 k_\text{ideal} + \eta^2(1-\eta^2)k_\text{part-mixed} + (1-\eta^2)^2 k_\text{mixed} -\omega_{c_G},
\end{align}
\noindent
where $k_\text{ideal}$, $k_\text{part-mixed}$, and $k_\text{mixed}$ are constants specific to each $G$.
\end{theorem}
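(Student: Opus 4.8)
The plan is to reduce the statement to the linearity of every score rule used in the paper, applied to the convex decomposition of the 4-qubit depolarizing channel. First I would recall, from Definition \ref{def:4qbit-dep-channel} and the expansion displayed in Section \ref{sec:analysis-across-noise-levels}, that
\begin{align*}
\varepsilon^{4\text{-}qubit}_\eta(\rho_{EPR}) = \eta^4\,\rho_{EPR}^{\otimes 2} + \eta^2(1-\eta^2)\,\rho_{EPR}\otimes\tfrac{\mathbb{I}_4}{4} + \eta^2(1-\eta^2)\,\tfrac{\mathbb{I}_4}{4}\otimes\rho_{EPR} + (1-\eta^2)^2\,\tfrac{\mathbb{I}_4}{4}\otimes\tfrac{\mathbb{I}_4}{4},
\end{align*}
i.e. a combination of four \emph{fixed} ($\eta$-independent) density operators with the displayed scalar weights.

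Second, I would observe that for each $G$ in the list the score is a \emph{linear functional} of the input state, with all other ingredients held constant. For the 2-CHSH and the MSG this is immediate: $\omega_{\eta_G} = \mathrm{Tr}(\mathbf{S}_G\,\rho_\eta)$ with the fixed Hermitian operator $\mathbf{S}_G$ of Definition \ref{def:2-chsh-strategy-score} and Eqn. \ref{eqn:MSG-operator}, respectively. For the 2-CHSH-OPT the score is $\mathrm{Tr}(\mathbf{P}_\eta\,\bar{\mathbf{B}}_{\eta'}^T)$ from Definition \ref{def:2-CHSH-score}; here $\bar{\mathbf{B}}_{\eta'}$ is fixed once the optimization target $\eta'$ is fixed, and each entry of $\mathbf{P}_\eta$ is $\pi\,\mathrm{Tr}(\rho_\eta(A_i\otimes B_j))$, so $\mathbf{P}_\eta$ — and hence $\mathrm{Tr}(\mathbf{P}_\eta\bar{\mathbf{B}}_{\eta'}^T)$ — is linear in $\rho_\eta$. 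Thus in all three cases there is a fixed linear map $\rho\mapsto\mathsf{s}_G(\rho)$ with $\omega_{\eta_G} = \mathsf{s}_G(\varepsilon^{4\text{-}qubit}_\eta(\rho_{EPR}))$.

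Third, I would apply linearity of $\mathsf{s}_G$ to the four-term decomposition and collect coefficients, setting
\begin{align*}
k_\text{ideal} &= \mathsf{s}_G(\rho_{EPR}^{\otimes 2}), \\
k_\text{part-mixed} &= \mathsf{s}_G\!\big(\rho_{EPR}\otimes\tfrac{\mathbb{I}_4}{4}\big) + \mathsf{s}_G\!\big(\tfrac{\mathbb{I}_4}{4}\otimes\rho_{EPR}\big), \\
k_\text{mixed} &= \mathsf{s}_G\!\big(\tfrac{\mathbb{I}_4}{4}\otimes\tfrac{\mathbb{I}_4}{4}\big),
\end{align*}
each a genuine constant since it is the value of a fixed functional on a fixed state. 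This gives $\omega_{\eta_G} = \eta^4 k_\text{ideal} + \eta^2(1-\eta^2)k_\text{part-mixed} + (1-\eta^2)^2 k_\text{mixed}$, and subtracting the local bound $\omega_{c_G}$, itself independent of $\eta$, yields exactly the claimed expression for $\Delta_G = \omega_{\eta_G}-\omega_{c_G}$. As a closing remark one notes $k_\text{ideal} = \omega_{q_G}$, and that when the construction respects the exchange symmetry of the two EPR blocks (as for 2-CHSH and MSG, and for symmetric optimized configurations) the two summands in $k_\text{part-mixed}$ coincide, so one may equivalently write $k_\text{part-mixed} = 2\,\mathsf{s}_G(\rho_{EPR}\otimes\tfrac{\mathbb{I}_4}{4})$.

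I do not expect a genuinely hard step: the statement is a bookkeeping consequence of linearity of trace. The one place deserving care — what I would flag as the main obstacle — is verifying that the 2-CHSH-OPT score, phrased through the Bell-matrix/probability-matrix pairing together with the affine renormalization $\bar{\mathbf{B}}_{\eta'}$ rather than a direct $\mathrm{Tr}(\mathbf{S}_G\rho)$, is still linear in the input state with everything else frozen; once that is checked, the argument runs uniformly over the three games. I would also explicitly confirm that no hidden $\eta$-dependence enters through $\eta'$, the measurement projectors, or the renormalization constants, all of which are fixed before $\eta$ is varied.
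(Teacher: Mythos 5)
Your proposal is correct and follows essentially the same route as the paper: expand $\varepsilon^{4\text{-}qubit}_\eta(\rho_{EPR})$ into the four fixed states with weights $\eta^4$, $\eta^2(1-\eta^2)$, $(1-\eta^2)^2$, and push the (linear) score functional through the sum, with the only nontrivial step being the verification that the 2-CHSH-OPT score $\mathrm{Tr}(\mathbf{P}_\eta\bar{\mathbf{B}}_{\eta'}^T)$ is linear in the input state --- exactly the step the paper handles by rewriting it as $\tfrac{1}{16}\mathrm{Tr}\bigl(\rho_{in}\sum_{a,b}(A_a\otimes B_b)\bar{\mathbf{B}}_{\eta'}^{a,b}\bigr)$ via linearity and cyclicity of trace. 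Your identification of the constants, including $k_\text{ideal}=\omega_{q_G}$, matches the paper's remarks following the proof.
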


\begin{proof}
    Define $\mathbf{O}$ to be some score operator of a game $G$ and $\pi$ to be some normalization. Then the value for $G$ played with an $\eta$-depolarized input is:
    {\small
\begin{align*}
    \omega_{\eta_{G}} = \pi\text{Tr}\biggl[\varepsilon_{\eta}^{4-qubit}(\rho_{EPR})\mathbf{O}\biggr] &=  \pi\text{Tr}\left[\left(\left(\eta^2\rho_{\text{EPR}} + (1-\eta^2)\frac{\mathbb{I}_4}{4}\right) \otimes 
    \left(\eta^2\rho_{\text{EPR}} + (1-\eta^2)\frac{\mathbb{I}_4}{4}\right)\right) \mathbf{O}\right]\\    &=\eta^4 \text{Tr}\biggl[\pi(\rho_{\text{EPR}} \otimes \rho_{\text{EPR}})\mathbf{O}\biggr] + 
    \eta^2(1-\eta^2)\text{Tr}\biggl[\pi(\rho_{\text{EPR}} \otimes \frac{\mathbb{I}_4}{4})\mathbf{O} + \pi(\frac{\mathbb{I}_4}{4} \otimes \rho_{\text{EPR}})\mathbf{O}\biggr] +
    (1-\eta^2)^2\text{Tr}\biggl[\pi(\frac{\mathbb{I}_4}{4} \otimes \frac{\mathbb{I}_4}{4})\mathbf{O}\biggr]\\
    &\coloneqq \eta^4 k_\text{ideal} + \eta^2(1-\eta^2)k_\text{part-mixed} + (1-\eta^2)^2 k_\text{mixed},
\end{align*}
}
    \noindent for constants $k_\text{ideal}, k_\text{part-mixed}, k_\text{mixed}$. Note that the score for the 2-CHSH and MSG is computed this way. That is, as defined in \ref{def:2-chsh-strategy-score}, for the 2-CHSH $\pi=1/16$ and $\mathbf{O} = \sum_{\substack{x_1,y_1,x_2,y_2,a_1,b_1,a_2,b_2 \\ (a_1\oplus b_1 = x_1\wedge y_1)\wedge(a_2\oplus b_2 = x_2\wedge y_2)}}((M_{a_1|x_1} \otimes M_{a_2|x_2}) \otimes (M_{b_1|y_1} \otimes M_{b_2|y_2}))\biggr)$ where the sum is over all measurement combinations: 
    {\small
    \begin{align*}
        (a_i,b_i,x_i,y_i) \in \{(0, 0, 0, 0), (1, 1, 0, 0), (0, 0, 0, 1), (1, 1, 0, 1), (0, 0, 1, 0), (1, 1, 1, 0), (0, 1, 1, 1), (1, 0, 1, 1)\}.
    \end{align*}}
    Further, the operator for the MSG is Eqn. \ref{eqn:MSG-operator} with $\pi=1/18$. Since 2-CHSH-OPT games deal with separate objects, it isn't clear if the above applies. Hence, we derive the value of an Optimized 2-CHSH game: 
{\small
\begin{align*}
\omega_{\eta_{2chsh, \eta'}} = \text{Tr}(\mathbf{P}_\eta \bar{\mathbf{B}}_{\eta’}^T) &= \text{Tr} \left[\frac{1}{16} \left(\sum_{a,b=1}^{16} \text{Tr}(\rho_{in} (A_a \otimes B_b))|e_a\rangle\langle e_b|\right)\left(\sum_{i,j=1}^{16} \bar{\mathbf{B}}_{\eta’}^{i,j} |e_j\rangle\langle e_i|\right)\right]\\
&= \frac{1}{16} \text{Tr}\left(\rho_{in}\sum_{a,b=1}^{16} (A_a \otimes B_b) \bar{\mathbf{B}}_{\eta’}^{a,b}\right) & \text{(by linearity \& cyclicity of trace)}\\
&= \frac{1}{16}\left(\eta^4 \text{Tr}[(\rho_{\text{EPR}}^{\otimes 2})\sum_{a,b=1}^{16} (A_a \otimes B_b) \bar{\mathbf{B}}_{\eta'}^{a,b}] + (1-\eta^2)^2\text{Tr}[(\frac{\mathbb{I}_4^{\otimes 2}}{16})\sum_{a,b=1}^{16} (A_a \otimes B_b) \bar{\mathbf{B}}_{\eta'}^{a,b}] \right. \\
&\quad \left. + \eta^2(1-\eta^2)\text{Tr}[(\rho_{\text{EPR}} \otimes \frac{\mathbb{I}_4}{4} + \frac{\mathbb{I}_4}{4} \otimes \rho_{\text{EPR}})\sum_{a,b=1}^{16} (A_a \otimes B_b) \bar{\mathbf{B}}_{\eta'}^{a,b}]\right)\\
&\coloneqq \eta^4 k_\text{ideal} + \eta^2(1-\eta^2)k_\text{part-mixed} + (1-\eta^2)^2 k_\text{mixed},
\end{align*}}
where intermediate steps are shown in App. \ref{app:2-chsh-opt-score-deriv}. Therefore, for $G \in \{\text{2-CHSH}, \text{MSG}, \text{2-CHSH-OPT}\}$, $\Delta_G = \eta^4 k_\text{ideal} + \eta^2(1-\eta^2)k_\text{part-mixed} + (1-\eta^2)^2 k_\text{mixed} - \omega_{c_G}$ for constants $k_\text{ideal}$, $k_\text{part-mixed}$, and $k_\text{mixed}$ specific to each $G$. As desired.
\end{proof}

Note that: (1) \(k_{\text{ideal}}\) is the probability that both EPR pairs are noise-free (i.e., the quantum bound of \(G\) with the maximally entangled state); (2) \(k_{\text{part-mixed}}\) is the probability that exactly one pair is noisy; and (3) \(k_{\text{mixed}}\) is the probability that both pairs are noisy (the score with the maximally mixed state). Following the method in the proof, we obtain:
    \begin{align}
        \omega_{\eta_{2chsh}} &= 0.72855\eta^4  + 
    0.85355\eta^2(1-\eta^2) +
    0.25000(1-\eta^2)^2 \\
        \omega_{\eta_{2chsh,\eta'=1}} &= 0.10937\eta^4  + 0.30936\eta^2(1-\eta^2) + 0.21875(1-\eta^2)^2\\
        \omega_{\eta_{msg}} &= \eta^4 + 1.22222\eta^2(1-\eta^2) + 0.5(1-\eta^2)^2 
    \end{align}
where we show the 2-CHSH optimized for $\eta'=1$ as an example. Recall that $\eta'$ is the visibility used to optimize and fix a 2-CHSH-OPT's Bell matrix, and $\eta$ is the visibility of the input used to play the now fixed configuration. For CHSH, we compute its value independently for input $\varepsilon_{\eta}^{2-qubit}(\rho_{EPR})=\eta^2 \rho_{\text{EPR}} + \frac{\mathbb{I} _4}{4}(1-\eta^2)$:
\begin{align}
\omega_{\eta_{chsh}} = \frac{1}{8}\text{Tr}(\mathbf{S}_{G_{chsh}} \varepsilon_{\eta}^{2-qubit}(\rho_{EPR})) + \frac{1}{2} 
    = \eta^2 \frac{\text{Tr}(\mathbf{S}_{G_{chsh}} \rho_{\text{EPR}})}{8} +  0.25 (1-\eta^2) \frac{\text{Tr}(\mathbf{S}_{G_{chsh}})}{8}  + \frac{1}{2} 
    = \frac{\sqrt{2}}{4} \eta^2  + \frac{1}{2},
\end{align}
where the last equality follows by definition of $\mathbf{S}_{G_{chsh}}$ (the expectation value that $\mathbf{S}_{G_{chsh}}$ yields given either the EPR pair, $2\sqrt{2}$, or maximally mixed state as an input state, $0$). Hence, we can observe from the final equation that the quantum value of the CHSH is 0.85355, as expected. Note that the CHSH value also has coefficients corresponding to the probabilities of maximally entangled and maximally mixed input states.
The full $\Delta_G$ expressions for all games considered are shown in Table \ref{tab:gapped-expressions-new}. Further, App. \ref{app:A} contains the raw $\omega_{\eta_G}$ unsimplified expressions showing the explicit $k$ constants.\\

\ifthenelse{\boolean{showfigures}}{
\begin{table}[htbp] 
    \centering
    \caption{Gapped Expressions for Non-Local Games}
    \label{tab:gapped-expressions-new}
    \begin{tabular}{
        >{\raggedright\arraybackslash}p{0.45\linewidth}  
        >{\raggedright\arraybackslash}p{0.45\linewidth}  
    }
    \toprule
    \textbf{$G$} & \textbf{$\Delta_G = \omega_{\eta_G} - \omega_{c_G}$} \\
    \midrule
    CHSH & $0.35355 \eta^2  +0.5 - 0.75$ \\ 
    
    \addlinespace[0.5em]
    2-CHSH & $0.35355\eta^2 + 0.12500\eta^4  + 0.25000 - 0.625$\\
    
    \addlinespace[0.5em]
    2-CHSH-OPT $\eta'=1$ & $0.30936 \eta^2 + 0.10937 \eta^4 + 0.21875 - 0.546875$\\
    
    \addlinespace[0.5em]
    2-CHSH-OPT $\eta'=0.95$ & $0.25411 \eta^2 -0.00097 \eta^4 + 0.36133 - 0.5625$ \\
    
    \addlinespace[0.5em]
    2-CHSH-OPT $\eta'=0.90$ & $0.25411 \eta^2 -0.00097 \eta^4 + 0.36133 - 0.546875$\\
    
    \addlinespace[0.5em]
    2-CHSH-OPT $\eta'=0.89$ &  $0.24307 \eta^2 + 0.34375 - 0.515625$\\
    
    \addlinespace[0.5em]
    2-CHSH-OPT $\eta'=0.88$ &  $0.24307 \eta^2 + 0.34375 - 0.515625$\\
    \addlinespace[0.5em]
    2-CHSH-OPT $\eta'=0.87$ &  $0.24307 \eta^2 + 0.34375 - 0.515625$\\
    \addlinespace[0.5em]
    2-CHSH-OPT $\eta'=0.86$ &  $0.24307 \eta^2 + 0.34375 - 0.515625$\\
    \addlinespace[0.5em]
    2-CHSH-OPT $\eta'=0.85$ &  $0.24307 \eta^2 + 0.34375 - 0.515625$\\
    
    \addlinespace[0.5em]
    2-CHSH-OPT $\eta'=0.84$ &  $0.53125 - 0.53125$\\
    \addlinespace[0.5em]
    2-CHSH-OPT $\eta'=0.83$ &  $0.53125 - 0.53125$\\
    
    \addlinespace[0.5em]
    MSG & $0.22222 \eta^2 + 0.27778 \eta^4 + 0.5 - 0.94444$\\
    \bottomrule
    \end{tabular}
    \begin{flushleft}
    \end{flushleft}
    \vspace{-1.5em}
\end{table}}{}
We note that for 2-CHSH-OPT configurations, the $\eta'$ values were selected as the results were plotted, having a more fine-grained analysis between $0.83-0.89$ as this region had some of the most convincing configurations, and eventually became unconvincing for $\eta'=0.83,0.84$. Also note that throughout the rest of the work, the $\Delta_G$ expressions are re-written with explicit $\eta$-dependence, as in Table \ref{tab:gapped-expressions-new}, and are referred to as both \textit{gapped expressions} and \textit{noise-dependent gaps.}  
\subsection{Gap-Based Scoring}\label{sec:gapped-score}
Using the gapped expressions in Table \ref{tab:gapped-expressions-new}, we explored multiple approaches for constructing a single numerical score for a non-local game, such that the score is a proxy measure for its convincingness in comparison to other games. In particular, we searched for a score which reflects the following intuition: if game $G_1$ scores higher than game $G_2$, then $G_1$ should become statistically convincing at a lower $\eta$ value than $G_2$. This aligns with our goal of identifying games that are more robust to experimental imperfections while still providing strong evidence of non-locality, as a higher score for $G_1$ would imply that $G_1$ remains convincing at higher noise levels. This can be formalized as follows: 
\begin{definition}[Significance Crossing]\label{def:sig-crossing}
    For a game $G$, let $\eta^\dagger_G = \inf{\eta \in [0,1] : \mathfrak{C}_G(\eta) \leq \alpha}$ be the visibility at which the game first becomes statistically significant at level $\alpha$, and refer to $\eta_G^\dagger$ as $G$'s `significance crossing'. 
\end{definition}
\begin{question}\label{question:gapped-score-behaviour}
Can we define a metric $\kappa_G$ to rank games such that for any two games $G_1$ and $G_2$:
$\kappa_{G_1} > \kappa_{G_2} \implies \eta_{G_1}^\dagger < \eta_{G_2}^\dagger \pm \epsilon$?
 $\epsilon$ is a small tolerance parameter accounting for numerical precision and cases where the convincingness curves $\mathfrak{C}_{G_1}(\eta)$ and $\mathfrak{C}_{G_2}(\eta)$ are nearly identical near their crossing points. 
\end{question}

Ideally, we want to use $\kappa_G$ to correctly rank the order of the significance crossings for different games, which would, importantly, rank the games' convincing abilities under our noise model. We construct a new score which we coin the \textit{gapped score} (Def. \ref{def:quad-score-dep}), inspired by convincingness, for modeling the above behaviour. As will be shown in Sec. \ref{sec:results-disc}, out of the existing ratio and gap-inspired scores explored, Def. \ref{def:quad-score-dep} best captures the desired behaviour of $\kappa_G$. The optimal candidate for $\kappa_G$, is presented here, which is based on noise-dependent gaps.
\begin{definition} \label{def:quad-score-dep}
    [Quadratic Noisy Gapped Ratio Scoring: Depolarizing Channel] For a non-local game $G$ whose input undergoes the depolarizing channel with probability of no noise $\eta$, $G$ has the \textbf{gapped expression}: \[\omega_{\eta_G} - \omega_{c_G} = c_1 \eta^2 + c_2\eta^4 + d - \omega_{c_G}.\] Then we define the \textbf{gapped score} of $G$, played with $N_{res}$ $\eta$-noisy inputs as:
\begin{equation}
\kappa_G \coloneqq \left(\frac{c_1 + c_2}{|d-\omega_{c_G}|}\right)^2 \cdot \frac{N_{res}}{-\ln(\alpha)}
\end{equation}
where the parameter $\alpha$ defines the significance threshold.
\end{definition}

\noindent Note that in contrast to the convincingness measure, which must be evaluated separately for each noise level, the gapped score provides a single number that characterizes a game's overall convincingness behaviour for all possible noise levels.

We now explain the intuition behind the design of the gapped score. The score in Def. \ref{def:quad-score-dep} captures essential features that determine when a game becomes statistically significant for non-locality. The first factor, $\left(\frac{c_1 + c_2}{|d-\omega_{c_G}|}\right)^2$, captures the relative strength of the noise-dependent terms versus the constant separation in the gap expression:
\begin{itemize}
    \item The numerator, $c_1+c_2$, represents the ``strength'' of the $\eta$-dependent terms. We use $c_1+c_2$  rather than a weighting of the terms to capture the full contribution of the noise-dependent terms when $\eta=1$ (i.e., when there is no noise), ensuring that the score reflects the maximum possible rate of gap growth under ideal conditions.
    \item The denominator $|d - \omega_{c_G}|$ represents the ``baseline separation'' that needs to be overcome. 
\end{itemize}
The ratio between these terms indicates how quickly the gap expression can cross the significance threshold as $\eta$ increases. It enforces that games with higher scores have lower $\eta^\dagger$, indicating greater robustness to noise. Moreover, the quadratic relationship amplifies the differences between gap expressions, providing a more informative score. We do find that enforcing a linear relationship instead is also characteristic of the games' relative $\eta^\dagger$'s, albeit, yielding less distinct scores.

Further, the scaling factor, $\frac{N_{res}}{-\ln(\alpha)}$, relates our score to the experimental conditions needed for statistical significance. To understand where this comes from, consider $\Delta_G$ becomes significant when its $p$-value, which we model by $p \sim \exp(-N_{res} (\omega_{\eta_G} - \omega_{c_G})^2)$, falls below $\alpha$. Taking the negative logarithm of both sides of $\exp(-N_{res} (\omega_{\eta_G} - \omega_{c_G})^2) < \alpha$, we find that significance occurs when $(\omega_{\eta_G} - \omega_{c_G})^2 > \frac{-\ln(\alpha)}{N_{res}}$. The scaling factor in our score is the inverse of this threshold, which aligns with intuition: more experimental resources ($N_{res}$) make it easier to achieve significance and thus increase the score, while stricter significance requirements (smaller $\alpha$) make it harder and thus decrease the score.

Putting it all together, higher scores indicate earlier crossing of the significance threshold. If $\mathfrak{C}_G$ crosses the threshold earlier, we expect a larger coefficient strength and smaller baseline separation yielding a higher $\kappa_G$. We also expect increased experimental resources to increase $\kappa_G$. Of course, a greater $\alpha$ will increase $\kappa_G$ as well. 

Note that the definition of $\kappa_G$ can be extended to a different noise model by deriving a gapped expression for that model. Only the degree of the polynomial, and the corresponding coefficients, $c_i$, and constant, $d$, will differ. Simply summing all of the new model's coefficients as is for the numerator may work, or a weighting may need to be derived to reliably capture the significance crossing behaviour of that noise model, by a similar process to that presented in Sec. \ref{sec:results-disc}.

\subsection{Polynomial Comparisons} \label{sec:poly-comparison}
As will be seen in Section \ref{sec:results-disc}, the gapped score achieves the desired behaviour, but has a limitation that games with extremely similar convincingness behaviour have the same or similar scores. A more fine-grained analysis for games that are similar in convincingness behaviour can be done via subtracting the gapped expressions and analyzing the behaviour of the resulting curve. Let us demonstrate this approach by comparing the standard 2-CHSH and the ($\eta'=1$)-optimized 2-CHSH games. Define the difference function $D(\eta)
  \;:=\;
  \Delta_{2chsh}(\eta)
  \;-\;
  \Delta_{2chsh, \eta'=1}(\eta)$, where,
\begin{align*}
  \Delta_{2chsh}(\eta)
  &=
  0.35355\eta^2 + 0.12500\eta^4 -0.375,\\
  \Delta_{2chsh, \eta'=1}(\eta)
  &=
  0.30936 \eta^2 + 0.10937 \eta^4 -0.32813.
\end{align*}
This yields $D(\eta)
  =
  0.04419\eta^2 + 0.01563\eta^4 -0.04688.$ Now, the behaviour of $D(\eta)$ reveals several key insights. At maximal noise ($\eta=0$), we have $D(0)=-\,0.04688<0$, indicating that the  ($\eta'=1$)-optimized 2-CHSH has a larger gap. Conversely, in the noiseless limit ($\eta=1$), we find $D(1)= 0.04419 +0.01563 -0.04688 = +0.01294>0$, showing the standard 2-CHSH achieves a larger gap. 
To characterize the complete behaviour of $D(\eta)$, note that its derivative $D'(\eta) = 0.08838\eta + 0.06252\eta^3$ is strictly positive for all $\eta > 0$. Therefore, $D(\eta)$ is strictly increasing on $[0,1]$, and by the Intermediate Value Theorem, must cross zero exactly once on this domain. Numerically, this crossing occurs at $\eta\approx 0.91$, leading to the conclusion:
\[
  D(\eta)<0 \quad\text{for}\;\eta<0.91,
  \quad
  D(\eta)>0 \quad\text{for}\;\eta>0.91,
\]
These results reveal a fundamental trade-off: the $\eta'=1$ optimized 2-CHSH performs better (has a larger gap) in high-noise regimes ($\eta<0.91$), while the standard 2-CHSH achieves superior performance in low-noise regimes ($\eta>0.91$). This suggests that optimization of the two-copy measurement bases enhanced the detection of non-locality under moderate noise, at the cost of reduced violation magnitude at the noiseless limit. 

We do note that $D(0)$ and $D(1)$ have relatively small values, which indicates the two games have very similar convincingness behaviour--as expected, since the 2-CHSH-OPT game is optimized for no noise on the input state. As we will see, the above conclusions match exactly the behaviour observed in the simulation of convincingness in Sec. \ref{sec:results-disc} (Fig. \ref{fig:pvals-N=1000,N=10000}, left). Despite the fact that both curves for the exact convincingness score are nearly overlapping, the gap accurately describes the behaviour. Therefore, this polynomial comparison method complements the gapped score approach: while the gapped score identifies which games first cross the significance threshold, direct polynomial comparison reveals the detailed comparative behaviour of games and configurations which have highly overlapping convincingness curves.

\section{RESULTS AND DISCUSSION}\label{sec:results-disc}
We now discuss the results of the noise-robustness analysis, first summarizing the overall behaviours. The results of the convincingness analyses indicate that the gapped score is a reliable measure of the noise-robustness of a non-local game for \textit{stable} resource regions, to be defined below, and more accurately reflects the order of games' significance crossings ($\eta^\dagger_G$'s) compared to the ratio ($\omega_{{\eta=1}_G}/\omega_{c_G}$), the gap between the quantum bound and local bound ($\omega_{{\eta=1}_G} - \omega_{c_G}$), and all other new measures proposed in this work. Further, the noise-tolerance behaviour observed exactly matches the trends observed for the convincingness under near-infinite resources, suggesting that the convincingness is a more nuanced measure of noise-robustness than noise-tolerance. Note that simulations used Python \cite{van1995python}, MATLAB \cite{MATLAB} and QuTiP \cite{Johansson_2012}.

\subsection{Convincingness Analysis}

Figs. \ref{fig:pvals-N=1000,N=10000} and \ref{fig:pvals-N=100K,N=1M} exhibit the results for the convincingness analysis across depolarized noise levels of the EPR pair(s). Let us recall the setup  (from Sec. \ref{sec:analysis-across-noise-levels}). Each non-local game is played at different visibilities, $\eta$, with input $\rho_{\eta} = \varepsilon^{2-qubit}_\eta(\rho_{EPR})$ for CHSH and $\rho_{\eta} = \varepsilon^{4-qubit}_\eta(\rho_{EPR}\otimes\rho_{EPR})$ for the MSG and all 2-CHSH configurations, yielding a $\omega_{\eta_G}$ for each round. Algorithm $\ref{algo:convincingness}$ is then used to evaluate each $\mathfrak{C}_G(\eta) \;\; \forall \eta\in [0,1]$, using $N_{res}$ noisy EPR pairs for each game. This translates to $N_{res}$ and $\frac{N_{res}}{2}$ rounds played for CHSH and the rest of the games, respectively. The curves $\{(\eta, \mathfrak{C}_G) \;\; \forall \eta\in [0,1]\}$ are shown in Figs. \ref{fig:pvals-N=1000,N=10000} and \ref{fig:pvals-N=100K,N=1M} for different $N_{res}$ values. The 2-CHSH-OPT configurations were optimized for particular visibilities ($\eta'\in\{0.85, 0.86, \dots, 1\}$), using the method in Sec. \ref{sec:ptimization-for-specific-noise-regimes} and the curves for these games are also plotted. Note that $\Delta_G =0$ for 0.83- and 0.84-optimized games, so they are completely unconvincing configurations and hence excluded from figures. 

Let us now discuss the results from the convincingness simulations. One of the most important  properties on the figures is the visibility levels at which $\mathfrak{C}_G = \alpha$, the significance crossings  $\eta^\dagger_G$ (\ref{def:sig-crossing}). Note that our $\alpha=0.05$. The $\eta^\dagger_G$ crossings are the visibility levels for which a game (or configuration), $G$, becomes convincing of non-locality, or in other words, will be highly likely to detect non-locality, for the given input state  $\rho_{\eta^\dagger_G}$. Based on analyzing the order of the crossings of different games, we can deduce a number of major observations: 

\begin{observation}\label{obs:CHSH-best-constant-Nres}
Let $\{\text{2-CHSH}, \text{MSG}, \text{2-CHSH-OPT configurations}\}$ be the set of candidate games and configurations being tested for noise-robustness, where lower $\eta^\dagger_G$ implies greater `noise-robustness'. Then, CHSH is the most noise-robust candidate to the depolarizing channel, assuming an equal number of noisy resources is available to all candidates.   
\end{observation}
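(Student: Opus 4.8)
The plan is to recast ``most noise-robust'' as ``smallest significance crossing $\eta_G^\dagger$'' (Def.~\ref{def:sig-crossing}) and to compare the crossings analytically through the gapped expressions of Table~\ref{tab:gapped-expressions-new}. First I would use the asymptotic form $\mathfrak{C}_G(\eta) \sim \exp\bigl(-n_G\,(\omega_{\eta_G}-\omega_{c_G})^2\bigr)$ from Theorem~\ref{lemma:conv-approx}: writing $\Delta_G(\eta) = \omega_{\eta_G}-\omega_{c_G}$, game $G$ is convincing at visibility $\eta$ exactly when $\Delta_G(\eta)>0$ and $n_G\,\Delta_G(\eta)^2 \ge -\ln\alpha$, where $n_G$ is the number of \emph{rounds} played on the shared budget of noisy EPR pairs. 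The decisive asymmetry is that CHSH spends one EPR pair per round while the $2$-CHSH, the MSG, and every $2$-CHSH-OPT configuration spend two, so $n_{\mathrm{chsh}}=N_{res}$ but $n_G = N_{res}/2$ for every competitor. Setting $\tau := \sqrt{-\ln\alpha/N_{res}}$, CHSH becomes convincing once $\Delta_{\mathrm{chsh}}(\eta)\ge \tau$, whereas every other candidate needs $\Delta_G(\eta)\ge \sqrt{2}\,\tau$; this built-in factor-$\sqrt{2}$ handicap is what does most of the work.

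Second, I would note that each $\Delta_G(\eta)=c_1\eta^2+c_2\eta^4+(d-\omega_{c_G})$ is strictly increasing on $[0,1]$, since its derivative $2c_1\eta+4c_2\eta^3$ is non-negative there for every coefficient pair in Table~\ref{tab:gapped-expressions-new} (indeed $|c_2|\le c_1/2$ in every row). Hence $\eta_G^\dagger$ is the unique visibility at which $\Delta_G$ meets its threshold, or $\eta_G^\dagger=\infty$ when $\Delta_G(1)$ is already below threshold — in which case the comparison against CHSH is trivial, and one checks separately that $\Delta_G(1)\le\Delta_{\mathrm{chsh}}(1)$ for all candidates, so CHSH never fails alone. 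Putting $u=\eta^2\in[0,1]$ and using $u^2\le u$, I would bound $\Delta_G(\eta)\le (c_1+c_2^{+})\,u + (d-\omega_{c_G})$ with $c_2^{+}=\max(c_2,0)$, which converts the equation defining $\eta_G^\dagger$ into the lower bound
\[
(\eta_G^\dagger)^2 \;\ge\; \frac{\sqrt{2}\,\tau + |d-\omega_{c_G}|}{c_1+c_2^{+}},
\]
an affine function of $\tau$. For CHSH ($c_2=0$) the root is exact, $(\eta_{\mathrm{chsh}}^\dagger)^2 = (\tau+0.25)/0.35355 = 2\sqrt{2}\,\tau + 1/\sqrt{2}$. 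The proof then collapses to a finite check over Table~\ref{tab:gapped-expressions-new}: for each competitor verify that the $\tau$-slope $\sqrt{2}/(c_1+c_2^{+})$ and the intercept $|d-\omega_{c_G}|/(c_1+c_2^{+})$ of its lower bound are both at least CHSH's values $2\sqrt{2}$ and $1/\sqrt{2}$, with strict inequality somewhere. This holds because $c_1+c_2^{+}\le 1/2$ in every row (so the slope is $\ge 2\sqrt2$, strictly except for the MSG where it equals $2\sqrt2$) and every constant-to-coefficient ratio is $\ge 1/\sqrt{2}\approx 0.707$ (strictly except for the $\eta'\in\{0.85,\dots,0.89\}$ configurations, see below). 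Consequently $(\eta_G^\dagger)^2 > (\eta_{\mathrm{chsh}}^\dagger)^2$ for every finite $N_{res}$, which is exactly the assertion. (The same conclusion can be packaged through the gapped score $\kappa_G$ of Def.~\ref{def:quad-score-dep}, using that larger $\kappa$ forces smaller $\eta^\dagger$ and checking that $\kappa_{\mathrm{chsh}}$ is the maximum over Table~\ref{tab:gapped-expressions-new}.)

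The hard part will be the family of $2$-CHSH-OPT configurations optimized for $\eta'\in\{0.85,\dots,0.89\}$, whose gap is $0.24307\,\eta^2 - 0.171875$. Here the intercept $0.171875/0.24307$ matches CHSH's $1/\sqrt{2}\approx 0.70710$ to the displayed precision — in fact one expects $c_1=11\sqrt{2}/64$ and $|d-\omega_{c_G}|=11/64$ exactly, making the ratio \emph{exactly} $1/\sqrt{2}$ — so these configurations \emph{tie} CHSH in the infinite-resource limit $\tau\to 0$ rather than being strictly beaten. This is precisely the $\pm\epsilon$ leeway anticipated in Question~\ref{question:gapped-score-behaviour}. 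To keep the statement rigorous I would (i) confirm those exact coefficient values from the linear program of Sec.~\ref{sec:LP}, so the intercept comparison is a genuine equality rather than a reversal hidden by rounding, and (ii) invoke that the $\tau$-slope of these configurations, $\sqrt{2}/0.24307 = 64/11 \approx 5.82$, strictly exceeds CHSH's $2\sqrt{2}\approx 2.83$, which restores the strict inequality $\eta_{\mathrm{chsh}}^\dagger < \eta_G^\dagger$ for every \emph{finite} $N_{res}$. Apart from resolving this near-degeneracy, the argument is just the monotonicity bookkeeping together with CHSH's factor-$\sqrt{2}$ round-count advantage.
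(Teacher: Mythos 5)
Your proposal takes a genuinely different route from the paper. The paper does not prove Observation \ref{obs:CHSH-best-constant-Nres} analytically at all: its support is the exact binomial simulation of Algorithm \ref{algo:convincingness} at $N_{res}\in\{10^3,10^4,10^5,10^6\}$ (Figs.~\ref{fig:pvals-N=1000,N=10000}--\ref{fig:pvals-N=100K,N=1M}), from which the ordering $\eta^\dagger_{chsh}<\eta^\dagger_F$ is read off; the analytic machinery ($\Delta_G$, $\kappa_G$) is introduced afterwards only as an approximate predictor, and the paper itself warns (Observations \ref{obs:unstable-crossing} and the $\alpha\approx 0.5$ discussion) that these predictors are unreliable in the unstable low-$N_{res}$ regime. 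Your finite check over Table~\ref{tab:gapped-expressions-new} — affine-in-$\tau$ lower bounds on $(\eta_G^\dagger)^2$, slope/intercept comparison against CHSH's exact root, and the explicit handling of the near-degenerate $\eta'\in\{0.85,\dots,0.89\}$ family — is correct arithmetic and, within its model, does deliver the strict ordering for all $\tau>0$. It would buy something the paper does not have: a closed-form argument covering all $N_{res}$ rather than four sampled values. The identification of the $11\sqrt{2}/64$ versus $11/64$ tie as the delicate case is a genuinely sharp observation that the paper only registers empirically as Eqn.~\ref{eqn:observation-2chshopt-bounded-chsh}.

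The genuine gap is the step ``$G$ is convincing at visibility $\eta$ exactly when $n_G\,\Delta_G(\eta)^2\ge -\ln\alpha$.'' Theorem~\ref{lemma:conv-approx} is a Hoeffding-type \emph{upper} bound on the tail, distribution-free and hence missing the variance of the null binomial. The exact tail that Algorithm~\ref{algo:convincingness} computes has exponent governed by $\Delta_G^2/\bigl(2\omega_{c_G}(1-\omega_{c_G})\bigr)$ to leading order, and $\omega_{c_G}(1-\omega_{c_G})$ is not uniform across the candidates: it is $0.1875$ for CHSH, about $0.25$ for the 2-CHSH-OPT configurations, and only $0.052$ for the MSG ($\omega_c=17/18$). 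Folding this in, the threshold each competitor must meet is $\Delta_G\ge\sqrt{2\omega_{c_G}(1-\omega_{c_G})}\cdot\sqrt{2}\,\tau$ rather than $\sqrt{2}\,\tau$, so your ``built-in factor-$\sqrt{2}$ handicap'' that ``does most of the work'' is overstated for the OPT configurations (true factor $\approx 1.63$, which helps you) and actually \emph{reversed} for the MSG, which needs a \emph{smaller} absolute gap than CHSH ($\approx 0.46\tau$ versus $\approx 0.61\tau$) because its null distribution is so concentrated. The conclusion survives for MSG only because its intercept $0.889$ is far above CHSH's $0.707$, a check your write-up does not perform with the corrected thresholds; and near the CHSH/OPT tie, where the intercepts agree to five decimal places, the neglected variance factors and subexponential prefactors are precisely the terms one cannot drop. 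To close the argument you would need two-sided binomial tail bounds with the variance-dependent exponent made explicit, redo the slope/intercept table with the corrected per-game thresholds, and restrict (as the paper does) to the stable-$N_{res}$ regime, since Table~\ref{tab:game_comparison} and Fig.~\ref{fig:crossings-phase-changes} show the empirical ordering at $N_{res}=500$--$1000$ does not match any smooth analytic prediction.
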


Let us discuss Observation \ref{obs:CHSH-best-constant-Nres}. If we look at Figs. \ref{fig:pvals-N=1000,N=10000} and \ref{fig:pvals-N=100K,N=1M}, where the $N_{res}$ used by each game to prove non-locality is the same (i.e., $N_{res} = 1000, 10000,100000, \text{ or } 1000000$), then the CHSH has the lowest $\eta^\dagger$ out of all candidates. That is, \[\eta^\dagger_{chsh} < \eta^\dagger_F \; \forall F \in \{\text{2-CHSH}, \text{MSG}, \text{2-CHSH-OPT configurations}\}.\] Moreover, consider the 2-CHSH games optimized for $\eta' \in \mathcal{Q} = \{0.85, 0.86, \dots,0.89\}$, which have approximately the same $\eta^\dagger$ as each other. A pattern is seen across all experiments. As $N_{res}$ grows, the $\mathfrak{C}$ curves of these particular 2-CHSH optimized games grow closest to $\eta^\dagger_{chsh}$ out of all other games, but are unable to beat it. That is, as $N_{res} \rightarrow \infty$, we see:
{\small{
\begin{align} \label{eqn:observation-2chshopt-bounded-chsh}
\eta^\dagger_{2chsh,0.85} \approx \eta^\dagger_{2chsh,0.86} \approx \eta^\dagger_{2chsh,0.87}\approx \eta^\dagger_{2chsh,0.88}\approx \eta^\dagger_{2chsh,0.89} \rightarrow \eta^\dagger_{chsh}.
\end{align}
}}
Note that we tried optimizing the 2-CHSH game for smaller visibilities ($\eta=0.83, 0.84$), which yielded unconvincing behaviour ($\Delta_{2chsh,\eta} = 0$). Hence, for the purpose of this analysis we can assume that 2-CHSH optimized for $\mathcal{Q}$ are the best configurations of the game for the depolarizing channel. Based on this and Eqn. \ref{eqn:observation-2chshopt-bounded-chsh}, we question whether this trend is revealing of a fundamental limitation in the axes fixed for the 2-CHSH optimized non-local games. One possible limitation 

\newpage
\ifthenelse{\boolean{showfigures}}{\begin{figure}[H]
    \centering
    \includegraphics[width=\textwidth]{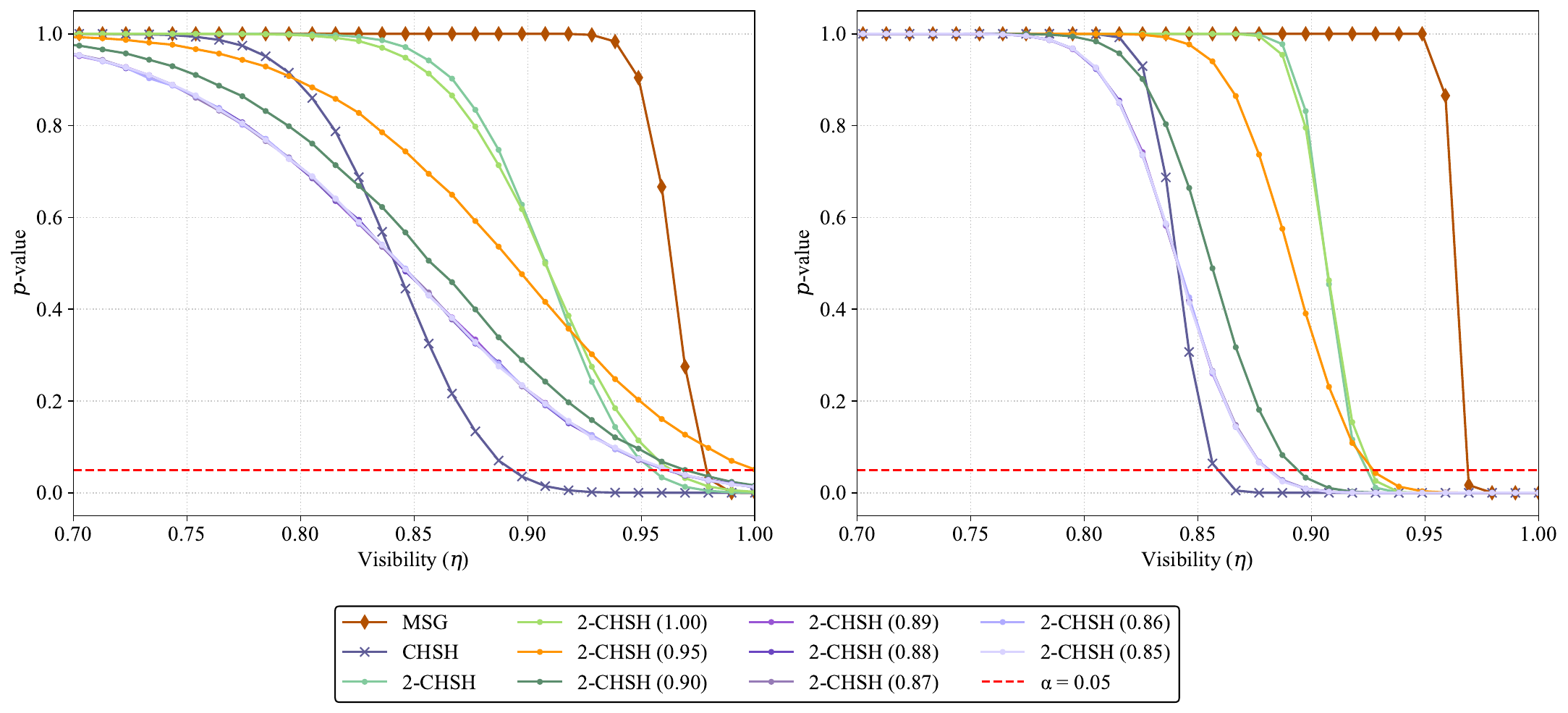}
    \caption[Convincingness Behaviour Analysis for Finite Resources (low $N_{res}$).]{\textbf{Convincingness Behaviour Analysis for Finite Resources (low $N_{res}$).} The convincingness curves were computed for the CHSH, MSG, 2-CHSH, and 2-CHSH-OPT games using the methods proposed in Sec. \ref{sec:comp-methods}. Each point on a curve was computed assuming $N_{res}$ $\eta$-noisy EPR pairs are available to that game. The $\eta$-level for which 2-CHSH-OPT games were optimized is noted in brackets in the legend. Games which have similar $\kappa_G$, computed according to the methods in Sec. \ref{sec:gapped-score}, are coloured similarly. The dotted red line denotes the significance threshold, $\alpha=0.05$. The left figure shows the $\mathfrak{C}_G(\eta)$ scores ($p$-values), assuming $N_{res}=1000$ are available for all games. The right figure shows the same, but assumes $N_{res}=10,000$. All plotted values were averaged over 10 runs, each with a random seed, where the $p$-value computed for each random seed is an average of three runs with this seed.} 
    \label{fig:pvals-N=1000,N=10000}
\end{figure}}{}
\vspace{2em}

\ifthenelse{\boolean{showfigures}}{\begin{figure}[H]
    \centering
    \includegraphics[width=\textwidth]{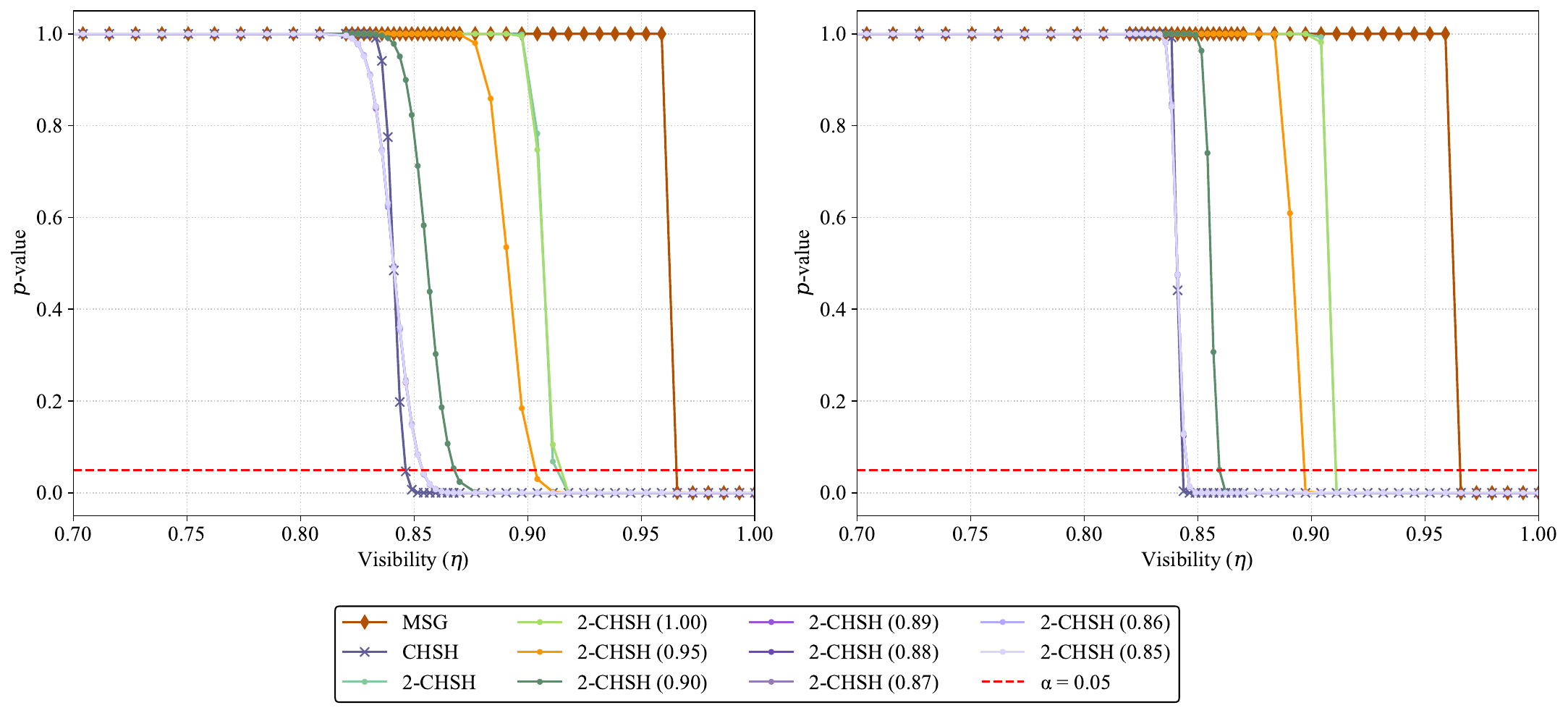}
    \caption[Convincingness Behaviour Analysis for Near-Infinite Resources (high $N_{res}$).]{\textbf{Convincingness Behaviour Analysis for Near-Infinite Resources (high $N_{res}$).} The convincingness was computed the same way as in Figure \ref{fig:pvals-N=1000,N=10000}, with the exception that left figure shows the convincingness scores for when $N_{res}=100,000$ for all games, and the right shows the same for when $N_{res}=1,000,000$.}
    \label{fig:pvals-N=100K,N=1M}
\end{figure}}{}

\newpage 
\noindent is the restriction on both games to use the same amount of resources in order to prove the presence of non-locality; perhaps the 2-CHSH-OPT games require more rounds, and hence, more resources than the CHSH game in order to do so.  We, thus, test whether allowing the 2-CHSH-OPT configurations corresponding to $\mathcal{Q}$ to use more resources than the CHSH permits their $\eta^\dagger_{2chsh,opt}$ values to move below $\eta^\dagger_{chsh}$, and by Fig. \ref{fig:pvals-unequal-resources}, observe the following positive result:

\begin{observation}\label{obs:CHSH-best-different-Nres}
Let us define noise-robustness as before, where a game $G$ is more `noise-robust' than game $G'$ if $\eta^\dagger_G < \eta^\dagger_{G'}$. Let $F'$ denote the set of 2-CHSH games optimized for $\eta' \in \mathcal{Q}$, and denote $N_{res}^{G \in F'}$ and $N^{chsh}_{res}$ the resources each game can use to prove non-locality. $\forall G\in F'$, if $N_{res}^{G \in F'} >> N^{chsh}_{res}$, then the configurations in $F'$ are more noise-robust than CHSH.
\end{observation}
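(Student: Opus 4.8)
The plan is to reduce the claim to a statement about \emph{limiting} significance crossings, i.e.\ the zeros of the noise-dependent gaps $\Delta_G$, and then run a soft monotonicity-and-limit argument in the number of rounds. Write $n_G$ for the number of rounds a game actually plays with its allotted resources; since CHSH consumes one EPR pair per round and every $G\in F'$ consumes two, we have $n_{chsh}=N^{chsh}_{res}$ and $n_G=N_{res}^{G}/2$ for $G\in F'$. By Def.~\ref{def:convincingness}/Algorithm~\ref{algo:convincingness}, $\mathfrak{C}_G(\eta)$ is the exact upper-tailed binomial $p$-value with bias $\omega_{c_G}$ under $H_0$, observed frequency $\omega_{\eta_G}$, and $n_G$ trials. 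From Table~\ref{tab:gapped-expressions-new}, $\Delta_G(\eta)=\omega_{\eta_G}-\omega_{c_G}$ is a polynomial in $\eta^2$ whose $\eta^2$-coefficient is strictly positive for CHSH ($\tfrac{\sqrt2}{4}$) and for every $\eta'\in\mathcal{Q}$ ($0.24307$), so $\Delta_G$ is strictly increasing on $[0,1]$ with a unique zero $\eta^0_G\in(0,1)$. Hence $\omega_{\eta_G}>\omega_{c_G}$ precisely on $(\eta^0_G,1]$; on that interval $\operatorname{round}(n_G\omega_{\eta_G})$ is non-decreasing so $\mathfrak{C}_G(\eta)$ is non-increasing in $\eta$, while on $[0,\eta^0_G]$ it stays bounded away from $\alpha$. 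Therefore $\eta^\dagger_G$ (Def.~\ref{def:sig-crossing}) is the unique first crossing of $\alpha$ and $\eta^\dagger_G>\eta^0_G$.

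\textbf{Monotonicity in resources and the limit.} For fixed $\eta>\eta^0_G$ the tail $\mathfrak{C}_G(\eta)$ is non-increasing in $n_G$ and tends to $0$ as $n_G\to\infty$; quantitatively, the Hoeffding bound used in the proof of Theorem~\ref{lemma:conv-approx} gives $\mathfrak{C}_G(\eta)\le\exp(-2n_G\Delta_G(\eta)^2)$. Consequently $\eta^\dagger_G$ is non-increasing in $n_G$, and $\eta^\dagger_G\downarrow\eta^0_G$ as $n_G\to\infty$: for any $\epsilon>0$, evaluating at $\eta=\eta^0_G+\epsilon$ (where $\Delta_G>0$ strictly) gives $\mathfrak{C}_G<\alpha$ once $n_G$ is large enough, so $\eta^\dagger_G\le\eta^0_G+\epsilon$. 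Next compare limiting crossings using Table~\ref{tab:gapped-expressions-new}: $\Delta_{chsh}(\eta)=\tfrac{\sqrt2}{4}\eta^2-\tfrac14$ gives $\eta^0_{chsh}=2^{-1/4}\approx0.841$, while for every $\eta'\in\mathcal{Q}$, $\Delta_{2chsh,\eta'}(\eta)=0.24307\,\eta^2-0.171875$ gives $\eta^0_{2chsh,\eta'}=\sqrt{0.171875/0.24307}\approx0.841$. Thus $\eta^0_G\le\eta^0_{chsh}$ for all $G\in F'$ (in fact equality to the precision of the coefficients).

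\textbf{Conclusion.} Fix $N^{chsh}_{res}$ large enough that CHSH is convincing at all (otherwise $\eta^\dagger_{chsh}$ is undefined and the comparison is vacuous). This pins down a finite $n_{chsh}$ and hence, by the first paragraph, a value $\eta^\dagger_{chsh}$ with $\eta^\dagger_{chsh}>\eta^0_{chsh}\ge\eta^0_G$ for every $G\in F'$; put $\delta:=\eta^\dagger_{chsh}-\eta^0_{chsh}>0$. By the second paragraph, for each $G\in F'$ there is $N^*_G$ such that $N_{res}^{G}\ge N^*_G$ forces $\eta^\dagger_G<\eta^0_G+\delta\le\eta^0_{chsh}+\delta=\eta^\dagger_{chsh}$. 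Since $F'$ is finite, taking $N^*:=\max_{G\in F'}N^*_G$ yields: whenever $N_{res}^{G}\ge N^*$ (the regime $N_{res}^{G}\gg N^{chsh}_{res}$), $\eta^\dagger_G<\eta^\dagger_{chsh}$ for all $G\in F'$, i.e.\ every configuration in $F'$ is more noise-robust than CHSH, as claimed and as Fig.~\ref{fig:pvals-unequal-resources} displays. A concrete lower bound on $N^*/N^{chsh}_{res}$ follows by inverting the approximation $\mathfrak{C}_G\sim\exp(-n_G\Delta_G(\eta)^2)$ at the crossings, where $n_G\Delta_G(\eta^\dagger_G)^2\approx -\ln\alpha$.

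\textbf{Main obstacle.} Everything except the ordering $\eta^0_G\le\eta^0_{chsh}$ is soft (monotonicity of binomial tails in $n$ and $\eta$, plus a limit). The substantive content is Step~3: if instead some $\eta^0_G$ exceeded $\eta^0_{chsh}$ by more than the finite-$n$ overshoot $\delta$, then no amount of extra resources could push $\eta^\dagger_G$ below $\eta^\dagger_{chsh}$. Thus the result genuinely depends on the structure of the LP-plus-affine-normalization construction of the 2-CHSH-OPT games producing gapped expressions whose zeros sit at (or below) $2^{-1/4}$; I would want to confirm this coincidence is a structural feature rather than an artifact of the tabulated precision, e.g.\ by re-deriving $\Delta_{2chsh,\eta'}$ symbolically for $\eta'\in\mathcal{Q}$ from Def.~\ref{def:2-CHSH-score} and checking $|d-\omega_{c_G}|/c_1 = \tfrac1{\sqrt2}$ exactly.
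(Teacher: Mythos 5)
Your argument is essentially correct, but it is a genuinely different route from what the paper does: the paper supports Observation \ref{obs:CHSH-best-different-Nres} purely empirically, by running the convincingness simulation with $N_{res}=10^6$ for CHSH and $10\times 10^6$ for the configurations in $F'$ and reading off $\eta^\dagger_{2chsh,\eta'}<\eta^\dagger_{chsh}$ from Fig.~\ref{fig:pvals-unequal-resources}; there is no analytic proof in the text. Your limiting argument --- fix $N^{chsh}_{res}$, note $\eta^\dagger_{chsh}>\eta^0_{chsh}$ strictly (this is exactly the paper's Theorem \ref{thm:etastar-less-etadag}, with your $\eta^0_G$ being the paper's noise-tolerance $\eta^*_G$ and your monotonicity step being Lemma \ref{pf:strictly-inc-gap}), then push $\eta^\dagger_G\downarrow\eta^0_G$ via Hoeffding --- explains \emph{why} unequal resources can flip the ordering and identifies the controlling quantity, which the simulation alone does not. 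What your approach buys is a structural criterion (extra resources help iff $\eta^0_G<\eta^\dagger_{chsh}$); what the paper's approach buys is certainty at a concrete resource ratio without having to resolve the delicate numerics. One caveat you should make explicit: your Step~3 asserts $\eta^0_G\le\eta^0_{chsh}$, but the tabulated coefficients in Table \ref{tab:gapped-expressions-new} cannot resolve the sign of $\eta^0_G-\eta^0_{chsh}$ (both zeros sit at $\approx 2^{-1/4}$ to the displayed precision), and the paper's own noise-tolerance analysis (Fig.~\ref{fig:noise-tolerance}, where the ordering of the $\eta^*_G$ matches the $N_{res}=10^6$ crossings with CHSH lowest) suggests the true inequality is the \emph{reverse}, $\eta^0_{chsh}<\eta^0_G$. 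Your proof survives this, as you note, only under the additional quantitative condition $\eta^0_G-\eta^0_{chsh}<\delta=\eta^\dagger_{chsh}-\eta^0_{chsh}$, which depends on $N^{chsh}_{res}$ staying fixed and finite; this condition is exactly what the paper's simulation verifies for the specific pairs $(10^6,10^7)$, so your analytic argument and their numerics are complementary rather than redundant. Also, your claim that $\mathfrak{C}_G(\eta)$ is non-increasing in $n_G$ at fixed $\eta$ is not needed and is not obviously true step-by-step because of the rounding in $\mathrm{round}(n_G\omega_{\eta_G})$; only the Hoeffding limit is required, so I would drop the monotonicity-in-$n$ claim and keep the limit.
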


\ifthenelse{\boolean{showfigures}}{\begin{figure}[H]
    \centering
    \includegraphics[width=\textwidth]{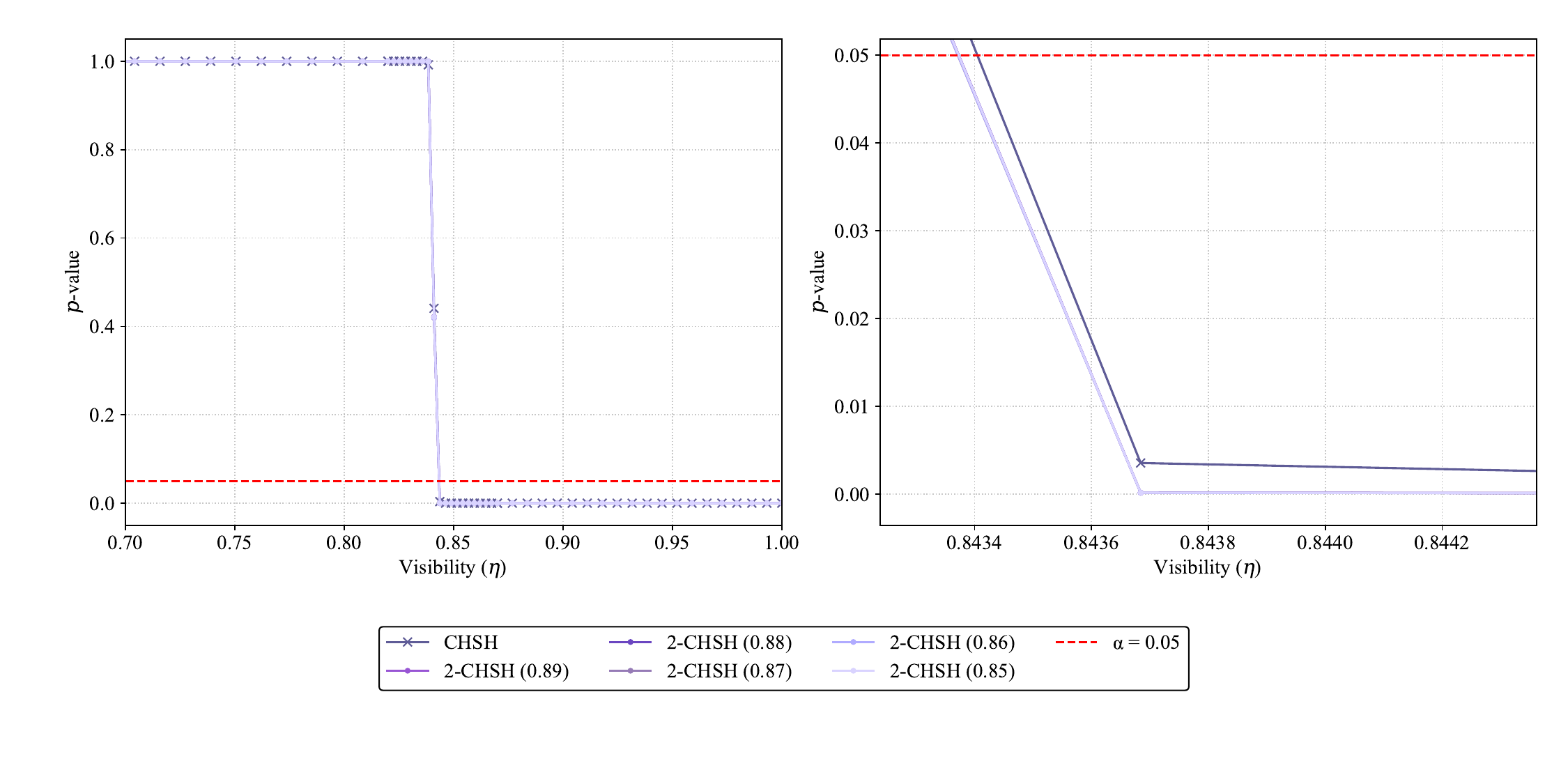}\caption[Convincingness Behaviour Analysis for Near-Infinite, Unequal Resources Between CHSH and 2-CHSH-OPT Configurations.]{\textbf{Convincingness Behaviour Analysis for Near-Infinite, Unequal Resources Between CHSH and 2-CHSH-OPT Configurations.} The right figure is a zoom-in of the left figure. We took the most convincing 2-CHSH-OPT games, and permitted them to have much more noisy resources (many more rounds) than the CHSH in order to see if the 2-CHSH-OPT's convincingness was fundamentally bounded by the CHSH's convincingness. When $N_{res}= 1,000,000$  for the CHSH and $N_{res}= 10 \times 1,000,000$ for the 2-CHSH-OPT configurations, the 2-CHSH-OPT configurations outperform the CHSH. That is, $\eta^\dagger_{2chsh,\eta'} < \eta^\dagger_{chsh}$ for $\eta' \in \mathcal{Q}$. All simulations were conducted as before and once again, all plotted values were averaged over 10 runs, each with a random seed, where the $p$-value computed for each random seed is an average of three runs with this seed. The dotted red line denotes the significance threshold, $\alpha=0.05$.} 
    \label{fig:pvals-unequal-resources}
\end{figure}}{}

Through varying the relative resources of the two in the simulations, and testing the asymptotic resource regime with resources in the millions of noisy EPR pairs, we find that when the CHSH game has $N_{res} = 1M$ resources to use and the 2-CHSH-OPT configurations optimized for $\eta'\in \mathcal{Q}$ have $10$ times that amount to use, $\eta^\dagger_{2chsh,\eta'} < \eta^\dagger_{chsh}$ (Fig. \ref{fig:pvals-unequal-resources}). This  observation is a counter-example to the hypothesis that Eqn. \ref{eqn:observation-2chshopt-bounded-chsh} holds across all noisy resource amounts, and suggests that 2-CHSH-OPT requires more resources to be robust to $\eta^\dagger_{chsh}$ visibility than the CHSH. It is valid to question whether a possible reason for these results is some error bound in the simulation, potentially stemming from the Monte Carlo seeds used to randomly select the number of successes in the $p$-value computation. However, since our figures plot the average over many random seeds, this is unlikely. Moreover, the finding aligns with our expectations for the 2-CHSH-OPT game: we expect that the 2-CHSH-OPT would perform better or the same as the CHSH either due to parallel repetition through the use of simultaneous measurements on a greater Hilbert space (to increase the probability of detecting non-locality), or due to the fact that an optimized version of the 2-CHSH game can choose to use only one pair of qubits and reach, at the very least, performance of the CHSH. Our results also align with the work of Arnon-Friedman and Yuen \cite{arnon2017noise}, whose parallel‑repetition DI protocol needs on the order of $10^6$ parallel CHSH rounds before it can certify that the high-dimensional input state---built from many noisy two‑qubit copies---contains a large amount of entanglement.
Our finding, therefore, motivates future study into exactly why substantially higher \textit{noisy} resources are needed in parallel repetition non-local game schemes in order to certify non-locality. A possible explanation is that using multiple noisy states per round fundamentally limits a game's performance, but a rigorous theoretical analysis remains open for future work. 

We also note another interesting finding. In the finite resource regime there are specific choices of significance thresholds, $\alpha$ (target $p$-values), and number of EPR pairs (per game round) where the CHSH game is not the most noise-robust.  For example, in the $N_{res}=1000$ experiment (Fig. \ref{fig:pvals-N=1000,N=10000}, left), we observe that $\eta^\dagger_{2chsh, i\in\mathcal{Q}\cup\{0.90,0.95\}} < \eta^\dagger_{chsh}$  for lower visibilities when $\alpha \approx 0.5$. While 0.5 is a very lenient threshold and leaves room for potential false positive detections of non-locality (likely ill-advised), these observations show that the convincingness of games is much more nuanced for low, more practical resource levels. It also demonstrates the importance of the experimenter's choice of certainty $\alpha$ and prompts future study into the impact of choosing higher $\alpha$'s on comparing the noise-robustness of games. It may be the case that the $\alpha=0.05$ we use is a more stringent condition than practically necessary. More generally:

\begin{observation}
    For a low (finite) $\eta$-noisy $N_{res}$ supply  available to each game, a more lenient $\alpha$ choice can enable the noise-robustness of some games to improve beyond that of CHSH; where we define noise-robustness as before by $\eta^\dagger_G$.
\end{observation}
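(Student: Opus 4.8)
The plan is to read the final claim as an existence statement and exhibit one witness $(N_{res},\alpha_0,G^\star)$ for which, under the stated conventions (CHSH runs $n_{chsh}=N_{res}$ rounds while every $4$-qubit game runs $N_{res}/2$ rounds), the significance crossing of $G^\star$ lies strictly below that of CHSH. Recall from Def.~\ref{def:sig-crossing} that $\eta^\dagger_G(\alpha)=\inf\{\eta\in[0,1]:\mathfrak{C}_G(\eta)\le\alpha\}$, and that $\mathfrak{C}_G(\eta)$ is the binomial upper tail of Algorithm~\ref{algo:convincingness}, fixed by $n_G$, the local bound $\omega_{c_G}$, and the $\eta$-noisy score $\omega_{\eta_G}$; Table~\ref{tab:gapped-expressions-new} gives $\omega_{\eta_G}$ explicitly as a low-degree even polynomial in $\eta$ for every game in the candidate set. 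Since $\omega_{\eta_G}$ is non-decreasing in $\eta$ and the upper tail is non-increasing in the observed count, each $\mathfrak{C}_G(\cdot)$ is non-increasing, so $\eta^\dagger_G(\alpha)$ is well defined and non-increasing in $\alpha$. I would take $G^\star$ to be the $2$-CHSH optimized for $\eta'=0.89$, $N_{res}=1000$, and $\alpha_0$ slightly larger than $\tfrac12$ (e.g.\ $\alpha_0=0.55$); it then suffices to establish $\eta^\dagger_{G^\star}(\alpha_0)<\eta^\dagger_{chsh}(\alpha_0)$, and the same argument will extend to the other configurations in $\mathcal{Q}\cup\{0.90,0.95\}$ for correspondingly more lenient $\alpha$.

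The mechanism I would exploit is the flip in the null-binomial quantile across $\alpha=\tfrac12$. Approximating the tail by its Gaussian counterpart,
\[
\mathfrak{C}_G(\eta)\;\approx\;\bar\Phi\!\left(\frac{\sqrt{n_G}\,\Delta_G(\eta)}{\sqrt{\omega_{c_G}(1-\omega_{c_G})}}\right),\qquad \Delta_G(\eta)=\omega_{\eta_G}-\omega_{c_G},
\]
where $\Phi$ is the standard normal CDF and $\bar\Phi=1-\Phi$, the crossing $\eta^\dagger_G(\alpha)$ solves $\Delta_G(\eta)=z_\alpha\sqrt{\omega_{c_G}(1-\omega_{c_G})/n_G}$ with $z_\alpha=\bar\Phi^{-1}(\alpha)$, which is positive for stringent $\alpha<\tfrac12$ and negative for lenient $\alpha>\tfrac12$. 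A coincidence read off Table~\ref{tab:gapped-expressions-new} handles the rest: for $G^\star$ one has $0.24307/0.171875=0.35355/0.25=\sqrt2$, so $\Delta_{G^\star}$ and $\Delta_{chsh}$ vanish at the \emph{same} visibility $\eta_0=2^{-1/4}$. Linearizing both gaps about $\eta_0$ gives $\eta^\dagger_G(\alpha)\approx\eta_0+z_\alpha\sqrt{\omega_{c_G}(1-\omega_{c_G})/n_G}/\Delta_G'(\eta_0)$, whence
\[
\eta^\dagger_{chsh}(\alpha)-\eta^\dagger_{G^\star}(\alpha)\;\approx\;z_\alpha\,C,\qquad C=\frac{\sqrt{(3/16)/N_{res}}}{\Delta'_{chsh}(\eta_0)}-\frac{\sqrt{(1023/4096)/(N_{res}/2)}}{\Delta'_{G^\star}(\eta_0)},
\]
and $C$ is a fixed \emph{negative} constant (numerically $C\approx-0.032$ at $N_{res}=1000$) because $G^\star$'s local bound $\tfrac{33}{64}\approx\tfrac12$ nearly maximizes $\omega_{c_G}(1-\omega_{c_G})$ while its slope $\Delta_{G^\star}'(\eta_0)$ is the smaller of the two. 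Hence $\eta^\dagger_{chsh}-\eta^\dagger_{G^\star}$ carries the sign of $-z_\alpha$: negative for $\alpha<\tfrac12$ (CHSH ahead, in agreement with Observation~\ref{obs:CHSH-best-constant-Nres}) and positive for every $\alpha>\tfrac12$ in the regime where the linearization is valid --- exactly the claimed reversal. Conceptually, the $(1-\alpha)$-quantile of the null binomial lies above its mean when $\alpha<\tfrac12$ and below it when $\alpha>\tfrac12$, so the null variance $n_G\omega_{c_G}(1-\omega_{c_G})$ \emph{helps} a game precisely when $\alpha>\tfrac12$, which favours $G^\star$ (local bound near $\tfrac12$, hence near-maximal variance) over CHSH (local bound $\tfrac34$).

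The step I expect to be the main obstacle is upgrading the Gaussian heuristic to a rigorous sign statement: at $\alpha_0=0.55$, $N_{res}=1000$ the predicted margin $\eta^\dagger_{chsh}-\eta^\dagger_{G^\star}$ is only about $4\times10^{-3}$, of the same order $O(N_{res}^{-1/2})$ as the Berry--Esseen error in the CLT, so the conclusion cannot rest on the approximation alone. I would close this in one of two ways. First, carry an explicit Berry--Esseen bound on $|\mathfrak{C}_G(\eta)-\bar\Phi(\cdot)|$ and divide it by the computable slope $|d\mathfrak{C}_G/d\eta|\sim\sqrt{n_G}$ near the crossing, checking that the induced $\eta$-uncertainty ($\sim10^{-3}$) sits below the margin --- tight, but it appears to go through. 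Second, and cleaner given that the statement is only existential and $(N_{res},\alpha_0)$ are fixed numbers, evaluate the two finite binomial sums $\mathfrak{C}_{chsh}(\eta)$ and $\mathfrak{C}_{G^\star}(\eta)$ exactly (the computation already behind Fig.~\ref{fig:pvals-N=1000,N=10000}) and verify the strict inequality $\eta^\dagger_{G^\star}(\alpha_0)<\eta^\dagger_{chsh}(\alpha_0)$ directly. Either route discharges the Observation; the Gaussian analysis above is what explains \emph{why} the reversal occurs and why it requires both $\alpha>\tfrac12$ and a game whose local bound is close to $\tfrac12$.
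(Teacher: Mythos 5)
Your proposal is correct in substance but proceeds very differently from the paper: the paper offers no analytic argument for this Observation at all, simply reading the reversal off the exact-simulation curves of Fig.~\ref{fig:pvals-N=1000,N=10000} (left) at $N_{res}=1000$ and noting that for $\alpha\approx 0.5$ the $2$-CHSH-OPT configurations for $\eta'\in\mathcal{Q}\cup\{0.90,0.95\}$ dip below the threshold at lower visibility than CHSH. Your Gaussian analysis supplies the mechanism the paper lacks: the crossing condition $\Delta_G(\eta^\dagger)=z_\alpha\sqrt{\omega_{c_G}(1-\omega_{c_G})/n_G}$ changes sign at $\alpha=\tfrac12$, so the null variance and the smaller round count $n_G=N_{res}/2$ switch from penalizing the four-qubit games to helping them, and the numerical coincidence that $0.24307/0.171875=0.35355/0.25=\sqrt2$ (so $\Delta_{G^\star}$ and $\Delta_{chsh}$ share the root $\eta_0=2^{-1/4}$) makes the comparison for the $\mathcal{Q}$ configurations depend on the sign of $z_\alpha$ alone; your constants check out ($C\approx-0.032$ at $N_{res}=1000$). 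This buys an explanation of \emph{why} the reversal needs both $\alpha>\tfrac12$ and a local bound near $\tfrac12$, and it correctly predicts a refinement the paper glosses over: the $\eta'=0.90,0.95$ configurations have their gap vanish at $\eta_0\approx0.856>2^{-1/4}$, so they require a noticeably more lenient $\alpha$ than the $\mathcal{Q}$ configurations, consistent with your "correspondingly more lenient $\alpha$" remark. What the paper's route buys is immunity to the one genuine gap you flag yourself: the predicted margin is $O(N_{res}^{-1/2})$, the same order as the Berry--Esseen error, so the Gaussian heuristic alone does not close the existential claim; your fallback of evaluating the two finite binomial tails exactly is essentially what the paper does (modulo the detail that the plotted curves average Monte Carlo draws of $k$ over seeds rather than using $k=\mathrm{round}(n\omega_{v})$ from Algorithm~\ref{algo:convincingness}), and either that computation or a carried Berry--Esseen constant is needed to make the witness rigorous. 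With that step discharged, your argument both proves the Observation and explains it, which is more than the paper provides.
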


By all of the above observations, we note that our original objective has been attained. We have shown a clear, computational method, for determining which non-local game, in a set of candidate games with different input-output settings and winning state dimensionalities, is most likely to detect non-locality in noisy (depolarized) states, under different resource or certainty conditions ($N_{res}$ available, or certainty of  non-locality, $\alpha$, decided by the experimenter). While the above methods heavily rely on simulations, we now discuss a more analytic method to arrive at similar conclusions. Specifically, since the $\eta^\dagger_G$ values can be used to compare the noise-robustness of games, in the convincingness-sense of noise-robustness, our goal is to find a closed form analytic expression for a score which best matches the order of $\eta^\dagger_G$'s across games. In particular, the candidate predictors which we consider are: the ratio  ($\omega_{{\eta=1}_G}/\omega_{c_G}$), the raw gap ($\omega_{{\eta=1}_G} - \omega_{c_G}$), the coefficients and constant in $\Delta_G$ and $\Delta_G^2$, and the gapped score ($\kappa_G$) derived in Sec. \ref{sec:analytic-methods}. All of these predictors were derived from the analytic expression of the gap, $\Delta_G$, since, as we know from Eqn. \ref{eqn:pval-simplified}, the gap heavily influences the convincingness and hence the order of significance crossings. The values of each predictor can be seen in Table \ref{tab:game_comparison}. \\

\ifthenelse{\boolean{showfigures}}{
\begin{table}[h!]
    \centering
    \caption{Comparison of Non-Local Games and Their Properties.}
    \resizebox{\textwidth}{!}{
        \begin{tabular}{l|c|c|c|c|c|c|c|c|c|c|c}
            \hline
            \textbf{Property} & \textbf{CHSH} & \textbf{2-CHSH} & \multicolumn{8}{c|}{\textbf{2-CHSH OPT}} & \textbf{MSG} \\
            \cline{4-11}
            & & & $\eta=0.89$ & $\eta=0.88$ & $\eta=0.87$ & $\eta=0.86$ & $\eta=0.85$ & $\eta=0.90$ & $\eta=1$ & $\eta=0.95$ & \\
            \hline
            Local Bound ($\omega_{c_G}$) & 0.75 & 0.625 & 0.516 & 0.516 & 0.516 & 0.516 & 0.516 & 0.547 & 0.547 & 0.562 & 0.944 \\
            Quantum Bound ($\omega_{{\eta=1}_G}$) & 0.85 & 0.729 & 0.587 & 0.587 & 0.587 & 0.587 & 0.587 & 0.614 & 0.637 & 0.614 & 1 \\
            Score ($\kappa_G, N_{res}=1K$) & \textbf{667.60} & \textbf{543.61} & \textbf{667.59} & \textbf{667.59} & \textbf{667.59} & \textbf{667.59} & \textbf{667.59} & \textbf{621.33} & \textbf{543.59} & \textbf{528.56} & \textbf{43.26} \\
            Score ($\kappa_G, N_{res}=10K$) & \textbf{6676.04} & \textbf{5436.12} & \textbf{6675.89} & \textbf{6675.89} & \textbf{6675.89} & \textbf{6675.89} & \textbf{6675.89} & \textbf{6308.87} & \textbf{5435.92} & \textbf{5285.57} & \textbf{432.62} \\
            Ratio ($\omega_{{\eta=1}_G}/\omega_{c_G}$) & 1.133 & 1.166 & 1.138 & 1.138 & 1.138 & 1.138 & 1.138 & 1.123 & 1.165 & 1.092 & 1.059 \\
            Raw Gap ($\omega_{{\eta=1}_G}-\omega_{c_G}$) & 0.1 & 0.104 & 0.071 & 0.071 & 0.071 & 0.071 & 0.071 & 0.067 & 0.09 & 0.051 & 0.056 \\
            Constant in $\Delta_G^2$ & 0.062 & 0.141 & 0.03 & 0.03 & 0.03 & 0.03 & 0.03 & 0.034 & 0.108 & 0.04 & 0.198 \\
            $\eta^2$ Coefficient in $\Delta_G^2$ & -0.177 & -0.265 & -0.084 & -0.084 & -0.084 & -0.084 & -0.084 & -0.094 & -0.203 & -0.102 & -0.198 \\
            $\eta^4$ Coefficient in $\Delta_G^2$ & 0.125 & 0.031 & 0.059 & 0.059 & 0.059 & 0.059 & 0.059 & 0.065 & 0.024 & 0.065 & -0.198 \\
            Constant in $\Delta_G$ & -0.25 & -0.375 & -0.172 & -0.172 & -0.172 & -0.172 & -0.172 & -0.186 & -0.328 & -0.201 & -1.389 \\
            $\eta^2$ Coefficient in $\Delta_G$ & 0.354 & 0.354 & 0.243 & 0.243 & 0.243 & 0.243 & 0.243 & 0.254 & 0.309 & 0.254 & 0.222 \\
            $\eta^4$ Coefficient in $\Delta_G$ & 0 & 0.125 & 0 & 0 & 0 & 0 & 0 & -0.001 & 0.109 & -0.001 & 0.278 \\
            \hline
            Sig. Crossing ($\eta^\dagger_G$, $N_{res}= 1K$)  & 0.107 & 0.046 & 0.04 & 0.038 & 0.038 & 0.038 & 0.04 & 0.03 & 0.037 & N/A & 0.022 \\
            Sig. Crossing ($\eta^\dagger_G$, $N_{res} = 10K$) & 0.142 & 0.078 & 0.12 & 0.12 & 0.12 & 0.119 & 0.12 & 0.107 & 0.075 & 0.073 & 0.032 \\
            \hline
        \end{tabular}
    }
    \label{tab:game_comparison}
    \begin{flushleft}
    \small
    \textbf{Note:} This table compares various candidate measures for approximating the order of $\eta^\dagger_G$ across games (at $\alpha=0.05$). 
    \end{flushleft}
\end{table}}{}

\noindent Prior to getting into the analysis of these values, let us summarize our main observations:

\begin{observation}\label{obs:unstable-crossing}
    For an unstable $N_{res}$ region, the order of the $\eta^\dagger_G$ crossings is not predicted reliably by the candidate scores and should be computed using the exact convincingness simulation. 
\end{observation}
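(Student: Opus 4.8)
The plan is to treat this as an empirical statement about a qualitative phenomenon and to establish it by (i) making precise what an ``unstable $N_{res}$ region'' is, (ii) giving an analytic argument for \emph{why} every closed-form predictor considered ($\omega_{{\eta=1}_G}/\omega_{c_G}$, $\omega_{{\eta=1}_G}-\omega_{c_G}$, the coefficients of $\Delta_G$ and $\Delta_G^2$, and the gapped score $\kappa_G$) must fail there, and (iii) certifying the failure with the simulation grid behind Figs.~\ref{fig:pvals-N=1000,N=10000}--\ref{fig:pvals-unequal-resources} and Table~\ref{tab:game_comparison}. Concretely, I would \emph{define} an interval of $N_{res}$ values to be \emph{stable} for a candidate set of games if the ordering of the exact significance crossings $\{\eta^\dagger_G\}$ produced by Algorithm~\ref{algo:convincingness} is constant on that interval, and \emph{unstable} otherwise. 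The claim then splits into: (a) on every unstable interval, the ranking induced by each candidate predictor disagrees with the simulated $\eta^\dagger_G$-ranking for at least one pair of games; and (b) therefore the exact convincingness computation is required to get the order right.

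For the analytic half, the key observation is that all the candidate predictors descend from the large-$n$ surrogate $\mathfrak{C}_G \sim \exp\!\big(-n\,\Delta_G^2\big)$ of Theorem~\ref{lemma:conv-approx}, and hence inherit its approximation errors. In passing from the exact binomial upper tail $\sum_{j\ge \mathrm{round}(n\,\omega_{\eta_G})}\binom{n}{j}\omega_{c_G}^{\,j}(1-\omega_{c_G})^{\,n-j}$ to that surrogate one discards: the factor $2$ in the exponent, which is harmless because it is common to all games; the rounding error $\mathrm{round}(n\,\omega_{\eta_G})/n-\omega_{\eta_G}$, of size $O(1/n)$; and the sub-exponential (polynomial and lattice) prefactor of the tail. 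The last two corrections shift $\ln\mathfrak{C}_G$ only by $o(n\,\Delta_G^2)$ as long as $\Delta_G^2=\Omega(1)$, but when two games satisfy $\big|\Delta_{G_1}^2(\eta)-\Delta_{G_2}^2(\eta)\big| = O(1/n)$ in the region where their convincingness curves meet the threshold $\alpha$, these dropped terms are of the same order as the leading signal and can flip the sign of $\mathfrak{C}_{G_1}(\eta)-\mathfrak{C}_{G_2}(\eta)$. That near-tangency of the gap polynomials at threshold is exactly what makes the region unstable --- a small change in $N_{res}$ slides the crossing value of $\eta$ across the point where $\Delta_{G_1}(\eta)$ and $\Delta_{G_2}(\eta)$ are nearly equal --- whereas $\kappa_G$, depending only on $c_1,c_2,d,\omega_{c_G}$ and the monotone factor $N_{res}/(-\ln\alpha)$, cannot change order with $N_{res}$ at all. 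This is the mechanism that makes the two rankings come apart precisely on unstable intervals.

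For the computational half, I would exhibit explicit witnesses from the results already in the paper. In Table~\ref{tab:game_comparison} the $\kappa_G$ rows for $N_{res}=1\mathrm{K}$ and $N_{res}=10\mathrm{K}$ induce essentially the same ordering of games, yet the two ``Sig.~Crossing ($\eta^\dagger_G$)'' rows do not: the relative order of the standard $2$-CHSH and the $\eta'$-optimized $2$-CHSH configurations changes between the two resource levels, and several of the $\eta'\in\mathcal{Q}$ configurations that are more robust than the $2$-CHSH at $N_{res}=1\mathrm{K}$ are less robust at $N_{res}=10\mathrm{K}$. Similarly, the behaviour noted in the discussion of Fig.~\ref{fig:pvals-N=1000,N=10000} (left), where $\eta^\dagger_{2chsh,\,\cdot}<\eta^\dagger_{chsh}$ for lenient $\alpha\approx 0.5$, isolates an unstable $(N_{res},\alpha)$ window. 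The systematic version of step (iii) is to sweep $N_{res}$ over a grid, compute each $\eta^\dagger_G$ exactly via Algorithm~\ref{algo:convincingness}, and record the inversion count (or Kendall-$\tau$) between that ranking and the ranking given by each candidate score; the unstable intervals are the $N_{res}$-subintervals on which some inversion count is nonzero, and one then checks by inspection that on those intervals each candidate predictor is wrong for at least one pair, which is (a), and (b) follows immediately. Presenting the result as ``Observation~\ref{obs:unstable-crossing}'' is appropriate because the conclusion is certified by this finite sweep together with the mechanism above rather than by a closed-form argument.

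The main obstacle is converting step (ii) from an explanation into a \emph{proof} of the negative claim. A genuinely rigorous version would need a quantitative refinement of the binomial-tail estimate --- a Berry--Esseen or sharpened-Chernoff bound retaining the $n^{-1/2}$ prefactor and the lattice correction --- sharp enough to show that the window $\big|\Delta_{G_1}^2-\Delta_{G_2}^2\big|=O(1/n)$ around each near-tangency is exactly where the surrogate (hence each predictor) breaks, and then an argument that works \emph{uniformly} across all candidate scores and all game pairs in the region, not just for $\kappa_G$. I would not attempt that here: I would keep the observation computational, use the asymptotic-surrogate analysis only to explain the mechanism, and flag the precise characterization of the unstable windows --- and whether a corrected closed-form predictor could remove them --- as open.
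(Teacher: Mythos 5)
Your proposal is correct and follows essentially the same route as the paper: the paper treats this as an empirical claim, defining the unstable region as the low-$N_{res}$ regime ($N_{res}=500,1000$) where the simulated ordering of the $\eta^\dagger_G$ varies (Fig.~\ref{fig:crossings-phase-changes}), certifying the failure of every candidate predictor there by direct comparison against the exact crossings (Fig.~\ref{fig:candidates-comparison}, Table~\ref{tab:game_comparison}), and offering only a one-line mechanistic remark that for small $N_{res}$ the absolute magnitude of $\Delta_G$ dominates --- your more detailed surrogate-error analysis and the point that $\kappa_G$'s ranking is $N_{res}$-invariant are a welcome elaboration of that remark but play the same explanatory (non-probative) role. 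One small correction: in Table~\ref{tab:game_comparison} the inversion runs the other way --- the standard $2$-CHSH is \emph{more} robust than the $\eta'\in\mathcal{Q}$ configurations at $N_{res}=1$K and \emph{less} robust at $N_{res}=10$K --- though the existence of the inversion, which is all your argument needs, stands.
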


\begin{observation}\label{obs:stable-crossing}
    For a stable $N_{res}$ region, the order of the $\eta^\dagger_G$ crossings is reliably captured by the gapped score, supplemented by the polynomial expression analysis for games or configurations which have the same scores. Moreover, the gapped score is the best crossing predictor in this resource region out of all candidates. 
\end{observation}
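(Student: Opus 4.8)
The plan is to prove Observation~\ref{obs:stable-crossing} in three stages: pin down what a \emph{stable} $N_{res}$ region is, show analytically that within such a region the ordering of the significance crossings $\eta^\dagger_G$ (Def.~\ref{def:sig-crossing}) is controlled by a monotone functional of the gapped expression that $\kappa_G$ (Def.~\ref{def:quad-score-dep}) reproduces, and finally verify against Table~\ref{tab:game_comparison} that every competing predictor misorders some pair of games whereas $\kappa_G$ does not. I would \emph{define} a stable $N_{res}$ region to be a range of resource counts on which (i) the exponential approximation $\mathfrak{C}_G(\eta)\sim\exp\!\left(-N_{res}\,\Delta_G(\eta)^2\right)$ underlying Theorem~\ref{lemma:conv-approx} is accurate near each crossing --- the $\mathrm{round}(\cdot)$ discretization and the dropped universal constant being negligible at the relevant number of rounds --- and (ii) the induced ordering of the $\eta^\dagger_G$ is locally constant in $N_{res}$, i.e.\ no two distinct crossings fall within the precision tolerance $\epsilon$ of Question~\ref{question:gapped-score-behaviour}. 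The low-$N_{res}$ regime of Observation~\ref{obs:unstable-crossing} and the unequal-resource flip of Observation~\ref{obs:CHSH-best-different-Nres} are then excluded by construction.

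The first substantive step is the reduction. By Theorem~\ref{lemma:conv-approx}, to the stated accuracy $\eta^\dagger_G$ is the smallest $\eta$ at which $\Delta_G(\eta)$ reaches the positive value $\sqrt{-\ln\alpha / N_{res}}$ (for $\Delta_G<0$ the upper-tailed $p$-value stays near $1$ and never meets $\alpha$). Writing $\Delta_G(\eta)=c_1\eta^2+c_2\eta^4-|d-\omega_{c_G}|$ and substituting $u=\eta^2$, the crossing is the root in $[0,1]$ of $c_2u^2+c_1u=|d-\omega_{c_G}|+\sqrt{-\ln\alpha/N_{res}}$; since $c_1>0$ dominates for every game in Table~\ref{tab:gapped-expressions-new}, $\Delta_G$ is increasing on $[0,1]$ and the root is unique. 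In a stable region $\sqrt{-\ln\alpha/N_{res}}$ is a small perturbation of $|d-\omega_{c_G}|$, so $\eta^\dagger_G$ sits just above the ``gap-zero'' visibility $\eta_0$ solving $c_2u_0^2+c_1u_0=|d-\omega_{c_G}|$. I would then argue that $\eta_0$, and hence (under the uniform perturbation) the whole ordering of the crossings, is monotone decreasing in the ratio $R_G:=(c_1+c_2)/|d-\omega_{c_G}|$ and monotone decreasing in $N_{res}$: a larger noise-dependent ``lift'' relative to the baseline separation, or more resources, makes a game convincing at lower visibility. Packaging these two monotonicities into one scalar yields exactly $R_G^2\cdot N_{res}/(-\ln\alpha)=\kappa_G$, with the square preserving order while amplifying near-ties. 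One then checks directly that the ranking by $\kappa_G$ at $N_{res}=1\mathrm{K}$ and $10\mathrm{K}$ in Table~\ref{tab:game_comparison} agrees with the ranking of the measured $\eta^\dagger_G$ (CHSH together with the $\eta'\le 0.89$ optimized configurations tolerating the most noise, then $\eta'=0.90$, then standard $2$-CHSH with the $\eta'=1$ configuration, then $\eta'=0.95$, then MSG).

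For the ``best predictor'' clause I would traverse the remaining rows of Table~\ref{tab:game_comparison} and exhibit, for each alternative candidate, an explicit pair of games whose order under it disagrees with the $\eta^\dagger_G$ order: the ratio $\omega_{{\eta=1}_G}/\omega_{c_G}$ and the raw gap $\omega_{{\eta=1}_G}-\omega_{c_G}$ both rank $2$-CHSH above CHSH although CHSH has the smaller $\eta^\dagger_G$; the constant of $\Delta_G^2$ ranks MSG high although MSG has the worst crossing; analogous failures occur for the $\eta^2$ and $\eta^4$ coefficients of $\Delta_G$ and $\Delta_G^2$. Since $\kappa_G$ reproduces the order and every competitor in the list fails on at least one pair, $\kappa_G$ is best over this candidate set. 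The ``supplemented by the polynomial analysis'' clause handles the degenerate case the argument leaves open: when two configurations have coinciding (or numerically indistinguishable) gapped scores --- e.g.\ the standard $2$-CHSH and the $\eta'=1$ optimized $2$-CHSH --- $\kappa_G$ cannot separate them, so instead one forms $D(\eta)=\Delta_{G_1}(\eta)-\Delta_{G_2}(\eta)$ as in Sec.~\ref{sec:poly-comparison} and uses the sign of $D$ near the common crossing region together with $D'>0$ and the Intermediate Value Theorem to decide which crosses first; when $D\equiv 0$, as for the $\eta'\in\{0.85,\dots,0.89\}$ family, the configurations are genuinely equivalent under this noise model and no ordering is called for.

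The hard part will be making ``stable'' carry its weight, since the leading-order analysis only fixes the ordering if the perturbations it discards are provably smaller than the gaps between distinct $\kappa_G$ values on the region. Concretely one must bound (a) the $\mathrm{round}(N_{res}\,\omega_{\eta_G})$ discretization and the dropped factor of $2$ in Theorem~\ref{lemma:conv-approx}, which shift each $\eta^\dagger_G$ by an $O(1/N_{res})$-type amount, and (b) the Monte-Carlo sampling error in the \emph{simulated} $\eta^\dagger_G$ (reduced but not eliminated by averaging over seeds), and then show these shifts never exceed the separation of $\kappa_G$ between games that are genuinely distinct. A secondary subtlety is that monotonicity of $\eta_0$ in $R_G$ can fail if the $\eta^4$ coefficient $c_2$ is large relative to $c_1$ and of opposite sign; this never happens for the families in Table~\ref{tab:gapped-expressions-new}, so the cleanest route is to state the result for game families with $|c_2|$ small relative to $c_1$ (covering all configurations treated here) and flag the fully general statement as open.
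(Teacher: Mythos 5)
Your reduction of $\eta^\dagger_G$ to the root of $\Delta_G(\eta)=\sqrt{-\ln\alpha/N_{res}}$ matches the machinery the paper actually uses (Thm.~\ref{thm:etastar-less-etadag}, Lemma~\ref{pf:strictly-inc-gap}), and your ``traverse Table~\ref{tab:game_comparison} and exhibit a misordered pair for every competitor'' step is essentially the paper's own argument, which is carried out graphically in Fig.~\ref{fig:candidates-comparison} (e.g.\ the ratio and raw gap increase from CHSH to 2-CHSH while the crossing order decreases). The supplement-by-$D(\eta)$ clause for tied scores also matches Sec.~\ref{sec:poly-comparison}. But be clear about the epistemic status: the paper establishes this Observation \emph{empirically} --- by plotting the measured crossing orders at $N_{res}=500,1000,2000$ against all candidates and checking agreement --- and only offers an informal discussion of \emph{why} $\kappa_G$ wins (it uses all coefficients, the noise model, $N_{res}$, and $\alpha$). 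It does not prove the monotonicity theorem you are reaching for.

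The genuine gap is in that monotonicity step. You claim the gap-zero visibility $\eta_0$, and hence the crossing order, is monotone decreasing in $R_G=(c_1+c_2)/\lvert d-\omega_{c_G}\rvert$. That cannot be a theorem, because the root of $c_2u^2+c_1u=\lvert d-\omega_{c_G}\rvert$ is not a function of $R_G$ alone: take $c_1=1,\,c_2=0,\,\lvert d-\omega_c\rvert=0.5$ versus $c_1=0,\,c_2=1,\,\lvert d-\omega_c\rvert=0.5$; both give $R_G=2$ but $u_0=0.5$ versus $u_0=1/\sqrt{2}$. Your closing hedge (restrict to $\lvert c_2\rvert$ small relative to $c_1$ and of consistent sign) mislocates the failure --- the issue is not the sign of $c_2$ but that $R_G$ is simply not a sufficient statistic for the root --- and once you restrict to the specific coefficient profiles in Table~\ref{tab:gapped-expressions-new} you are back to case-checking, i.e.\ to the paper's empirical verification. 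Relatedly, the table itself shows $\kappa_G$ assigning essentially identical scores to CHSH ($667.60$) and the $\eta'\in\{0.85,\dots,0.89\}$ family ($667.59$) even though their measured crossings at $N_{res}=10K$ differ noticeably ($0.142$ vs.\ $0.12$ in the $1-\eta^\dagger$ convention), so ``reliably captured'' already carries the tolerance $\epsilon$ of Question~\ref{question:gapped-score-behaviour} and the polynomial supplement is doing real work, not just breaking exact ties. Finally, your stability condition (ii) --- the induced ordering is locally constant in $N_{res}$ --- quietly assumes part of what the observation asserts; the paper instead identifies the stable region operationally from Fig.~\ref{fig:crossings-phase-changes} and then tests the predictors on it.
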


To arrive at these observations, it is important to understand what is meant by a $stable$ and $unstable$ $N_{res}$ region. In Fig. \ref{fig:crossings-phase-changes}, we plot the order of the crossings that is attained when all games and configurations are given access to $N_{res}$ $\eta$-noisy resources. Each curve shows an experiment where all games had access to $N_{res}$ resources, and we particularly plot $1-\eta^\dagger$ in order to see the trend more easily. For low $N_{res}$, it is observed that the order of the crossings is varied, or \textit{unstable}. At $N_{res} = 500$, only four out of the 11 games cross the significance threshold, and at $N_{res}=1000$ ten out of the 11 games cross the threshold. However, for mid to high $N_{res}$, starting at $N_{res} = 2000$,  the ordering of the crossings starts to take on a consistent, $stable$, pattern. Lastly, we also make the observation that as $N_{res}$ increases, the values of the crossings themselves (the $y$-axis) increases as well, plateauing eventually for very high $N_{res}$ ($O(1M)$). Hence, we conclude that the order of the crossings, which is the pattern we wish to predict with a candidate measure, has unstable and stable regions, and that each crossing ($1-\eta^\dagger$) for a game appears to have a maximum value. Note that these observations make sense, since by Eqn. \ref{eqn:pval-simplified}, for large $N_{res}$, \textit{any} $\Delta_G > 0$ will eventually produce significance (as captured by the $c_i$ coefficients), whereas for small $N_{res}$ the absolute magnitude of $\Delta_G$ critically determines the convincingness, making its prediction more challenging.\\

\ifthenelse{\boolean{showfigures}}{\begin{figure}[H]
    \centering
    \includegraphics[width=0.8\textwidth]{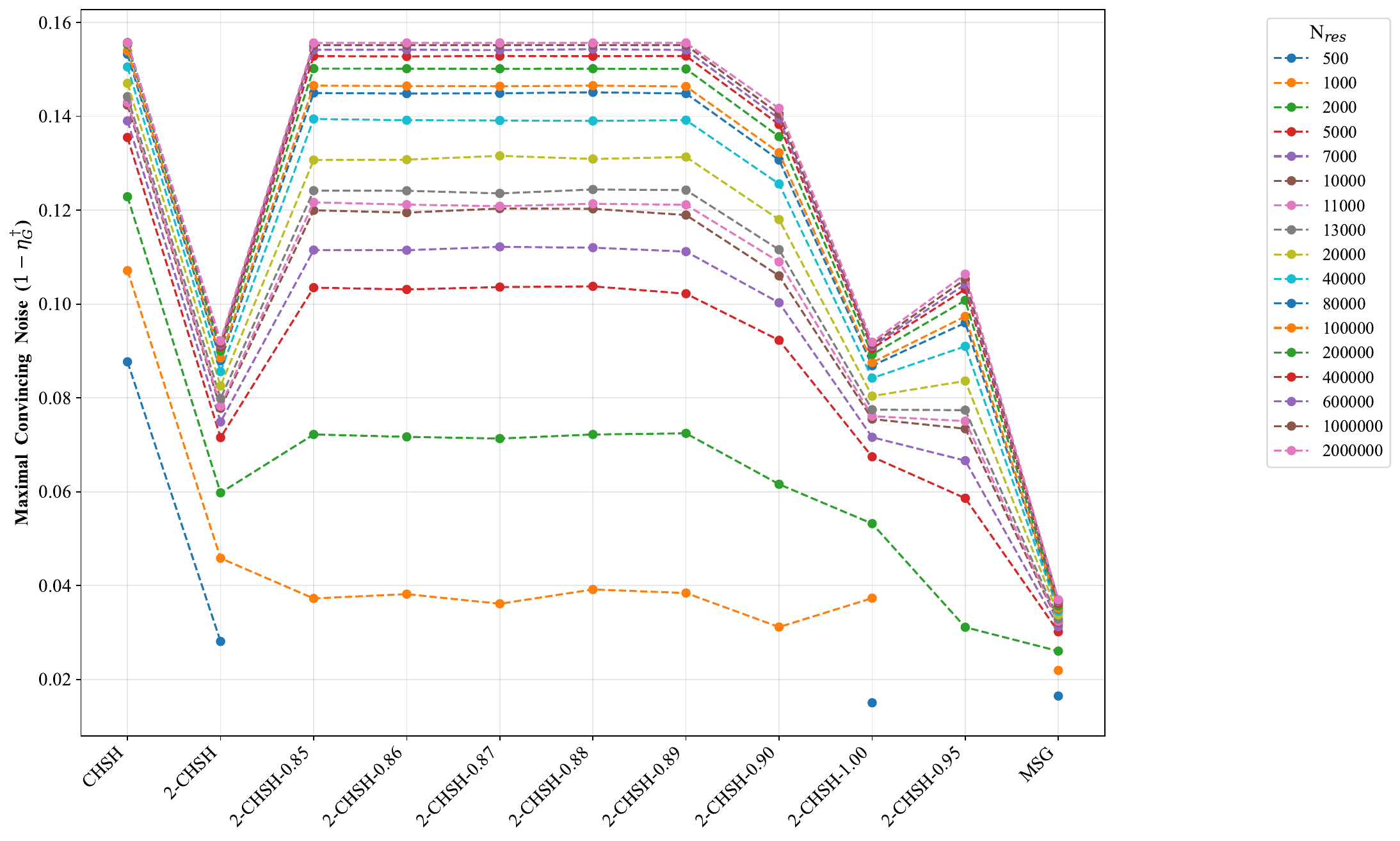}\caption[Order of Crossings ($1-\eta^\dagger$) for Games at Different N$_{res}$.]{\textbf{Order of Crossings ($1-\eta^\dagger_G$) for Games at Different N$_{res}$.}} 
    \label{fig:crossings-phase-changes}
\end{figure}}{}

With these observations in mind, we compare the crossing orders of experiments conducted for unstable ($N_{res} = 500,1000$) and stable ($N_{res} = 2000$) resource regions to the values of all candidate predictors in Fig. \ref{fig:candidates-comparison}. Doing so will enable us to ascertain which candidate is the most reliable estimator of the crossing orders in both regions. 

\ifthenelse{\boolean{showfigures}}{\begin{figure}[htbp]
    \centering
    \begin{subfigure}[b]{0.45\textwidth}
        \centering
        \includegraphics[width=\textwidth]{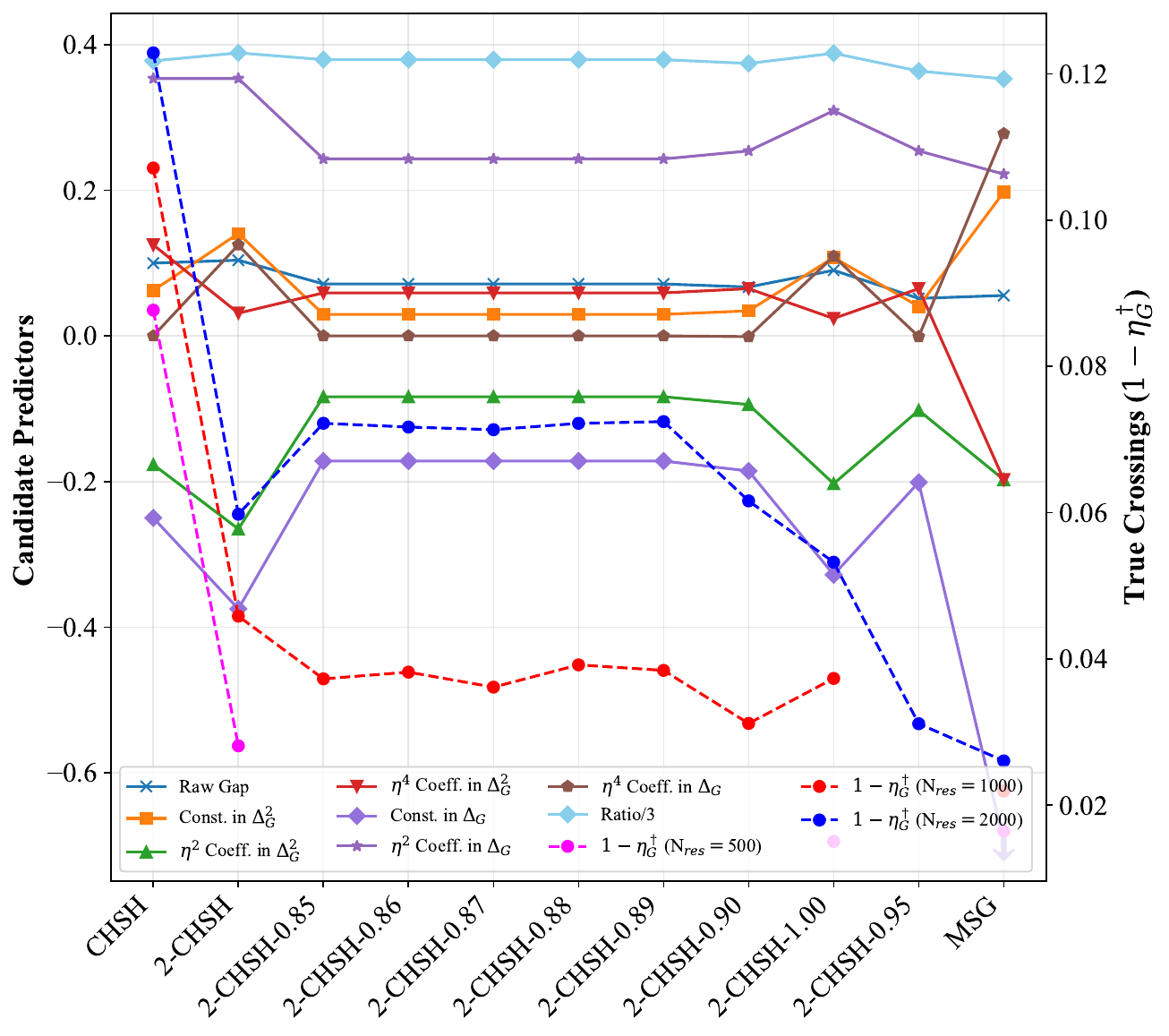}
        \label{fig:fig1}
    \end{subfigure}
    \hspace{0.02\textwidth}
    \begin{subfigure}[b]{0.45\textwidth}
        \centering
        \includegraphics[width=\textwidth]{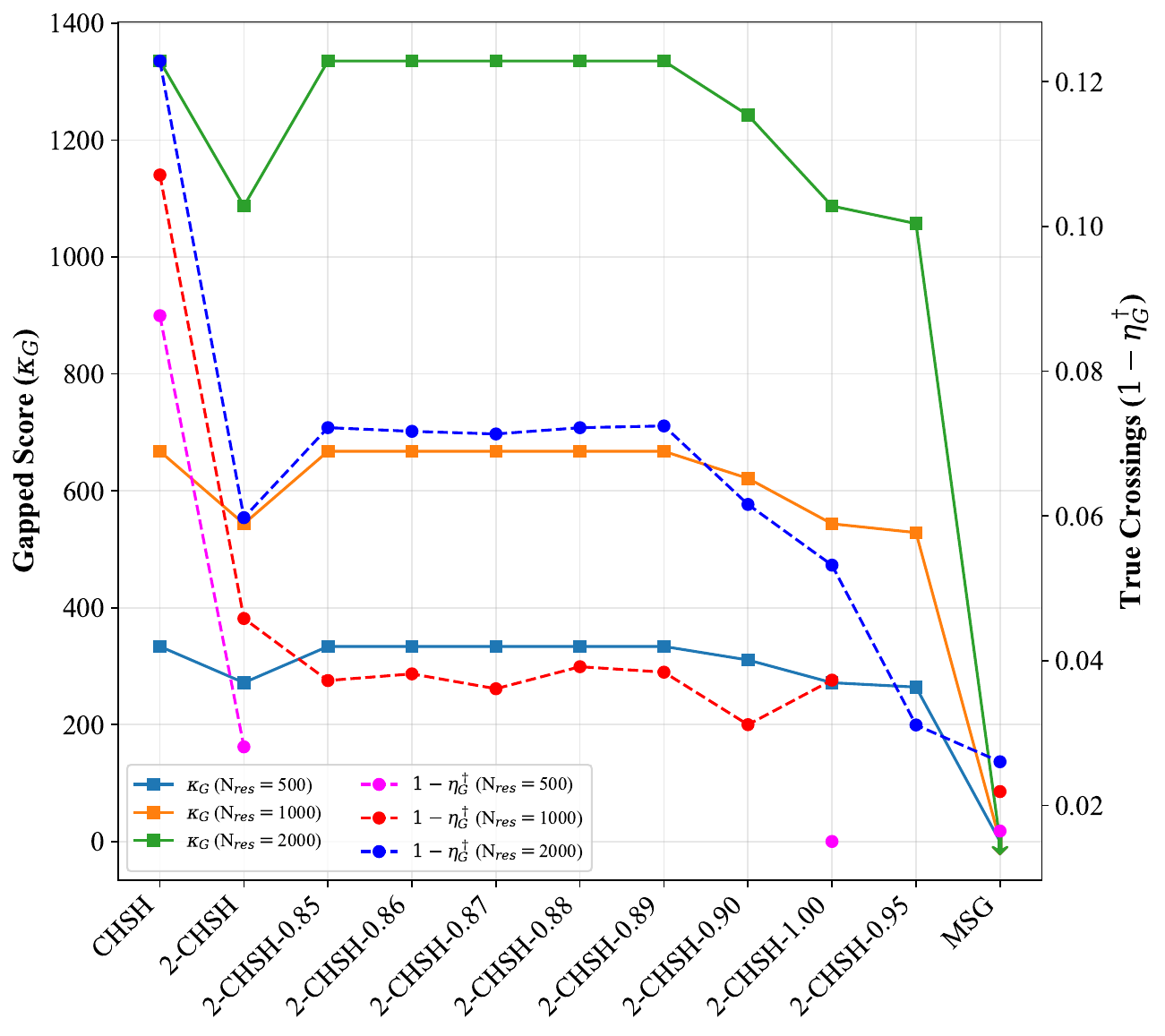}
        \label{fig:fig2}
    \end{subfigure}
    \caption[Comparing Candidate Predictors of Crossing Order.]{\textbf{Comparing Candidate Predictors of Crossing Order.} The figures show the true crossings attained at $N_{res}=500,1000, \text{ or } 2000$ across all experiments versus the gapped score (right) and all other candidate metrics (left). Some scaling was applied to metrics in order to make them  fit on a plot; however, no trends were affected. In particular, the MSG points of some curves were scaled down by 0.5, indicated by downward arrows, and all ratios were divided by 3.}
    \label{fig:candidates-comparison}
\end{figure}}{}
Overall, we observe that no candidate predictor is a reliable predictor for the unstable region, whereas the gapped score is clearly the most reliable predictor of the crossing orders in the stable region (Fig. \ref{fig:candidates-comparison}). In particular, Fig. \ref{fig:candidates-comparison} (right) we observe that for a stable crossing order, the gapped score excels at capturing: (1) the increases and decreases between games which have distinct convincingness behaviours (i.e., CHSH vs 2-CHSH vs MSG), and, (2) the nuances between similarly scoring games and configurations. For an unstable crossing order, the gapped score is less reliable. In contrast, in Fig. \ref{fig:candidates-comparison} (left) we observe that the candidates do not reliably predict the unstable and stable regions. Firstly, many candidates do not capture the main increases or decreases. For example, from CHSH to 2-CHSH, the crossings all decrease. However, many candidates (the Const. in $\Delta_G^2$, Raw Gap, Ratio, $\eta^4$ Coeff. in $\Delta_G$, $\eta^2$ Coeff. in $\Delta_G$) increase or stay the same. The most similar candidate curves to the stable crossing order are $\eta^2$ Coeff. in $\Delta_G^2$, $\eta^2$ Coeff. in $\Delta_G$, Const in  $\Delta_G$, and $\eta^4$ Coeff. in $\Delta_G^2$. These capture major increases or decreases between completely different games (i.e., CHSH, 2-CHSH, MSG), but fail to capture the differences between configurations (i.e., 2-CHSH-1.00 to 2-CHSH-0.95).  

The gapped score is not perfect, however. We observe that the CHSH and 2-CHSH-OPT games optimized for $\eta' \in \mathcal{Q}$ are predicted to have the same $\kappa_G$. On one hand, this aligns with the fact that these 2-CHSH-OPT configurations consistently have the closest $\eta^\dagger$ to $\eta^\dagger_{chsh}$. On the other, if we truly want to know the difference in the crossings for these games, this is unhelpful. Hence, we suggest complementing the gapped score with an even more nuanced theoretic polynomial analysis detailed in Sec. \ref{sec:poly-comparison}. This method uses only the gapped expressions, $\Delta_G$, to detect precise behaviours in the exact convincingness curves of games which have highly similar crossings and/or overlapping convincingness curves. By these observations and analysis, we therefore conclude Observations \ref{obs:unstable-crossing} and \ref{obs:stable-crossing}. Note that the latter observation holds for finite resources, as the crossings stabilize as early as 2000 noisy EPR pairs. We have, thus, overall devised a single, simple closed form expression for a score which can reliably rank the crossing order of games with different input-output numbers, supplemented by an additional simple polynomial analysis for games or configurations which have similar scores.

Lastly, let us discuss more theoretically why $\kappa_G$ is observed to be the best predictor of the crossings. First, it is not surprising that the raw gap and the ratio are unsuccessful predictors as they are fixed constants for a game which do not consider the noise model. The coefficients and constants, in comparison, do come from an expression which depends on our noise model: $\Delta_G = \omega_{\eta_G} - \omega_{c_G} = c_1 \eta + c_2\eta^2 + c_3\eta^3 + c_4\eta^4 + \cdots + d - \omega_{c_G}$. In our noise model, only the even degree terms have nonzero coefficients. Our initial hypothesis was that $c_2$, being the coefficient of the most significant term ($\eta^2$), would be a strong predictor for the crossings. However, $c_2$ fails to be a good predictor. By a simple plot of $c_2\eta^2 + d - \omega_{c_G}$, one can see that $c_2$ is not a good approximator of $\Delta_G$ especially at the $\eta$ regimes important to us ($\approx 0.8$ onward). Since $c_2$ is the most significant contributor, this explains why the other coefficients are unreliable predictors as well. A similar argument can be made for the $\Delta_G^2$ related predictors. In comparison, $\kappa_G$ takes all of the coefficients in $\Delta_G$ into account. Further, $\kappa_G$ considers the significance threshold and the number of resources used to play the game. Hence, it is the candidate that captures the most information about both the noise model and the convincingness experiment.

\subsection{Noise-Tolerance Analysis}

\ifthenelse{\boolean{showfigures}}{\begin{figure}[h!]
    \centering
    \includegraphics[width=\textwidth]{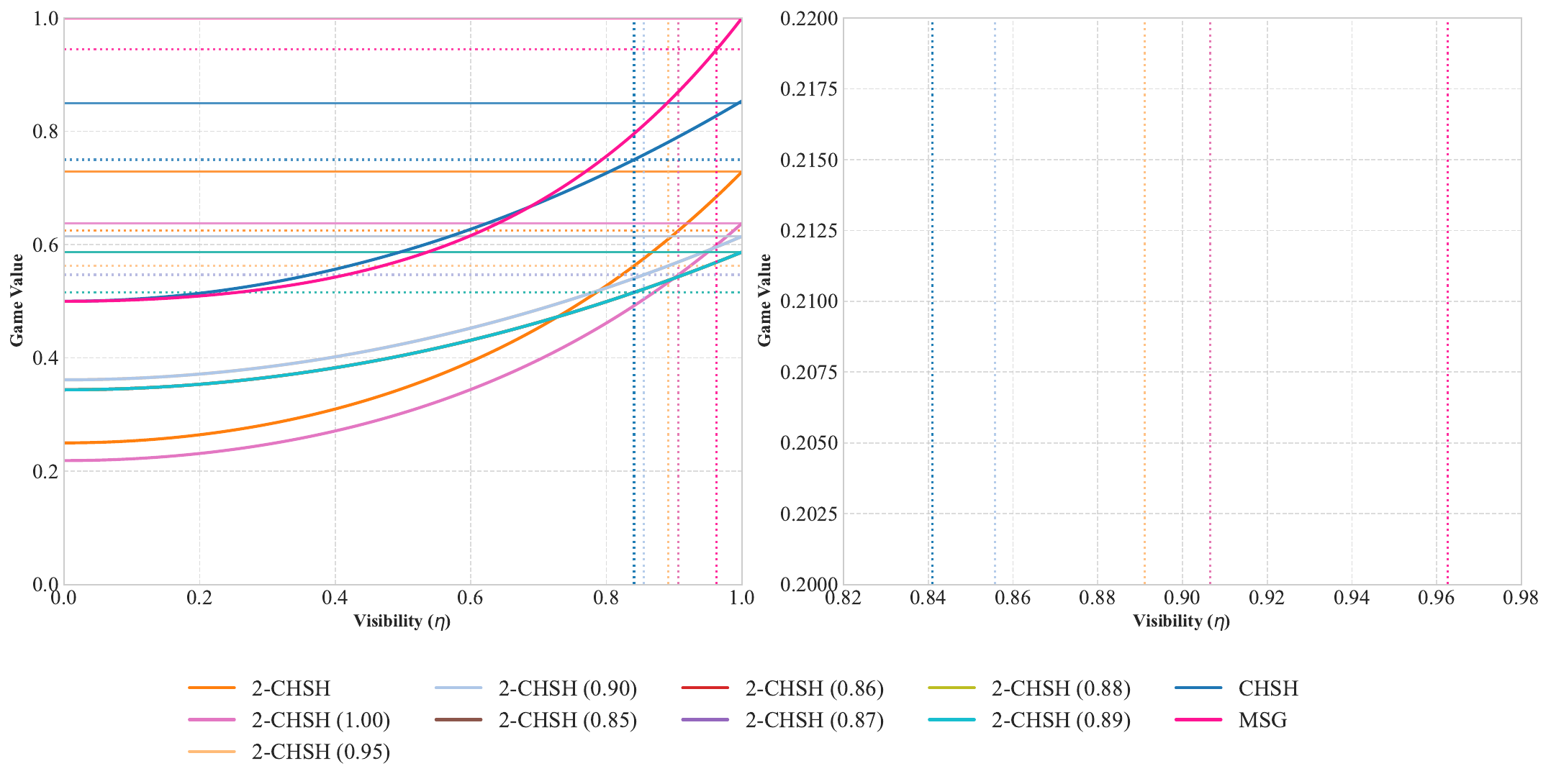}
    \caption[Noise-Tolerance Analysis.]{\textbf{Noise-Tolerance Analysis.}  Left: The horizontal dotted and filled lines represent the local and quantum bounds, respectively, for the curve of that colour. The vertical dotted lines indicate the intersection of the local bound with the curves, indicating the noise-tolerance for that game. Right: This is a zoom in on the vertical lines from the left figure. It is observed that the order of the game's noise-tolerances matches the order of the convincingness significance crossings for N=1,000,000 in Fig. \ref{fig:pvals-N=100K,N=1M}.}
    \label{fig:noise-tolerance}
\end{figure}}{}

Fig. \ref{fig:noise-tolerance} shows the result of the noise-tolerance analysis (Sec. \ref{sec:noise-tol-computational-methods}). The curves have a quadratic form due to the quadratic and quartic dependence of the depolarizing channel on the noise parameter. We note that two main properties of this graph influence the noise-tolerance: (1) the gap between local and quantum bounds, and (2) the slope of the curves. In fact, the experiment designed in Sec. \ref{sec:noise-tol-computational-methods} exactly plots the $\omega_{\eta_G}$ values derived later on in Sec. \ref{sec:analytic-methods}. Hence, the slope is attributed to the coefficients on the $\eta$ factors in the score expressions. The gap itself is defined as usual by the particular configuration of the game.

Interestingly, we see that the order of the noise-tolerances of the games (see Fig. \ref{fig:noise-tolerance} (right)) is the same as the order of the  significance crossings in the near-asymptotic convincingness experiment (Fig. \ref{fig:pvals-N=100K,N=1M}). Intuitively, this is because the noise-tolerance of a game, $G$, is defined for $\eta_G^*$ where $\omega_{\eta^*_G} = \omega_{c_G} \iff \Delta_G(\eta_G^*) = 0$. By the definition of non-local games, this is saying that the strategy's score is the local bound, and hence, $G$ does not detect any non-locality in that strategy. Relating this to convincingness, a gap of 0 implies that the game achieves completely unconvincing behaviour at $\eta_G^*$ under asymptotic resources ($\mathfrak{C}_G = 1$ at $\eta_G^*$). These are important realizations, as they show that the noise-tolerance and convincingness are two sides of the same coin, in the asymptotic resource limit. We can also see this directly on the convincingness figure: the noise-tolerance corresponds to the point just to the left of a significance crossing for a game, which is at $\mathfrak{C}_G = 1$. This shows that for our examples, the noise-tolerance results are captured by convincingness simulations. Note that we prove the equivalence between noise-tolerance and unconvincingness at asymptotic resources in Lemma \ref{pf:noise-tol-unconvincing} (App. \ref{app:noise-tol-unconvincing}). Further, we prove that in the asymptotic resource limit, the noise-tolerance level $\eta_G^* < \eta_G^\dagger$ for arbitrary $\eta_G^*,\eta_G^\dagger$ (Thm. \ref{thm:etastar-less-etadag}). Whether the order of the noise-tolerances and the order of the asymptotic convincingness crossings is the same for all possible games and, moreover, all noise models remains an open question. Our results show, however, that for the games tested, the convincingness can provide more nuanced information about their robustness to noise; likely stemming from the additional parameters in the convincingness model (i.e., $N_{res}$, $\alpha$). Moreover, the fact that the noise-tolerance results align with the asymptotic $\mathfrak{C}_G$ results suggests that the noise-tolerance, as a noise-robustness measure, is less practical for modern experimental design applications. Whether or not noise-tolerance has implications for the noise-robustness of games in the finite resource regime remains an open question. A future direction is to further develop the noise-tolerance model by adding some resource-dependent scaling factor. Although, as the convincingness naturally encodes this parameter in its statistical base, this motivates its practical use for the study of noise-robustness.

\begin{theorem}\label{thm:etastar-less-etadag}
    Let $\eta_G^\dagger$ represent the visibility level at which a convincingness curve for a game, $G$, crosses the significance threshold, $\alpha$. Let $\eta_G^*$ represent the minimum visibility level (maximal noise) that $G$ is tolerant of. Then $\eta_G^* 
    < \eta_G^\dagger$ holds for any $N_{res} \in \mathbb{N}$.  
\end{theorem}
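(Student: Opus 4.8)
The plan is to show that at the noise‑tolerance threshold $\eta_G^*$ the convincingness is not merely above the significance level $\alpha$ but in fact close to $\tfrac12$, so the first visibility $\eta_G^\dagger$ (Def. \ref{def:sig-crossing}) at which $\mathfrak{C}_G$ drops to $\alpha$ must lie strictly to the right of $\eta_G^*$. I would first reduce the statement to the single inequality $\mathfrak{C}_G(\eta_G^*) > \alpha$ together with a monotonicity observation. Recall from Def. \ref{def-noise-tolerance-single-game} (and the renaming in the theorem) that $\eta_G^*$ is the smallest visibility for which $G$ is violated, so that $\eta_G^* = \inf\{\eta : \omega_{\eta_G} > \omega_{c_G}\}$; since the score $\omega_{\eta_G}$ obtained by playing $G$ on the $\eta$‑depolarized input is continuous in $\eta$ (indeed a polynomial, cf. Table \ref{tab:gapped-expressions-new}), at the boundary we get $\omega_{\eta_G^*} = \omega_{c_G}$, i.e. $\Delta_G(\eta_G^*) = 0$. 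Hence at $\eta = \eta_G^*$ the alternative hypothesis of Def. \ref{def:convincingness} coincides with the null, and $\mathfrak{C}_G(\eta_G^*) = P\big(X \geq \text{round}(n\,\omega_{c_G}) \mid X \sim \mathrm{Binomial}(n,\omega_{c_G})\big)$ — the upper tail of the null distribution evaluated at (essentially) its own mean. This is the exact counterpart of the heuristic $\exp(-n\Delta_G^2) = 1$ at $\Delta_G = 0$ noted after Thm. \ref{lemma:conv-approx}, and it refines Lemma \ref{pf:noise-tol-unconvincing} to finite $N_{res}$.

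The next step is to bound this tail probability away from $\alpha = 0.05$ uniformly in $n = N_{res}$ (or $N_{res}/2$). I would invoke the classical fact that the median of $\mathrm{Binomial}(n,p)$ lies within $1$ of the mean $np$, so $\text{round}(n\,\omega_{c_G})$ exceeds the median by at most one; a short case split on whether the fractional part of $n\,\omega_{c_G}$ is below or above $\tfrac12$ (using, in the latter case, a one‑step tail estimate near the mean) then shows $P\big(X \geq \text{round}(n\,\omega_{c_G})\big)$ is bounded below by a universal constant which, for the local bounds of interest $\omega_{c_G} \in [1/2,1)$, comfortably exceeds $0.05$ and tends to $\tfrac12$ as $n \to \infty$. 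The finitely many genuinely small values of $n$ are checked directly; and for those $n$ the game typically never becomes significant at all, so $\eta_G^\dagger$ does not exist and, under the convention $\inf\emptyset = +\infty$, the conclusion is vacuous there.

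Finally I would upgrade $\mathfrak{C}_G(\eta_G^*) > \alpha$ to the strict ordering $\eta_G^* < \eta_G^\dagger$. Monotonicity of the depolarized score — more depolarizing noise cannot increase the overlap of the input with the state the strategy is tuned to, so $\eta \mapsto \omega_{\eta_G}$ is non‑decreasing on $[0,1]$, as is immediate from the polynomial forms in Table \ref{tab:gapped-expressions-new} — gives $\omega_{\eta_G} \leq \omega_{c_G}$ for all $\eta \leq \eta_G^*$, hence $\mathfrak{C}_G(\eta) \geq \mathfrak{C}_G(\eta_G^*) > \alpha$ on $[0,\eta_G^*]$; continuity of $\omega_{\eta_G}$ together with the step nature of $\text{round}(\cdot)$ then yields a right‑neighbourhood $[\eta_G^*, \eta_G^* + \delta)$ on which $\mathfrak{C}_G$ takes the same value $> \alpha$. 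Therefore $\eta_G^\dagger = \inf\{\eta : \mathfrak{C}_G(\eta) \leq \alpha\} \geq \eta_G^* + \delta > \eta_G^*$. The main obstacle is the second step: making the ``$p$‑value at the mean exceeds $\alpha$'' bound genuinely uniform over all $N_{res} \in \mathbb{N}$, and handling the rounding convention and the consequent discontinuity of $\mathfrak{C}_G(\eta)$ carefully enough to extract a strict inequality rather than $\eta_G^* \leq \eta_G^\dagger$; the reduction and the continuity/monotonicity of $\omega_{\eta_G}$ are routine.
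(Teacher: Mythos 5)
Your proposal is correct in outline but follows a genuinely different route from the paper. The paper's proof is much shorter: it works directly with the asymptotic approximation $\mathfrak{C}_G \sim \exp\left(-N_{res}\,\Delta_G(\eta)^2\right)$ from Thm.~\ref{lemma:conv-approx}, solves $\mathfrak{C}_G(\eta_G^\dagger)=\alpha$ to get $\Delta_G(\eta_G^\dagger)=\sqrt{-\ln(\alpha)/N_{res}}>0$, notes $\Delta_G(\eta_G^*)=0$, and concludes via the strict monotonicity of $\Delta_G$ in $\eta$ (Lemma~\ref{pf:strictly-inc-gap}, proved from the structure of the depolarizing channel rather than read off the polynomial tables as you do). You instead keep the exact binomial $p$-value and argue that at $\eta_G^*$ the null and alternative coincide, so $\mathfrak{C}_G(\eta_G^*)$ is the upper tail of $\mathrm{Binomial}(n,\omega_{c_G})$ at its own (rounded) mean, which you bound below via the binomial median--mean relation; monotonicity then pushes $\eta_G^\dagger$ strictly to the right. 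What your approach buys is fidelity to the literal claim ``for any $N_{res}\in\mathbb{N}$'': the paper's derivation of $\Delta_G(\eta_G^\dagger)$ is only valid asymptotically (it even writes ``$\sim$ denotes asymptotic equivalence as $N_{res}\rightarrow\infty$''), so strictly speaking it proves the finite-$N_{res}$ statement only up to that approximation, whereas your argument, if completed, would handle every $n$ exactly. What it costs is precisely the step you flag as the main obstacle: the uniform-in-$n$ lower bound on the tail at the rounded mean, plus the bookkeeping around $\mathrm{round}(\cdot)$ and the convention when $\eta_G^\dagger$ does not exist. The paper sidesteps all of that by accepting the exponential surrogate as the definition of the crossing. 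Neither issue is fatal, but you should be aware that your harder second step is not needed if one adopts the paper's (weaker, asymptotic) reading of the theorem, and conversely that the paper's one-line derivation of $\Delta_G(\eta_G^\dagger)>0$ is doing less work than the theorem statement advertises.
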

\begin{proof}
    Let us determine $\eta_G^\dagger$ by analyzing the convincingness function. By Equation \ref{eqn:pval-simplified}, we have:
    \[
        \mathfrak{C}_G(\eta_G^\dagger) = \alpha 
        \iff \mathfrak{C}_G(\eta_G^\dagger) \sim \exp\left(-N_{res}(\Delta_G(\eta_G^\dagger))^2\right) = \alpha,
    \]
    where $\sim$ denotes asymptotic equivalence as $N_{res} \rightarrow \infty$. Taking the natural logarithm of both sides yields: $\Delta_G(\eta^\dagger_G) = \sqrt{-\,\frac{\ln(\alpha)}{N_{\mathrm{res}}}}$. Note that this value is strictly positive for any $N_{res} \in \mathbb{N}$ since $\alpha < 1$. Now, we can establish that $\eta^\dagger_G > \eta_G^*$ through the following steps. By Lemma \ref{pf:strictly-inc-gap}, $\Delta_G(\eta)$ is a strictly increasing function. Further, by definition, the noise-tolerance $\eta_G^*$ is the visibility level where the local bound and $\omega_{\eta_G}$ become equal. Hence, $\omega_{c_G} = \omega_{\eta_G} \iff \Delta_G(\eta_G^*) = \omega_{\eta_G} - \omega_{c_G} = 0$.
Therefore, since $\Delta_G(\eta)$ is strictly increasing, $\Delta_G(\eta_G^*) = 0$, and $\Delta_G(\eta_G^\dagger) = \sqrt{-\ln(\alpha)/N_{\mathrm{res}}} > 0$ for any $N_{res} \in \mathbb{N}$, we conclude that $\eta_G^* < \eta^\dagger_G$ for any $N_{res} \in \mathbb{N}$.
\end{proof}

\begin{lemma}\label{pf:strictly-inc-gap}
    Given some depolarizing channel visibility $\eta > 0$ and game $G$, $\Delta_G(\eta)$ is strictly increasing.
\end{lemma}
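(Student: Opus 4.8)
The plan is to argue directly from the closed-form gapped expression established earlier. For every game $G$ in this work (excluding the two degenerate 2-CHSH-OPT configurations optimized for $\eta'=0.83,0.84$, for which $\Delta_G\equiv 0$), Table~\ref{tab:gapped-expressions-new} -- equivalently Def.~\ref{def:quad-score-dep} -- writes $\Delta_G(\eta) = c_1\eta^2 + c_2\eta^4 + (d-\omega_{c_G})$, where the $\eta^2$-coefficient $c_1$ is strictly positive and the $\eta^4$-coefficient $c_2$ vanishes for CHSH and for the $\eta'\in\{0.85,\dots,0.89\}$ configurations. Since $\eta$ is a visibility we have $\eta\in[0,1]$, so it is enough to show $\Delta_G'(\eta)>0$ for all $\eta\in(0,1]$; strict monotonicity on $[0,1]$ then follows by continuity.

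First I would differentiate to get $\Delta_G'(\eta) = 2c_1\eta + 4c_2\eta^3 = 2\eta\,(c_1 + 2c_2\eta^2)$. For $\eta>0$ the prefactor $2\eta$ is strictly positive, so $\operatorname{sign}\Delta_G'(\eta) = \operatorname{sign} g(\eta)$ with $g(\eta) := c_1 + 2c_2\eta^2$. If $c_2\ge 0$ then $g(\eta)\ge c_1>0$ on all of $[0,1]$ and we are done. If $c_2<0$ then $g$ is strictly decreasing on $[0,1]$, hence $\inf_{\eta\in(0,1]} g(\eta) = g(1) = c_1+2c_2$, and it remains only to verify $c_1+2c_2>0$. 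That last step is a finite check: the only configurations here with $c_2<0$ are the 2-CHSH-OPT games optimized for $\eta'=0.90$ and $\eta'=0.95$, where $c_1 = 0.25411$ and $c_2 = -0.00097$, so $c_1 + 2c_2 = 0.25217 > 0$. Conceptually, expressing the coefficients through the operationally meaningful constants of the preceding theorem gives $c_1 = k_{\text{part-mixed}} - 2k_{\text{mixed}}$ and $c_1+2c_2 = 2k_{\text{ideal}} - k_{\text{part-mixed}}$, so the required positivity is exactly the expected statement that the game value degrades monotonically as the resource is replaced by the maximally mixed state ($2k_{\text{ideal}}\ge k_{\text{part-mixed}}\ge 2k_{\text{mixed}}$, with the relevant inequality strict).

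The one place where care is genuinely required is the sign of $c_2$: a termwise reading of Table~\ref{tab:gapped-expressions-new} might suggest every coefficient is nonnegative, but the $\eta'=0.90,0.95$ configurations carry a (very small) negative $\eta^4$-term. Because the game value is quadratic in $\eta^2$ rather than affine in it, monotonicity does not follow from a convexity or interpolation argument alone and must be obtained from the derivative computation together with the $g(1)>0$ bound above. I expect this to be the main -- and essentially only -- obstacle; everything else is a one-line differentiation and a table lookup.
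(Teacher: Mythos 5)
Your proof is correct, and it is worth comparing to the paper's own argument because you take a more concrete route and, in doing so, patch a real gap. The paper also proceeds by differentiation, but at the operator level: it writes $\omega_{\eta_G}=\mathrm{Tr}(\mathbf{S}_G\,\varepsilon(\eta,\rho_{win}))$ for the \emph{two-qubit} channel, computes $\partial_\eta\omega_{\eta_G}=2\eta\,\mathrm{Tr}\bigl(\mathbf{S}_G(\rho_{win}-\tfrac{\mathbb{I}}{4})\bigr)$, invokes the single inequality $\mathrm{Tr}(\mathbf{S}_G\rho_{win})>\mathrm{Tr}(\mathbf{S}_G\tfrac{\mathbb{I}}{4})$, and then asserts that the four-qubit case ``extends by a simple computation of the derivative.'' Your coefficient-level computation shows that this extension is not automatic: for the four-qubit channel the derivative is $2\eta\,(c_1+2c_2\eta^2)$ with $c_1=k_{\text{part-mixed}}-2k_{\text{mixed}}$ and $c_1+2c_2=2k_{\text{ideal}}-k_{\text{part-mixed}}$, so positivity near $\eta=1$ requires $2k_{\text{ideal}}>k_{\text{part-mixed}}$, which does not follow from the paper's single ``winning state beats maximally mixed'' inequality and genuinely needs checking for the $\eta'=0.90,0.95$ optimized configurations where $c_2<0$. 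Your finite verification ($c_1+2c_2=0.25217>0$) closes exactly that case. You are also more careful about scope: for the $\eta'=0.83,0.84$ configurations $\Delta_G$ is identically zero, so the lemma as stated fails there and your explicit exclusion is warranted, whereas the paper's proof silently assumes strictness. What the paper's approach buys is generality and channel-independence of the two-qubit argument (no table lookup needed); what yours buys is an actually complete proof for the four-qubit games considered, at the cost of depending on the specific numerical coefficients. A fully general fix would be to prove the operator inequalities $2k_{\text{ideal}}\ge k_{\text{part-mixed}}\ge 2k_{\text{mixed}}$ (with the relevant one strict) directly, which neither proof currently does.
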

\begin{proof}
The value of a game played with input state $\varepsilon(\eta,\rho_{win})$ is defined as: $\omega_{\eta_G} = \text{Tr}(\mathbf{S}_G \cdot \varepsilon(\eta, \rho_{win}))$. Taking the 2-qubit depolarizing channel as an example,  \(\varepsilon(\eta, \rho_{win}) = \eta^2 \rho_{win} + (1 - \eta^2)\frac{\mathbb{I}}{4}\). Differentiating with respect to \(\eta\):
\[
\frac{\partial \omega_{\eta_G}}{\partial \eta} = \text{Tr}(\mathbf{S}_G (2\eta \rho_{win} - 2\eta \frac{\mathbb{I}}{4})) \iff \frac{\partial \omega_{\eta_G}}{\partial \eta} = 2\eta \text{Tr}(\mathbf{S}_G \cdot (\rho_{win} - \frac{\mathbb{I}}{4})).
\]
Since \(\text{Tr}(\mathbf{S}_G \cdot \rho_{win}) > \text{Tr}(\mathbf{S}_G \cdot \frac{\mathbb{I}}{4})\) (the winning state always achieves a higher score than the maximally mixed state), we have: $\frac{\partial \omega_{\eta_G}}{\partial \eta} > 0$.
Thus, \(\omega_{\eta_G}\) is a strictly increasing function of \(\eta\). This argument extends by a simple computation of the derivative to the 4-qubit depolarizing channel. Since \(\Delta_G = \omega_{\eta_G} - \omega_{c_G}\), and \(\omega_{c_G}\) is constant, the strictly increasing nature of \(\omega_{\eta_G}\) implies: $\frac{\partial \Delta_G}{\partial \eta} = \frac{\partial \omega_{\eta_G}}{\partial \eta} > 0$. Thus, \(\Delta_G\) is a strictly increasing function of \(\eta\).
\end{proof}

\section{DEFINING NOISE-ROBUSTNESS OF NON-LOCAL GAMES}
We observed in Sec. \ref{sec:results-disc} that the convincingness threshold crossings, and the more analytic gapped score, can be used to rank the probability of different non-local games to detect non-locality in noisy states. This naturally suggests the following, intuitive, definition of noise-robustness:

\begin{center}
\textit{The noise-robustness of a non-local game is the maximal noise (on its ideal state) for which it can detect non-locality with high probability.}
\end{center}

\noindent Mathematically, the convincingness is an excellent tool for quantifying the maximal noise on the input state that is supported by a non-local game. Moreover, as we have seen, the convincingness is more practically applicable to finite resource regimes than the noise-tolerance measure. Hence, we propose the following formal definition of noise-robustness:  
\begin{definition}[Noise-Robustness using Convincingness]\label{def:noise-rob-1}
    A game $G_1$ is more noise-robust to a noisy channel, $\Phi(\eta)$, than $G_2$ if $G_1$ is significantly convincing for lower visibilities $\eta$ than $G_2$. That is, $G_1$ is more $\Phi(\eta)$-noise-robust than $G_2$ if $\eta^\dagger_{G_1} < \eta^\dagger_{G_2}$, where $\eta$ is the channel's visibility $\eta$ (i.e., noise = $1-\eta$) on the ideal state of the game and $\eta^\dagger_{G_i}$ is the visibility level at which the convincingness curve of game $G_i$ crosses the chosen significance threshold, $\alpha$, using a supply of $N_{res}$ $\eta^\dagger_{G_i}$-noisy resources. Note that $\alpha$ is the target likelihood of a game detecting locality. 
\end{definition}

\noindent Further, for resource regimes which yield a stable ordering of $\eta_G^\dagger$ crossings, we know that the gapped score can reliably predict the trends in the $\eta_G^\dagger$ of different games. Hence, for these stable resource regimes we define noise-robustness more simply, presented in Def. \ref{def:gapped-score-def-of-noise-rob} below. This second definition offers a key advantage: it can be computed analytically. This makes it both more practical for experimental work and more suitable for theoretical studies of noise-robustness. Note that Def. \ref{def:gapped-score-def-of-noise-rob} is very convenient for theoretical questions about the noise-robustness in the asymptotic resource limit, since the asymptotic resource regime naturally tends towards stability.
\begin{definition}[Noise-Robustness using the Gapped Score]\label{def:gapped-score-def-of-noise-rob}
A game $G_1$ is more noise-robust to a noisy channel, $\Phi(\eta)$, than $G_2$ if $\kappa_{G_1} > \kappa_{G_2}$, where the games are played in a stable $N_{res}$ region. 
\end{definition}

\section{CONCLUSION}
This thesis presents a rigorous framework for analyzing and comparing the noise‑robustness of non‑local games. In doing so, it addresses a critical gap in self‑testing theory and practice. Through the development of novel comparison methods---particularly the convincingness measure, its analytical approximation (the gapped score), and noise-tolerance---we have established quantitative tools for evaluating how effectively different games can detect non-locality in noisy quantum states. In analyzing the capabilities of each method, we arrived at a precise and reliable definition of noise-robustness which uses the significance crossings of the convincingness. Formally, a game $G_1$ is more noise-robust to a noisy channel than $G_2$ if $G_1$ has a lower significance crossing than $G_2$ (i.e., $G_1$ is significantly convincing for higher noise levels than $G_2$) (Def. \ref{def:noise-rob-1}). For stable resource regimes, this simplifies to analytically comparing the games' gapped scores (Def. \ref{def:gapped-score-def-of-noise-rob}).

We note that the proposed noise-robustness framework successfully answers the questions posed in Sec. \ref{sec:intro}. First, it identifies the most noise‑robust game out of a set of candidates, for a given noise model. Second, it pinpoints the test that maximizes deviation from local statistics under specified resource budgets. Third, by allowing resource allocation to vary or Bell coefficients to be optimized, it systematically enlarges the set of noisy states whose non‑locality can be certified. Even more, our analysis revealed that games with lower noise-robustness under equal resource constraints can transition to being more robust when unequal resource allocations are allowed, highlighting that CHSH is not the most noise-robust game if unequal resources are permitted.

Importantly, we note that because the convincingness metric embeds resource constraints directly into its definition, it supports meaningful comparisons in both finite and asymptotic regimes--\textit{even when candidate games differ in input–output settings or state dimension}. In practice, convincingness yields finer experimental guidance than the coarse noise‑tolerance threshold while converging to the same ranking in the infinite-resource limit. For analytical work, the gapped score offers a closed‑form proxy for convincingness, enabling asymptotic studies without the computational overhead of computing significance crossings. Thus, the framework gives experimentalists a direct noise‑robustness score and theorists a closed‑form asymptotic proxy. Together, the convincingness and gapped scores let one choose non‑local games that match a given noise model and resource budget—whether the aim is maximal robustness for detecting noisy entanglement and preserving these resources, or low noise-robustness for stringent verification, security‑critical tasks.

\section{APPLICATIONS}\label{sec:apps}

To illustrate the applicability of this framework, we note that non-local games serve as fundamental subroutines in multiple quantum information processing tasks, including quantum algorithms, communication protocols, and cryptographic applications. For instance, in applications such as DI quantum key distribution (DIQKD) \cite{vazirani2019fully, renner2008security} and randomness generation \cite{pironio2010random, herrero2017quantum}---where security relies on the use of near-maximally entangled states---it may be advantageous to employ less noise-robust games than CHSH (i.e., certain 2-CHSH variants) that yield stronger guarantees under ideal conditions. Conversely, in applications such as entanglement distillation \cite{bennett1996concentrating, bennett1996purification, horodecki2009quantum}, which can operate effectively on lower-fidelity entangled states, more noise-robust games (i.e., CHSH) may be preferable, as they enable the certification and utilization of a larger set of imperfect quantum resources, thereby improving the overall efficiency of the protocol. Hence, our methods can be used to make these schemes more efficient and hopefully preserve quantum resources in the process.

Another prospective application is the use of more noise-robust tests in DIQKD scenarios where the secret-key rate is zero, or nearly-zero. We present two such scenarios. The first is theoretical rate studies that model full quantum-repeater chains, exemplified in \cite{holz2018device}. The second is single‑segment, memory‑assisted entanglement‑swapping prototypes—such as the trapped‑ion experiment in \cite{nadlinger2022experimental} and the rubidium‑atom experiment in \cite{zhang2022device}—which realize the core repeater building block without yet chaining multiple segments. In both settings every entanglement‑swapping stage reduces the visibility of the shared pair; lifting the visibility back above the violating threshold of the self‑tests used in the DIQKD protocols often demands an additional distillation round that halves the surviving‑pair rate and can exceed the memories’ decoherence window, driving the key rate to zero. A self‑test with a slightly lower threshold visibility (i.e., lower significance crossing) could potentially certify those same swapped pairs without invoking that costly distillation layer, providing a small---but strictly positive---secret-key where current tests yield none. It is important to note, however, that the certified randomness of the outcome will likely be lower, at the cost of ultimately achieving a positive key rate. Hence, the trade-off between certified randomness and the secret-key rate should be considered when increasing the noise-robustness of self-tests in DIQKD methods. A systematic investigation of this trade-off could inform the optimal choice of non-local games in different applications.

Looking to entanglement theoretic research, the noise-robustness framework also opens a possibility for researching partially entangled mixed states. Specifically, as the advantages of partially entangled mixed states in quantum protocols remain largely unexplored--possibly due to detection difficulties--our framework opens new avenues for investigating their application and fundamental properties. In contrast, as partially entangled \textit{pure} states have known advantages in applications, such as random bit generation \cite{woodhead2020maximal}, future work could extend the analysis in this work to noise models producing partially entangled pure states. In this way, more of these states can be preserved for random bit generation schemes. 

Lastly, the noise-robustness framework enables evaluating the noise-robustness of additional non-local games, under different noise models and resource constraints. The convincingness score can also be used as a general comparative score between non-local games, outside of the noise-robustness context. That is, $\mathfrak{C}$ can be used as a comparable, operational score which normalizes the scores of games with different input-output settings and state dimensionalities.

\section{STRENGTHS \& WEAKNESSES}
Our framework offers several strengths. It provides a non-experimental method for experimentalists to determine the most suitable game based on their noise model and resources. Its flexibility accommodates various noise models, enabling the inclusion of multiple imperfections, and the convincingness measure facilitates comparison across games with different dimensionalities, input-output settings, and resource constraints. Additionally, while focused on noise-robustness to input states, the framework could extend to robustness against measurement noise by comparing games optimized for specific detection inefficiencies. Finally, the gapped score provides an analytic alternative to classical simulations, allowing noise-robustness analyses in higher-dimensional Hilbert spaces.

However, there are some limitations. The framework requires knowledge of the system's noise model, which can be challenging to characterize, though advancements in quantum noise characterization make this increasingly feasible. Experimentalists can also adopt more conservative noise models by, for example, considering a combination of multiple noisy channels that are known to be affecting a device (i.e., depolarizing and dephasing noise). Another requirement is knowledge of each game's local bound, which is a one-time calculation. Practical implementation of noise-robust games on quantum hardware also presents challenges. 

\section{FUTURE RESEARCH AVENUES \& OPEN QUESTIONS}
Below we outline some of the several avenues for future investigation into the noise-robustness of non-local games.
\subsection{Non-Local Reductions}
Throughout this work we repeatedly returned to the idea of comparing games by how far into the noisy regime they still certify non-locality. Motivated by this, we propose a systematic \textit{non-local reduction} hierarchy for non-local games and point to its formal development as a promising avenue for future research.
For non-local games $A$ and $B$, we write:
\[
B \preceq_{\mathrm{NR}} A
\quad\Longleftrightarrow\quad
\text{every state that violates the classical bound of } B
\text{ also violates that of } A
\;(\text{equivalently, } \mathcal{V}(B)\subseteq\mathcal{V}(A)).
\]

The relation $\preceq_{\mathrm{NR}}$ is reflexive and transitive, so it is a preorder.  
Games higher in this hierarchy are strictly more noise-robust: they certify non-locality for all states witnessed by any lower game \emph{and} for additional, noisier states.  Thus the “harder’’ (more noise-robust) games are precisely the maximal elements, because they witness entanglement for the largest possible class of noisy states.

This definition is natural in the \textit{resource theory of non-locality}, whose free operations are \textit{local operations and shared randomness} (LOSR).  
Because LOSR cannot create Bell violations, it orders resources by convertibility:
\begin{align}
X \xrightarrow{\text{LOSR}} Y
\;\Longleftrightarrow\;
\text{there exists an LOSR protocol mapping } X \text{ to } Y .
\end{align}
Extensive work on abstract resource theories has introduced LOSR monotones and closed sets that quantify convertibility under \emph{state-level} LOSR operations~\cite{chitambar2019quantum,coecke2016mathematical}.  
In our setting this implies that, whenever $\mathcal{V}(B)\subseteq\mathcal{V}(A)$, any state that violates $B$, using LOSR alone (and fresh measurements), can be transformed into an $A$-violating one; hence, the resource detected by $B$ is therefore always convertible into one detected by $A$. 

A related—but stricter—framework orders \emph{behaviour boxes} by \emph{device-independent} LOSR wirings~\cite{wolfe2020quantifying,zjawin2023quantifying}.  
Here the free operations act only on the observed input–output \emph{probability distributions} \(P(ab|xy)\), leaving the underlying quantum state and measurements untouched.  
Such box-level convertibility can fail even when state-level convertibility holds: for example, it can be shown that the perfect Magic-Square box cannot be wired into the Tsirelson CHSH box under DI-LOSR (see App. \ref{app:non-convertability}), although the same two-singlet state violates CHSH once new measurements are chosen. These limitations motivate the state-based, game-level preorder proposed here.

Relating resource-theoretic monotones to the state sets that violate given non-local games would lay the quantitative groundwork for defining \emph{non-local reductions} between games, which by extension, can enable a ranking of games' robustness to noisy (partially non-local) states. Specifically, developing this preorder—finding complete monotones, conversion rates, and maximal elements—would add an order-theoretic method for ranking games' noise-robustness. To our knowledge, no rigorous study of such inter-game reductions yet exists, making this an open and promising research direction. 

\subsection{Further Investigations of Noise-Robustness}
The noise-robustness framework proposed in this work and our results naturally prompt the following research questions:

\begin{enumerate} \item What limits the noise-robustness of parallel-repetition schemes like 2-CHSH compared to simpler tests like CHSH? \item In finite resource regimes, is CHSH still the most robust when considering noise models other than depolarizing noise, such as more complex models used in laboratory settings? \item Could optimizing games along multiple axes enhance their noise-robustness? \item How can the stability of games in lower resource regimes be improved? \item Is there a precise relationship between the significance threshold $\alpha$ and other measures of entanglement of the tested state? \item How might this framework be applied to more complex non-local games? \item What effects do entanglement distillation schemes have on convincingness and gapped scores? 
\end{enumerate}

 Additional future directions include using the proposed framework to study the noise-robustness of more complex games with greater input-output settings or input state dimension, and applying the framework to the applications discussed in Sec. \ref{sec:apps}.

\section{ACKNOWLEDGMENTS}
I would like to express my deepest gratitude to my thesis supervisor, Prof. Thomas Vidick. Your guidance and insight throughout this thesis journey have been invaluable, and I am truly thankful for all the time and care you have devoted to mentoring, supporting, and developing my academic rigour. You have taught me so much, both through your mentorship and by example, and I will always be grateful for the opportunity to learn from you. I would also like to thank Dr. Elie Wolfe, Dr. Anand Natarajan, Prof. Debbie Leung, Prof. Dr. Jens Eisert, Prof. Nathan Wiebe, Noam Avidan, Ilya Merkulov, Itammar Steinberg, and Thomas Hahn for helpful discussions throughout. Lastly, I would like to thank my family--my mother, father, and brother--for supporting me throughout anything and everything, and pushing me to be the best version of myself, always.

\newpage
\bibliographystyle{unsrt}
\bibliography{refs}

\newpage
\section{(APPENDIX) 2-CHSH-OPT Score Derivation}\label{app:2-chsh-opt-score-deriv}
\begingroup
\small
\begin{align*}
\text{Tr}(\mathbf{P}_\eta \bar{\mathbf{B}}_{\eta’}^T) &= \text{Tr} \left[\frac{1}{16} \left(\sum_{a,b=1}^{16} \text{Tr}(\rho_{in} (A_a \otimes B_b))|e_a\rangle\langle e_b|\right)\left(\sum_{i,j=1}^{16} \bar{\mathbf{B}}_{\eta’}^{i,j} |e_j\rangle\langle e_i|\right)\right]\\
&= \frac{1}{16} \text{Tr} \left[\sum_{a,b=1}^{16}\sum_{i,j=1}^{16} \text{Tr}(\rho_{in} (A_a \otimes B_b))\bar{\mathbf{B}}_{\eta’}^{i,j} |e_a\rangle\langle e_b|e_j\rangle\langle e_i|\right]\\
&= \frac{1}{16} \text{Tr} \left[\sum_{a,b=1}^{16}\sum_{i,j=1}^{16} \text{Tr}(\rho_{in} (A_a \otimes B_b))\bar{\mathbf{B}}_{\eta’}^{i,j} \delta_{b,j} |e_a\rangle\langle e_i| \right]\\
&= \frac{1}{16} \text{Tr} \left[\sum_{a,b=1}^{16}\sum_{i=1}^{16} \text{Tr}(\rho_{in} (A_a \otimes B_b))\bar{\mathbf{B}}_{\eta’}^{i,b} |e_a\rangle\langle e_i| \right]\\
&= \frac{1}{16} \sum_{a,b=1}^{16}\sum_{i=1}^{16} \text{Tr}(\rho_{in} (A_a \otimes B_b))\bar{\mathbf{B}}_{\eta’}^{i,b} \text{Tr}(|e_a\rangle\langle e_i|)\\
&= \frac{1}{16} \sum_{a,b=1}^{16}\sum_{i=1}^{16} \text{Tr}(\rho_{in} (A_a \otimes B_b))\bar{\mathbf{B}}_{\eta’}^{i,b} \delta_{a,i}\\
&= \frac{1}{16} \sum_{a,b=1}^{16} \text{Tr}(\rho_{in} (A_a \otimes B_b))\bar{\mathbf{B}}_{\eta’}^{a,b} \\
&= \frac{1}{16} \sum_{a,b=1}^{16} \text{Tr}(\rho_{in} (A_a \otimes B_b) \bar{\mathbf{B}}_{\eta’}^{a,b}) & \text{(scalar multiplication)}\\
&= \frac{1}{16} \text{Tr}\left(\sum_{a,b=1}^{16} \rho_{in} (A_a \otimes B_b) \bar{\mathbf{B}}_{\eta’}^{a,b}\right) & \text{(linearity of trace)}\\
&= \frac{1}{16} \text{Tr}\left(\rho_{in}\sum_{a,b=1}^{16} (A_a \otimes B_b) \bar{\mathbf{B}}_{\eta’}^{a,b}\right) & \text{(cyclicity of trace)}\\
&= \frac{1}{16} \text{Tr}\left[\biggl(\eta^4(\rho_{\text{EPR}} \otimes \rho_{\text{EPR}}) + \eta^2(1-\eta^2)(\rho_{\text{EPR}} \otimes \frac{\mathbb{I}_4}{4}) \right. \\
&\quad \left. + \eta^2(1-\eta^2)(\frac{\mathbb{I}_4}{4} \otimes \rho_{\text{EPR}}) + (1-\eta^2)^2(\frac{\mathbb{I}_4}{4} \otimes \frac{\mathbb{I}_4}{4})\biggr)\sum_{a,b=1}^{16} (A_a \otimes B_b) \bar{\mathbf{B}}_{\eta’}^{a,b} \right] \\
&= \eta^4 \frac{\text{Tr}\left[(\rho_{\text{EPR}} \otimes \rho_{\text{EPR}})\sum_{a,b=1}^{16} (A_a \otimes B_b) \bar{\mathbf{B}}_{\eta’}^{a,b}\right]}{16} \\
& \quad + \eta^2(1-\eta^2) \biggl(\frac{\text{Tr}\left[(\rho_{\text{EPR}} \otimes \frac{\mathbb{I}_4}{4})\sum_{a,b=1}^{16} (A_a \otimes B_b) \bar{\mathbf{B}}_{\eta’}^{a,b}\right] + \text{Tr}\left[(\frac{\mathbb{I}_4}{4} \otimes \rho_{\text{EPR}})\sum_{a,b=1}^{16} (A_a \otimes B_b) \bar{\mathbf{B}}_{\eta’}^{a,b}\right]}{16}\biggr)  \\ &\quad + (1-\eta^2)^2\frac{\text{Tr}\left[(\frac{\mathbb{I}_4}{4} \otimes \frac{\mathbb{I}_4}{4})\sum_{a,b=1}^{16} (A_a \otimes B_b) \bar{\mathbf{B}}_{\eta’}^{a,b}\right]}{16} \\
&\coloneqq \eta^4 k_\text{ideal} + \eta^2(1-\eta^2)k_\text{part-mixed} + (1-\eta^2)^2 k_\text{mixed}.
\end{align*}
\endgroup

\newpage
\section{(APPENDIX) Convincingness Computations for Optimized 2-CHSH Games}\label{app:A}
\ifthenelse{\boolean{showfigures}}{
\begin{table}[htbp] 
    \centering
    \caption{Gapped Expressions for Non-Local Games}
    \label{tab:gapped-expressions}
    \begin{tabular}{
        >{\raggedright\arraybackslash}p{0.45\linewidth}  
        >{\raggedright\arraybackslash}p{0.45\linewidth}  
    }
    \toprule
    \textbf{$G$} & \textit{$\omega_{\eta_G}$ raw expressions} \\
    \midrule
    CHSH & $0.35355 \eta^2  +0.5$ \\ 
    \addlinespace[0.5em]
    2-CHSH & $0.12500\eta^4 + 0.35355\eta^2 +  0.25000$ \\
    \addlinespace[0.5em]
    2-CHSH-OPT $\eta'=1$ & $0.63748\eta^4  + 0.74686\eta^2(1-\eta^2) +0.21875(1-\eta^2)^2$\\
    \addlinespace[0.5em]
    2-CHSH-OPT $\eta'=0.95$ & $0.61447\eta^4  + 0.97677\eta^2(1-\eta^2) + 0.36133(1-\eta^2)^2$ \\
    \addlinespace[0.5em]
    2-CHSH-OPT $\eta'=0.90$ & $0.61447\eta^4  + 0.97677\eta^2(1-\eta^2) + 0.36133(1-\eta^2)^2$\\
    \addlinespace[0.5em]
    2-CHSH-OPT $\eta'=0.89$ &  $0.58682\eta^4  + 0.93057\eta^2(1-\eta^2) + 0.34375(1-\eta^2)^2$\\
    \addlinespace[0.5em]
    2-CHSH-OPT $\eta'=0.88$ &  $0.58682\eta^4  + 0.93057\eta^2(1-\eta^2) + 0.34375(1-\eta^2)^2$\\
    \addlinespace[0.5em]
    2-CHSH-OPT $\eta'=0.87$ &  $0.58682\eta^4  + 0.93057\eta^2(1-\eta^2) + 0.34375(1-\eta^2)^2$\\
    \addlinespace[0.5em]
    2-CHSH-OPT $\eta'=0.86$ &  $0.58682\eta^4  + 0.93057\eta^2(1-\eta^2) + 0.34375(1-\eta^2)^2$\\
    \addlinespace[0.5em]
    2-CHSH-OPT $\eta'=0.85$ &  $0.58682\eta^4  + 0.93057\eta^2(1-\eta^2) + 0.34375(1-\eta^2)^2$\\
    \addlinespace[0.5em]
    2-CHSH-OPT $\eta'=0.84$ &  $0.53125\eta^4  + 1.06250\eta^2(1-\eta^2) + 0.53125(1-\eta^2)^2$\\
    \addlinespace[0.5em]
    2-CHSH-OPT $\eta'=0.83$ &  $0.53125\eta^4  + 1.06250\eta^2(1-\eta^2) + 0.53125(1-\eta^2)^2$\\
    \addlinespace[0.5em]
    MSG & $ 0.27778 \eta^4 + 0.22222 \eta^2 + 0.5$ \\
    \bottomrule
    \end{tabular}
    \smallskip
    \begin{flushleft}
    \small\emph{Note:} 4-qubit games are in the form $k_\text{ideal}\eta^4 + k_\text{part-mixed}\eta^2(1-\eta^2) + k_\text{mixed}(1-\eta^2)^2$.
    \end{flushleft}
\end{table}}{}

\newpage
\section{(APPENDIX) 2-CHSH-OPT Local Bound Computation}\label{app:2chsh-opt-local-bound}

The local bound of a 2-CHSH-OPT game is computed by exhaustively evaluating the score of all possible deterministic local strategies. The local strategies are represented as 16-bit strings corresponding to the input-output behaviour of both parties. For each strategy, we compute its score using the game's $\bar{\mathbf{B}}_{\eta'}$ matrix, and finally identify the maximum achievable score across all bit-strings. This approach is complete since any local strategy can be written as a probabilistic mixture of deterministic strategies, and the maximum score over deterministic strategies gives the local bound by convexity. \\ 

\begin{algorithm}[h!]
\SetAlgoLined
\KwIn{Normalized Bell matrix $\bar{\mathbf{B}}_{\eta'}$ encoded in the form $B$ with elements $B[x,y][a,b]$ where:
\begin{itemize}
\item $x \in {0,1,2,3}$ represents input pairs $(x_1,x_2)$ as $(0,0),(0,1),(1,0),(1,1)$ for Alice
\item $y \in {0,1,2,3}$ represents input pairs $(y_1,y_2)$ as $(0,0),(0,1),(1,0),(1,1)$ for Bob
\item $a \in {0,1,2,3}$ represents output pairs $(a_1,a_2)$ as $(0,0),(0,1),(1,0),(1,1)$ from Alice
\item $b \in {0,1,2,3}$ represents output pairs $(b_1,b_2)$ as $(0,0),(0,1),(1,0),(1,1)$ from Bob
\end{itemize}
Each strategy $s$ is a 16-bit string where:
\begin{itemize}
\item Bits $[0-3]$: Alice's first output ($a_1$) for inputs $x \in {0,1,2,3}$
\item Bits $[4-7]$: Alice's second output ($a_2$) for inputs $x \in {0,1,2,3}$
\item Bits $[8-11]$: Bob's first output ($b_1$) for inputs $y \in {0,1,2,3}$
\item Bits $[12-15]$: Bob's second output ($b_2$) for inputs $y \in {0,1,2,3}$
\end{itemize}}
\KwOut{Maximum normalized score}
Generate all possible deterministic strategies $S = \{0,1\}^{16}$\;
$\text{max\_score} \leftarrow -\infty$\;
\ForEach{strategy $s \in S$}{
    $\text{joint\_prob} \leftarrow 0$\;
    \For{$x \leftarrow 0$ \KwTo $3$}{
        \For{$y \leftarrow 0$ \KwTo $3$}{
            $a_1 \leftarrow s[x]$\;
            $a_2 \leftarrow s[x + 4]$\;
            $b_1 \leftarrow s[y + 8]$\;
            $b_2 \leftarrow s[y + 12]$\;
            $a \leftarrow 2a_1 + a_2$\;
            $b \leftarrow 2b_1 + b_2$\;
            $\text{joint\_prob} \leftarrow \text{joint\_prob} + B[x,y,a,b]$\;
        }
    }
    $\text{max\_score} \leftarrow \max(\text{max\_score}, \text{joint\_prob})$\;
}
\Return{$\text{max\_score}$}
\caption{Compute Maximum Classical Score for a 2-CHSH-OPT Game}
\end{algorithm}

\newpage
\section{(APPENDIX) Proof of Noise-Tolerance Relation to Convincingness}
\label{app:noise-tol-unconvincing}
\begin{lemma}\label{pf:noise-tol-unconvincing}
    The noise-tolerance $\eta_G^*$ for a non-local game, $G$, corresponds to a completely unconvincing state ($\mathfrak{C}_G = 1$), under asymptotic resource conditions ($N_{res} \rightarrow \infty$). 
\end{lemma}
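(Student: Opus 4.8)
The plan is to reduce the statement to the asymptotic form of the convincingness, Eqn \ref{eqn:pval-simplified}, together with the characterization of the noise-tolerance as the zero of the noise-dependent gap $\Delta_G$. First I would make precise what $\eta_G^*$ is. By Def \ref{def-noise-tolerance-single-game}, $G$ is $(\eta,\Phi)$-tolerant exactly when $\omega_{\eta_G} = \text{Tr}(\mathbf{S}_G\,\Phi(\eta,\rho_{ref})) > \omega_{c_G}$, i.e.\ when $\Delta_G(\eta) = \omega_{\eta_G} - \omega_{c_G} > 0$. Since $\Delta_G$ is a polynomial in $\eta$ (hence continuous) and is strictly increasing on $(0,1]$ by Lemma \ref{pf:strictly-inc-gap}, the set $\{\eta : \Delta_G(\eta) > 0\}$ is an interval whose infimum $\eta_G^*$ is the unique solution of $\Delta_G(\eta_G^*) = 0$, equivalently $\omega_{\eta_G} = \omega_{c_G}$ at $\eta = \eta_G^*$.

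Next I would substitute this into the convincingness. At visibility $\eta_G^*$ the alternative hypothesis $H_1 = \omega_{\eta_G^*}$ coincides with the null $H_0 = \omega_{c_G}$, so in the construction of $\mathfrak{C}_G$ (Def \ref{def:convincingness}) the observed success probability equals the classical bound. Plugging $\Delta_G(\eta_G^*) = 0$ into the asymptotic expression $\mathfrak{C}_G \sim \exp\!\big(-N_{res}(\omega_{\eta_G} - \omega_{c_G})^2\big)$ of Eqn \ref{eqn:pval-simplified} gives $\mathfrak{C}_G(\eta_G^*) \sim \exp(-N_{res}\cdot 0) = 1$, and this value does not depend on $N_{res}$; since a $p$-value is bounded above by $1$, letting $N_{res}\to\infty$ yields $\mathfrak{C}_G(\eta_G^*) = 1$, i.e.\ the state $\Phi(\eta_G^*,\rho_{ref})$ is completely unconvincing. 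Operationally this is exactly what one expects: when $H_1 = H_0$ there is no evidence at all against locality, so the largest possible $p$-value is attained.

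I expect the main obstacle to be a matter of interpretation rather than of technique. The exact, finite-$N_{res}$ binomial $p$-value of Algorithm \ref{algo:convincingness} evaluated at coinciding hypotheses equals $P\big(\mathrm{Binomial}(N_{res},\omega_{c_G}) \ge \mathrm{round}(N_{res}\,\omega_{c_G})\big)$, which tends to $\tfrac12$, not $1$, as $N_{res}\to\infty$. So "$\mathfrak{C}_G = 1$" must be understood with respect to the Hoeffding-based asymptotic surrogate used throughout the paper, in which universal constant prefactors (and the factor $2$ in line (\ref{eqn:2-factor})) are discarded; within that surrogate the claim is exact, since $\exp(-N_{res}\Delta_G^2) = 1$ whenever $\Delta_G = 0$. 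I would therefore phrase and execute the proof entirely through Eqn \ref{eqn:pval-simplified}, spelling out that the hypothesis $N_{res}\to\infty$ is precisely what licenses the use of that asymptotic formula, and invoking Lemma \ref{pf:strictly-inc-gap} for uniqueness of the crossing $\eta_G^*$. If one prefers to avoid the surrogate's constant-factor conventions altogether, a rigorous fallback is to prove instead that $\mathfrak{C}_G(\eta)$ stays bounded away from $0$ as $N_{res}\to\infty$ for every $\eta \le \eta_G^*$, so $G$ is never significant there, which is the operational content of "completely unconvincing."
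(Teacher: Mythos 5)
Your proposal is correct and follows essentially the same route as the paper's proof: identify $\eta_G^*$ as the point where $\Delta_G(\eta_G^*)=\omega_{\eta_G}-\omega_{c_G}=0$ and substitute into the asymptotic form $\mathfrak{C}_G \sim \exp\left(-N_{res}\,\Delta_G^2\right)$ of Eqn.~\ref{eqn:pval-simplified} to obtain $\mathfrak{C}_G(\eta_G^*)=1$. Your added caveat---that the exact finite-$N_{res}$ binomial $p$-value of Algorithm~\ref{algo:convincingness} at coinciding hypotheses tends to $\tfrac12$ rather than $1$, so the claim holds only with respect to the Hoeffding-based surrogate---is a correct and worthwhile refinement that the paper's proof does not address.
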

\begin{proof}
    By definition, the noise-tolerance, $\eta_G^*$, for a non-local game, $G$, is the visibility at which $\omega_{c_G} = \omega_{\eta_G}$. Rearranging, we see that:
\[\omega_{c_G} = \omega_{\eta_G} \iff 0 = \omega_{\eta_G} -\omega_{c_G} = \Delta_G(\eta_G^*) \implies \mathfrak{C}_G(\eta_G^*) = 1,\]
where the last implication is true for $N_{res} \rightarrow \infty$ by Eqn. \ref{eqn:pval-simplified}. In other words, we can make two statements from different perspectives: 
(1) the value of $G$ played with an input state with visibility $\eta_G^*$ equals the local bound $\omega_{c_G}$, so by the definition of non-local games, there is no non-locality in the input state and, (2) since $\mathfrak{C}_G = 1$ under $N_{res} \rightarrow \infty$, there is no likelihood of non-locality under these conditions. That is, $\eta_G^*$ is the value at which the game is completely unconvincing for asymptotic resources. As desired.
\end{proof}

\newpage
\section{(APPENDIX) Impossibility of DI–LOSR Conversion from the Magic-Square to Tsirelson Box}\label{app:non-convertability}
\textbf{This proof was contributed by Dr. Elie Wolfe, and written in this thesis with his permission.}

\begin{theorem}
Under device–independent local operations and shared randomness (DI--LOSR), the perfect bipartite Magic‑Square correlation (``Magic‑Square box'') cannot be converted into the unique quantum correlation that attains the Tsirelson bound for the CHSH game (``Tsirelson box'').
\end{theorem}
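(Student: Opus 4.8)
The plan is to exploit the fundamentally different structure of the nonlocal correlations produced by the two boxes, as seen through local (DI--LOSR) operations. First I would set up the framework precisely: a DI--LOSR wiring acts only on the observed distribution $P(ab|xy)$, so it consists of (i) shared classical randomness $\lambda$ with distribution $q(\lambda)$, (ii) local pre-processing maps on the inputs $x \mapsto x'$, $y \mapsto y'$ that may depend on $\lambda$, and (iii) local post-processing maps on the outputs $a' \mapsto a$, $b' \mapsto b$ that may depend on $\lambda$ and on the local input. The target ``Tsirelson box'' is the \emph{unique} (up to relabelling) correlation saturating the CHSH value $2\sqrt2$; a known rigidity fact is that this box has a very rigid correlation structure --- in particular each of its two-outcome marginals is uniform and the correlators are $\pm\frac{1}{\sqrt2}$. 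The Magic-Square box, by contrast, is a \emph{perfect} (deterministic-looking from the winning-probability side, value $1$) box on a larger input-output alphabet whose correlations are all $\pm 1$ on the relevant settings: conditioned on any input pair, Alice's and Bob's outcomes are \emph{perfectly correlated or anticorrelated} (the consistency check is always satisfied, and the parity constraints are rigid).

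The key structural obstruction I would isolate is this: the Magic-Square box, restricted to any fixed pair of questions, produces outcomes with perfect $\pm1$ correlation (zero local noise in the correlation), whereas the Tsirelson box requires correlators of magnitude exactly $1/\sqrt2$, which are \emph{irrational} and in particular strictly between $0$ and $1$ in absolute value. The second step is to argue that DI--LOSR wirings cannot create a correlator of magnitude $1/\sqrt2$ starting from a box all of whose conditional correlators are in $\{0,\pm 1\}$. Intuitively, local post-processing of perfectly (anti)correlated bits, combined with shared randomness, can only produce correlators that are \emph{convex combinations} (over $\lambda$) of products of local response-function biases, and from a $\pm1$-correlated source each branch contributes a correlator in a discrete/rational set; mixing over $\lambda$ with whatever input-relabelling is used keeps the achievable correlators inside the rational closure of that set. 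So one would never land exactly on $\pm 1/\sqrt2$. Making this rigorous is the crux: I would phase it as a lemma stating that the set of bipartite correlation functions reachable from the Magic-Square box under DI--LOSR is contained in a set of boxes whose every conditional two-point correlator is rational (or more strongly, lies in a fixed finite set scaled by rational mixing weights), and then note $1/\sqrt2 \notin \mathbb{Q}$.

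The main obstacle I anticipate is controlling the \emph{input pre-processing} and the fact that the Magic-Square box has $6\times 3$ inputs and $8\times 2$ outputs: a DI--LOSR wiring may feed cleverly chosen (randomized, correlated-via-$\lambda$) questions into the Magic-Square box and then post-process, so the reachable set is not simply ``local post-processings of one fixed conditional distribution.'' I would handle this by observing that for \emph{any} fixed $\lambda$ and \emph{any} deterministic choice of questions fed to the box, the resulting outcome pair is still perfectly (anti)correlated up to a local deterministic relabelling --- because every accepting transcript of the Magic-Square game has $a_{x_y}=b_{x_y}$ --- so each $\lambda$-branch of the wired box is a local-deterministic box (correlator $\pm1$ or, after partial post-processing, $0$), hence itself \emph{local}. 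Therefore the wired box is a convex mixture of local-deterministic boxes and is \emph{local}, i.e. has CHSH value $\le 2 < 2\sqrt2$. This is in fact the cleanest route: I would finish by concluding that no DI--LOSR image of the Magic-Square box can even \emph{violate} CHSH maximally, a fortiori it cannot equal the Tsirelson box. The delicate point to verify carefully is that conditioning the Magic-Square box on a single-player-chosen question (the question is chosen locally, possibly using $\lambda$) genuinely leaves Bob's relevant bit a deterministic function of Alice's output within that branch --- i.e. that the ``which cell'' coordinate $y$ (or the column/row index $x$) being fixed per branch really does collapse the box to a local one --- so I would spell out that the consistency constraint $a_{x_y}=b_{x_y}$ plus the deterministic parity constraints force exactly this.
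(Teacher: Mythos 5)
Your proposal contains genuine gaps in both of the routes you sketch. The route you ultimately commit to --- ``each $\lambda$-branch of the wired box is a local-deterministic box because the consistency check forces $b=a_{x_y}$, hence the wired box is a convex mixture of local boxes and has CHSH value $\le 2$'' --- rests on a false implication. The fact that, conditioned on the inputs, Bob's bit is a deterministic function of Alice's output does \emph{not} make a box local: the PR box satisfies $b=a\oplus(x\wedge y)$, so Bob's output is a deterministic function of Alice's output and the inputs in every conditional, yet it is maximally non-local. Locality is a property of the whole family $\{P(ab|xy)\}_{x,y}$, not of each conditional distribution separately, and nothing in your argument shows that the (post-processed) restriction of the Magic-Square box to the two Alice-questions and two Bob-questions used in a branch admits a joint hidden-variable model. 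You would in effect be proving the much stronger statement that \emph{no} DI--LOSR image of the Magic-Square box violates CHSH at all; that may or may not be true, but it is not established here, and it is not needed for the theorem.

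The rationality route (your first half) is close in spirit to the paper's argument, but your formulation of the key lemma is also not quite right: you say that mixing over $\lambda$ ``keeps the achievable correlators inside the rational closure'' of the branch values, but the shared-randomness weights are arbitrary reals, so a mixture of rational branch correlators can perfectly well be irrational (e.g.\ weight $1/\sqrt2$ on a branch with correlator $1$ and weight $1-1/\sqrt2$ on a branch with correlator $0$ yields $1/\sqrt2$). The paper closes exactly this hole before invoking rationality: it writes the wiring as a convex combination of \emph{deterministic} local wirings, uses linearity of the CHSH functional together with the fact that $2\sqrt2$ is the maximal quantum value to conclude that every branch with positive weight must itself attain $2\sqrt2$, invokes uniqueness of the Tsirelson box to conclude that some single deterministic wiring already maps the Magic-Square box to the Tsirelson box, and only then observes that a deterministic wiring maps the all-rational Magic-Square box to an all-rational box, contradicting the irrational entries of the Tsirelson box. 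If you restore that extremality-plus-uniqueness step, your first route becomes the paper's proof; the second route should be dropped or substantially strengthened.
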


\begin{proof}
A \emph{box} (also called a \emph{correlation} or \emph{behaviour})  is a conditional probability distribution
\(P(ab\mspace{2mu}|\mspace{2mu}xy)\) with binary inputs
\(x,y\in\{0,1\}\) and outputs \(a,b\in\{0,1\}\).
A \textit{DI--LOSR wiring} is a box‑to‑box map that uses only \emph{local}
pre‑processing of the inputs and post‑processing of the outputs,
together with shared randomness.
Operationally it is a convex combination
\(
\Lambda \;=\;\sum_{r} p_r \,\Lambda^{(r)}
\)
of \emph{deterministic} local wirings
\(\{\Lambda^{(r)}\}_{r}\), each selected with probability \(p_r\) that
depends solely on the shared random seed~\(r\).
A \emph{deterministic DI local operation} is any single extremal element
\(\Lambda^{(r)}\); equivalently, it is a DI--LOSR wiring with no further
random choice once the seed is fixed (the special case of one seed). Importantly, applying such a deterministic LOSR wiring to a \emph{quantum-realizable} box always yields another quantum-realizable
box. The quantum realization of the second box will be different from the realization of the first box, but it will still be a quantum realization—possibly requiring different measurements or a different
underlying state.

We note also that the Magic‑Square box, $\mathcal{M}$, wins its game with probability 1,
whereas the Tsirelson box, $\mathcal{T}$, achieves the quantum‑maximal
CHSH value \(2\sqrt{2}\).

Assume for contradiction that a DI--LOSR wiring
\(\Lambda\) satisfies \(\Lambda(\mathcal{M})=\mathcal{T}\).
Several basic facts rule this out:

\begin{enumerate}
  \item[\textbf{(i)}] \emph{Uniqueness of the Tsirelson box.}
        Up to local relabellings of inputs and outputs, the Tsirelson
      box \(\mathcal{T}\) is the unique non-local quantum-realizable box  that
      attains the maximal CHSH value \(2\sqrt{2}\).

  \item[\textbf{(ii)}] \emph{Rationality of \(\mathcal{M}\).}
        The Magic Square box, \(\mathcal{M}\), has all probabilities given by rational numbers.

  \item[\textbf{(iii)}] \emph{Reduction to a deterministic operation.}
      Write the wiring as a convex combination
      \(\Lambda=\sum_{r} p_r\,\Lambda^{(r)}\) of deterministic
      device‑independent local operations \(\Lambda^{(r)}\) with
      weights \(p_r\ge 0\) and \(\sum_r p_r=1\).
      Because the CHSH score is a \emph{linear} function of the box, we know that:
      \[
          \text{(CHSH score of } \Lambda(\mathcal{M}) )
          \;=\;
          \sum_{r} p_r
          \bigl(\text{CHSH score of } \Lambda^{(r)}(\mathcal{M})\bigr).
      \]
      The left‑hand side is \(2\sqrt{2}\), the \emph{maximum} quantum
      CHSH value.  Since every term in the sum is bounded above by this
      maximum, each coefficient with \(p_r>0\) must itself achieve the
      value \(2\sqrt{2}\).
      By the uniqueness statement in (i), this forces every such
      deterministic wiring to output the Tsirelson box
      \(\mathcal{T}\).  In particular, there is at least one
      deterministic wiring, call it \(\Lambda_{\text{det}}\), with
      \(\Lambda_{\text{det}}(\mathcal{M})=\mathcal{T}\).

\item[\textbf{(iv)}] \emph{Deterministic wirings preserve
        rationality.}  A deterministic DI operation always maps rational probabilities to
        rational probabilities. Hence, by (ii),
        \(\Lambda_{\text{det}}(\mathcal{M})\) must have only rational
        entries.

  \item[\textbf{(v)}] \emph{Contradiction.}
        The Tsirelson box \(\mathcal{T}\) contains irrational
        probabilities (i.e.,\ terms involving \(1/\sqrt{2}\)).
        Step (iv), thus, contradicts
        \(\Lambda_{\text{det}}(\mathcal{M})=\mathcal{T}\).
\end{enumerate}

Therefore no DI--LOSR wiring can convert the Magic‑Square box into the
Tsirelson box.
\end{proof}

\end{document}